%% file: mainv11.tex
\documentclass[onecolumn,lettersize,journal]{IEEEtran}

\usepackage[utf8]{inputenc}
\usepackage[T1]{fontenc}
\usepackage{url}
\usepackage{ifthen}
\usepackage{cite}
\usepackage[cmex10]{amsmath} 
\usepackage{subfigure}
\usepackage{amsmath}
\usepackage{amsfonts}
\usepackage{amssymb}
\usepackage{enumerate}
\usepackage{bm}
\usepackage{graphicx}
\usepackage{color}
\usepackage[bottom]{footmisc}
\usepackage{dsfont}
\usepackage{multirow} 
\usepackage{arydshln}
\usepackage[caption=false,font=footnotesize]{subfig}
\usepackage{standalone}
\usepackage{derivative}
\usepackage{tikz}
\allowdisplaybreaks
\normalsize

\newtheorem{theorem}{Theorem}
\newtheorem{lemma}{Lemma}

\newtheorem{Remark}{Remark}

\newtheorem{definition}{Definition}

\newtheorem{Corollary}{Corollary}

\newcommand{\RNum}[1]{\uppercase\expandafter{\romannumeral #1\relax}}

\begin{document}
\title{On the Nonasymptotic Bounds of Joint Source-Channel Coding with Hierarchical Sources}
\author{Shuo Shao, \textit{Member, IEEE}, Chao Qi, \textit{Member, IEEE}, Jincheng Dai, \textit{Member, IEEE}, Wenrui Dai,\textit{ Member, IEEE,} and Hongkai Xiong, \textit{Fellow, IEEE}
	
	\thanks{Shuo Shao is with the School of Information and Electronic Engineering, East China Normal University, Shanghai, China. E-mail: sshao@ee.ecnu.edu.cn}
	\thanks{Chao Qi is with the School of Computer and Software Engineering, Xihua University, Chengu, China. E-mail: chaoqi@xhu.edu.cn}
	\thanks{Jincheng Dai is with the Key Laboratory of Universal Wireless Communications, Ministry of Education, Beijing University of Posts and Telecommunications, Beijing, China. E-mail: daijincheng@bupt.edu.cn}
	\thanks{Wenrui Dai and Hongkai Xiong are with the Department of Electronic Engineering, Shanghai Jiao Tong University, Shanghai, China. E-mail: daiwenrui@sjtu.edu.cn, xionghongkai@sjtu.edu.cn}
}

\maketitle

  \begin{abstract}
  	This paper establishes tractable bounds of joint source-channel coding with hierarchical sources in the finite blocklength regime. In this setting, both the indirect source and observable source must be reconstructed under correlated distortion constraints, leading to a joint excess-distortion event.  
  	First, to build computable tight bounds, we introduce a novel $\mathsf{d}(\cdot)$-functional distortion relaxation, which enables tractable and tight bounding of the joint excess-distortion probability induced by correlated sources. 
  	By this approach, the nonasymptotic converse and achievability bounds are given.   	
  	Second, Gaussian approximations for the proposed bounds are obtained, which are optimal for the transmission of a Gaussian memoryless source over an additive white Gaussian noise channel with mean-square error distortion.
  	The optimal scheme is obtained via a structured analysis that captures the intrinsic tradeoff between semantic and observable reconstructions. 
  	Furthermore, for the transmission of Gaussian memoryless sources over AWGN channels, we obtain explicit and computable bounds, by providing a new geometric structure involving three correlated spherical regions.
  	This results extend the classical two-spherical region analysis for a single distortion constraint.
  	Numerical simulations demonstrate that the proposed achievability and converse bounds tightly sandwich the Gaussian approximation and align closely with Monte Carlo numerical results.
  \end{abstract}

\begin{IEEEkeywords}
nonasymptotic bounds, joint source-channel coding, Gaussian approxiamtion
\end{IEEEkeywords}

\section{Introduction}

\IEEEPARstart{R}{ecently}, a new source coding model is proposed by Liu \textit{et al.}\cite{Liu-semantic-2022} called semantic rate distortion. In this model, there is a pair of hierarchical sources $(S,X)$ with joint distribution $P_{SX}$, where $X$ is the observable source and $S$ is the unobservable indirect source. Unlike the setup in the traditional indirect source coding model, now recovering both $S$ and $X$ is required, along with two distortion measures $\mathsf{d}_{s}(S,\hat{S})$ and $\mathsf{d}_{x}(X,\hat{X})$ to evaluate the quality of reconstructed data $\hat{S}$ and $\hat{X}$ respectively. This model can be regarded as the information-theoretic model for many practical scenarios, especially for recent learning-based compression and communication  methods \cite{Gunduz-23,Zhang-imagesemantic-2023}.  
{  Specifically, in learning-based compression and communication methods such as task-oriented coding or semantic communications, it is often necessary to recover both the observable data and its hidden features. For example, when compressing an image using a learning-based method, one typically needs to reconstruct both the image itself and its category label. The semantic rate distortion model studied in this paper formulates these scenarios, where the unobservable indirect source $S$ represents hidden features that cannot be directly accessed.}

This semantic rate distortion model has attracted considerable interest after being proposed, and some subsequent problems have been studied.  
Stavrou \textit{et al.} \cite{Stavrou-ba-2023} propose a Blahut-Ahrimoto based algorithm to solve the optimization problem of semantic rate distortion, which has two distortion constraints rather than only one constraint in classic rate distortion model. Wang \textit{et al.} \cite{Wang-semanticseperation-2022} { work} on an estimation-compression separated system for  semantic rate distortion model. 
{ Moreover, Shi \textit{et al.} \cite{Shi-errorexponent-2023} further extend the model in \cite{Liu-semantic-2022} to scenarios involving the joint source-channel coding (JSCC) and the separated source-channel coding (SSCC). Specifically, they derived the error exponent for the excess-distortion probability, i.e., the exponential decay rate of the event where the distortion violates the constraint.}
In these works \cite{Liu-semantic-2022,Stavrou-ba-2023,Wang-semanticseperation-2022,Shi-errorexponent-2023}, a tradeoff between the reconstruction distortion of the observable source and that of the unobservable source under fixed coding rate is revealed, which coincides with the practice in learning-based compression and communication. 

Nevertheless, these works following the semantic rate distortion model only focus on the infinite blocklength regime, which does not align with the fact that the practical scenarios often operate in the finite blocklength regime. 
{ In fact, the coding and communication problem for semantic source in the finite blocklength regime, especially under the JSCC framework, has attracted significant attention in recent years. Specifically, if only an indirect or observable source exists, the nonasymptotic bounds are studied in \cite{Kostina-fixed-12,Kostina-isc-2016, Tan21}. The channel coding rate has been studied in the literature  \cite{Polyanskiy-10,yang2014quasi,Albert14,Tan15,Collins2019coherent,lancho2019single,Tan15}. 
Besides, \cite{Ulger-23} studies the bounds on the excess-distortion probability of the semantic source coding problems\footnote{{ Note that in the paper, the achievability bound (or inner bound) and the converse (or outer bound) indicate the lower and the upper bound on excess-distortion probability, respectively.}}.	
However, when communication is also taken into account, the fundamental limits of JSCC for semantic sources in the finite blocklength regime remain an open problem.  }

{In this paper, we revisit the JSCC with a semantic source and an observable source as shown in Fig. \ref{Sys}. 
	We aim to establish computable and tight bounds on the probability of the joint excess-distortion event in the finite  regime.  Specifically,  the joint excess-distortion event represents the scenario that the reconstructed semantic information or the reconstructed observable source at the receiver end exceeds the given distortion level $\mathsf{D}_s$ and $\mathsf{D}_x$, i.e., $\{\mathsf{d}_{s}(S,\hat{S})> \mathsf{D}_s \cup \mathsf{d}_{x}(X,\hat{X})> \mathsf{D}_x\}$. 
	Although the proposed model can be viewed as a combination of the models in \cite{Kostina-isc-2016} and \cite{Kostina-lossyJSCC-13} by incorporating the data transmitting problem of two correlated sources over a noisy channel under the JSCC framework, this work is not a  straightforward extension of existing works. 
	The key distinction lies in precisely bounding the joint excess-distortion probability induced by two correlated sources, which introduces an inherent coupling between the two distortion constraints. 
	This setting differs fundamentally from prior works based on a single-distortion constraint in \cite{Kostina-lossyJSCC-13} or decoupled constraints in \cite{Zhou-19}. 

The new analysis faces significant technical challenges arising from the problem structure. First, the dependence between the two distortion events makes the joint excess-distortion probability hard to be precisely characterized by existing tools. In contrast to previous studies, where the analysis relies on a single distortion constraint  \cite{Kostina-lossyJSCC-13} or a decomposable event structure \cite{Zhou-19}, the correlation between distortion events prevents us  from treating the joint event as separate excess-distortion events. As a result, classical tools cannot be directly applied, and marginal-based bounds become loose.
Second, the intrinsic tradeoff between the reconstruction of $S$ and $X$ prevents separate optimization under a fixed coding strategy. In \cite{Ulger-23}, a similar source coding problem of hierarchical sources was studied. However, no computable optimal bounds have been established, since an effective rate allocation strategy between encoding $S$ and $X$ has not yet been established. Hence, establishing computable and tight bounds for the joint excess-distortion probability requires new analytical approaches.
The technical challenges will be explicitly explained in Sec. \ref{Sec_Converse_general} and \ref{Sec_Achievability_general}.

}

{ \subsection{Main Contributions}
Under the joint excess-distortion constraint, we derive the finite blocklength bounds for JSCC with hierarchical sources. The main contributions are summarized as follows.
\begin{itemize}
\item[1)] We establish nonasymptotic achievability and converse bounds on the excess-distortion probability for one-shot JSCC with hierarchical sources. 
To handle the joint excess-distortion event induced by two correlated sources, we introduce a novel $\mathsf{d}()$-functional distortion relaxation. 
This approach enables the derivation of computable tight bounds in the finite blocklength regime, where standard techniques based on a single distortion event or the decomposable distortion events fail to provide tractable tight bounds.

\item[2)] Gaussian approximations are given for the hierarchical sources under the joint excess-distortion constraint. 
The approximation is shown to be optimal for the transmission of a Gaussian memoryless source (GMS) with mean-square error distortion over an additive white Gaussian noise (AWGN) channel. 
Technically, the achievability analysis is based on a two-layer superposition coding scheme and the optimization of a sum of correlated $Q$-functions, explicitly capturing the dependence between semantic and observable distortions. 	
In particular, such a characterization has not been developed in prior works, and reveals how correlation between distortion measures affects second-order performance.

\item[3)] For the transmission of a GMS with mean-square error distortion over an  AWGN channel, we obtain explicit and analytically tractable tight bounds. 
In particular, the joint excess-distortion analysis provides a novel geometric structure involving three correlated spherical regions, extending classical two-region analysis for single excess-distortion in \cite{Kostina-lossyJSCC-13}. 
These results provide an explicit characterization of the joint excess-distortion event of the transmission of GMS over AWGN channels under correlated distortion constraints.

\end{itemize}

We next clarify the contributions beyond the closely related existing studies \cite{Kostina-lossyJSCC-13,Zhou-19,Ulger-23,Yang25}. 
\emph{First},  compared to \cite{Kostina-lossyJSCC-13}, our work extends the analysis of excess-distortion probability from the single excess-distortion setting to a joint excess-distortion setting with correlated sources via a $\mathsf{d}()$-functional distortion relaxation. Moreover, for the transmission of a GMS over an AWGN channel, this paper derives the tractable nonasymptotic bounds by providing a novel geometric structure involving three correlated spherical regions, extending classical two-spherical regions analysis for single excess-distortion probability in \cite{Kostina-lossyJSCC-13}. 
\emph{Second}, compared to \cite{Zhou-19}, this paper provides a general approach to analyzing the joint excess-distortion event with correlated sources. In \cite{Zhou-19}, the excess-distortion probability is analyzed via a decoupling-based approach, which separates two events into conditionally independent components. In contrast, our approach directly characterizes the joint excess-distortion event without relying on a decoupling argument,  thus extending the analysis tools for the case where the sources are correlated.
\emph{Third}, although \cite{Ulger-23} studies hierarchical source coding problems with related structures, it does not provide computable and tight finite blocklength bounds. By contrast, the proposed $\mathsf{d}()$-functional distortion relaxation proposed in this paper enables the derivation of explicit and computable bounds, thereby overcoming a key limitation of the approaches in \cite{Ulger-23}.
\emph{Fourth}, compared to \cite{Yang25}, which studies indirect lossy source coding within the SSCC framework when both excess distortion events are active, this paper provides a complete characterization of the joint excess-distortion probability. Our work removes the restriction that both excess distortion events are active, and analyzes the union event, allowing for different active events across regimes.}

\subsection{ Related Work}

{ 
	The semantic source coding model was first introduced by Liu et al.~\cite{Liu-semantic-2022}, where the semantic rate-distortion (SRD) function is formulated by combining classical rate-distortion (RD) and indirect source coding (ISC) frameworks. 	
	Several subsequent works have further investigated the performance of Semantic Communications  \cite{Wang-semanticseperation-2022,Stavrou-ba-2023,Shi-errorexponent-2023}. 
	Note that these works focus on asymptotic regimes.
	If there exists only the indirect or the observable source, the semantic rate distortion model is a generalized model of several classical models. More specifically, within the SSCC framework, the channel and source coding rates in the finite blocklength regime have been studied in the literature \cite{Polyanskiy-10,yang2014quasi,Albert14,Collins2019coherent,lancho2019single,Tan15,Kostina-isc-2016,Kostina-fixed-12,Tan21}. 	
	In particular, \cite{Kostina-isc-2016} and \cite{Kostina-fixed-12} establish nonasymptotic bounds for indirect and observable source coding problems, respectively. These results provide tight characterizations under a single distortion constraint in the finite blocklength regime.
	A closely related work is \cite{Zhou-19}, that studies a lossy source successive refinement problem. The key idea in \cite{Zhou-19} is to decompose the excess-distortion event into multiple sub-events, each characterized via a tilted-information quantity, and then apply a union bound over these events. This decoupling-based approach enables tractable analysis but relies on the separability of distortion events.
	Regarding the JSCC framework, Kostina and Verd\'{u} \cite{Kostina-lossyJSCC-13} study the nonasymptotic bounds of only the observable source, which shows that the JSCC framework performs better than the SSCC framework in terms of excess-distortion probability when the blocklength is finite. 
	However, the nonasymptotic bounds with hierarchical sources have not been considered in the aforementioned works.		
	
	For the hierarchical sources model, more recently, \cite{Ulger-23} studied a lossy observable source coding problem with hierarchical sources in the finite regime. However, the analysis therein effectively reduces the joint distortion structure by focusing on cases where both distortion constraints are simultaneously active, and no computable bound is given.  Similarly, \cite{Yang25} considers indirect source coding with the observed source reconstruction within the SSCC framework but its analysis of the union excess-distortion probability is also limited to the situation where all constraints are active, leaving the general case unresolved.  
 
In summary, the nonasymptotic bounds for lossy indirect and classical source coding in the finite blocklength regime have been well studied, and the current JSCC analyses typically focus on a single distortion measure.  
As mentioned in previous subsections, the JSCC model involving two correlated distortion constraints lacks an effective approach to handle the resulting union event. For one thing, the techniques used for source coding \cite{Kostina-fixed-12,Zhou-19,Ulger-23}  cannot be directly extended to the JSCC model, since transmitting the codeword indices associated with the semantic sources is itself a nontrivial problem. 
Meanwhile,  the study of JSCC with hierarchical sources is nontrivial, since it requires characterizing the joint excess-distortion probability for correlated sources.  Moreover, the optimal coding scheme cannot be designed by simply combining lossy semantic and observable coding schemes, since the SRD framework enforces a trade‑off between the semantic source and observable source reconstruction.
Consequently, no rigorous nonasymptotic bound has been established for the two-distortion JSCC problem.
}

\subsection{Organization of the Paper}
{ The rest of the paper is organized as follows. }In Section \ref{prelimanary}, the system model and important related work is presented. In Section \ref{Sec_Converse_general}, we establish converse bounds based on the proposed $\mathsf{d}()$-functional relaxation method. In Section \ref{Sec_Achievability_general}, achievability bounds and the corresponding coding schemes are proposed. In Section \ref{Sec_Gaussian_Apprx}, Gaussian approximation bounds are studied for finite blocklength transmission. In Section \ref{Sec_GMS_AWGN}, we specify the achievable and converse bounds to the transmission of a GMS overa an AWGN channel.

\section{Problem Formulation and Important Notations} \label{prelimanary}
\subsection{Notations}
The following definitions and notations are used in this paper. We use capital letters to denote random variables, small letters to denote the realization of random variables, and { calligraphic letters to denote the alphabets of random variables.} For example, for random variable $X$, its realization $x$ takes values in $\mathcal{X}$. We use $[a,b]$ to denote the set of real numbers $\{x:a\leq x \leq b\}$, and $[a:b]$ to denote the set of integers  $\{x:a\leq x \leq b,\, x\in \mathbb{Z}\}$. We denote the Euclidean norm by $\|\cdot\|$, i.e.  for sequence $x^n=(x_1,x_2,\dots,x_n)$ , we denote $\|x^n\|^2 = x^2_1 +\ldots+x^2_n$. { Denote by} $W^{\ell}_{m}$ a noncentral chi-square distributed random variable with $\ell$ degrees of freedom and non-centrality parameter $m$, and $f_{W_{m}^{\ell}}$ denotes its probability density function. The distribution of a Gaussian random variable with mean $\mu$ and variance $\sigma^2$ is denoted by { $\mathcal{N}(\mu,\sigma^2)$.} 
$\log(\cdot)$ denotes the natural logarithm. $|a|^+$ stands for $\max\{a,0\}$.

\subsection{System Model}
\begin{figure}[h]
		\centering
		\includestandalone[width=0.6\textwidth]{ChannelModel}%
		\caption{An $(n,k,\mathsf{D}_s,\mathsf{D}_x)$ lossy JSCC system for semantic sources}
		\label{Sys}
	\end{figure}

{  In this paper, we consider a single-shot JSCC with hierarchical sources as illustrated in Fig. \ref{Sys}. The system involves a pair of correlated sources $(S,X)$ with joint distribution $P_{SX}$. 
The semantic information is embodied in the intrinsic state $S$, which is unobservable but can  be inferred only from the observable data source $X$.
The encoder only accesses $X$ and generates a codeword $Y$, and $Y$ is transmitted over a given channel according to the transition probability $P_{Z|Y}$.
After receiving the channel output $Z$, the decoder has two tasks: reproducing the intrinsic state block  as $\hat{S}$ under a state distortion measure $\mathsf{d}_s(s,\hat{s})$, and reproducing the extrinsic observation block as $\hat{X}$ under an observation distortion measure $\mathsf{d}_x(x,\hat{x})$.
The corresponding distortion measure functions are defined as 
\begin{align}
	\mathsf{d}_s(s,\hat{s}): \mathcal{S}\times\hat{\mathcal{S}}\rightarrow [0, +\infty),\nonumber\\
	\mathsf{d}_x(x,\hat{x}): \mathcal{X}\times\hat{\mathcal{X}}\rightarrow [0, +\infty). \nonumber
\end{align}  }
For any given positive real number tuple $(\mathsf{D}_s,\mathsf{D}_x)$, we denote the event of excess distortion as 
\begin{align*}
	\mathcal{E}_{(\mathsf{D}_s,\mathsf{D}_x)}=\{(s,x,\hat{s},\hat{x})
	\in \mathcal{S} \times \mathcal{X} \times \hat{\mathcal{S}} \times \hat{\mathcal{X}}:
	\mathsf{d}_s(s,\hat{s})>\mathsf{D}_s \cup \mathsf{d}_x(x,\hat{x})>\mathsf{D}_x\}.
\end{align*}

{ A tuple $(\mathsf{D}_s,\mathsf{D}_x,\epsilon)$ is said to be achievable for the source and channel with distributions $P_{SX}$ and $P_{Z|Y}$ if, for any $\epsilon > 0$, there exist the following mappings:	
\begin{itemize}
	\item An encoding mapping (possibly random) $f_{{Y}|{X}}:\mathcal{X} \rightarrow \mathcal{Y}$: 
	the encoder directly maps the source $X$ to the channel input $Y$;
	
	\item Two decoding functions $g_{\hat{{S}}|{Z}}:\mathcal{Z} \rightarrow \hat{\mathcal{S}}$ and $g_{\hat{{X}}|{Z}}:\mathcal{Z} \rightarrow \hat{\mathcal{X}}$: the decoder reconstructs the semantic information $\hat{S}$ and the observable data $\hat{X}$ respectively, such that $P\{\mathcal{E}_{(\mathsf{D}_s,\mathsf{D}_x)}\} \leq \epsilon$. 
\end{itemize}

The main goal of this paper is to characterize the achievability and the converse bounds for the JSCC corresponding to the tuple $(\mathsf{D}_s,\mathsf{D}_x,\epsilon)$ for the given source distribution $P_{SX}$ and channel distributions $P_{Z|Y}$. }

\subsection{Important Definitions}
{ For a given channel $P_{Z|Y}$ and a cost function $\mathsf{c}: \mathcal{Y} \mapsto [0,+\infty)$, the channel capacity is defined as 
\begin{align}
	C(\alpha)\triangleq \sup_{P_Y:  \mathbb{E}[\mathsf{c}(Y)] \leq \alpha} I(Y;Z). \label{Channel_capacity}
\end{align}
where $\mathbb{E}[\mathsf{c}(Y)] \leq \alpha$ is the maximal cost constraint.}
 For a given semantic source $(S,X)$ with $P_{SX}$ over $\mathcal{S}\times \mathcal{X}$, the reconstruction alphabet  $\hat{\mathcal{S}}\times \hat{\mathcal{X}}$, and the distortion measures $\mathsf{d}_s$ and $\mathsf{d}_x$, the semantic rate distortion function is defined as follows.  
 \begin{definition}[Semantic rate distortion (SRD) function \cite{Liu-semantic-2022}]\label{Def_SRD}
 	The semantic rate distortion (SRD) function is defined as
 	\begin{align}
 		\bar{R}_{S,X}(\mathsf{D}_s,\mathsf{D}_x) \triangleq & \inf_{P_{\hat{S}\hat{X}|X} } I(X; \hat{S}, \hat{X}) \label{Equ_R}\\
 		\rm{s.t.} \quad\mathbb{E}[\mathsf{d}_s({S}, {\hat{S}})] \leq &  \mathsf{D}_s, \label{Equ_d_cst1} \\
 		 \mathbb{E}[\mathsf{d}_x({X}, {\hat{X}})] \leq &  \mathsf{D}_x \label{Equ_d_cst2a}.
 	\end{align}
\end{definition}
 	 Given a source distribution $P_X$ and the distortion measure $\mathsf{d}_x({X}, {\hat{X}})$, the well-known direct rate distortion (RD) function is defined as  	
 		\begin{align}
 			R_{RD}(\mathsf{D}_x) &\triangleq \inf_{P_{\hat{X}|X} } I(X;  \hat{X}) \\
 			\rm{s.t.} \quad \mathbb{E}[\mathsf{d}_x({X}, {\hat{X}})] &\leq \mathsf{D}_x.  \label{Def_R_RD}
 		\end{align}
 	Similarly, with the indirect source and distortion measure $\mathsf{d}_s({S}, {\hat{S}})$, the indirect source coding (ISC) function is given as 
 		\begin{align}
 			R_{ISC}(\mathsf{D}_s) &\triangleq \inf_{P_{\hat{S}|X} } I(X; \hat{S}) \label{Def_R_ISC}\\
 			\rm{s.t.} \quad \mathbb{E}[\mathsf{d}_s({S}, {\hat{S}})] &\leq \mathsf{D}_s.
 		\end{align}
 	We can see the { SRD} function is a compound of the traditional rate distortion (RD) function and indirect source coding (ISC) function.

{  Compared with the RD and ISC functions, the setup of SRD function leads to a special phenomenon which has not been observed. Indeed, there is a tradeoff between the performance of the semantic source reconstruction and that of the observable source reconstruction when the coding rate is fixed. Therefore, the best way to encode the observable source may be not the best way to preserve the semantic information, which coincides with the practical scenarios in learning-based semantic communication \cite{Zhang-imagesemantic-2023} and other conventional works in representation learning \cite{Liu-representinglearning-2021}. For example, two reconstructed images with the same average peak signal-to-noise ratio (PSNR) level may have different performances in the same classification task.}
 
Next, we introduce an essential idea called \textit{functional tilted information}. For any function 
 \begin{align}
 	\mathsf{d}(\mathsf{d}_s,\mathsf{d}_x): [0, +\infty)\times [0, +\infty) \rightarrow [0, +\infty),
 \end{align}
 where the function $\mathsf{d}(\cdot)$ is monotonically non-decreasing with $\mathsf{d}_s$ and $\mathsf{d}_x$ respectively, we first define the $\mathsf{d}()$-relaxed rate distortion function as follows. 
\begin{definition}[$\mathsf{d}()$-relaxed rate distortion function]\label{relaxed rate distortion}
	For any given $\mathsf{d}()$ function, the $\mathsf{d}()$ functional-relaxed rate distortion function $R_{S,X}$ is defined as 
	{ \begin{align}
		R_{S,X}(\mathsf{D}_s,\mathsf{D}_x) \triangleq &  \inf_{P_{\hat{S}\hat{X}|X} } I(X; \hat{S}, \hat{X}) \label{Equ_R_d}\\
		\rm{s.t.} \quad  \mathsf{d}(\mathbb{E}[\mathsf{d}_s(S, \hat{S})],\mathbb{E}[\mathsf{d}_x(X, \hat{X})]) \leq & \mathsf{d}(\mathsf{D}_s,\mathsf{D}_x) .\label{Equ_d_cst2}
	\end{align}}
\end{definition}

Note that Definition \ref{relaxed rate distortion}  not only relaxes the excess distortion event but also relaxes the distortion constraints. { Thus, all joint conditional distributions $P_{\hat{S}\hat{X}|X}$ within the feasible region of Definition \ref{relaxed rate distortion} are within the feasible region of the original semantic rate distortion function defined in Definition \ref{Def_SRD} as well.  }
More details can be found in Lemma \ref{Lem_d_function}, which proves that the event of $\mathsf{d}(\mathsf{d}_s(s,  \hat{s}),\mathsf{d}_x(x,  \hat{x}))		
> \mathsf{d}(\mathsf{D}_s,\mathsf{D}_x)$ is a sufficient but not necessary condition of  $\mathcal{E}_{(\mathsf{D}_s,\mathsf{D}_x)}$. 
Comparing \eqref{Equ_d_cst2} with \eqref{Equ_d_cst2a}, it can be seen that, different from the SRD function, the joint criterion cf. \eqref{Equ_d_cst2} is considered, which is consistent with recent works in the second-order literature \cite{Kostina-isc-2016,Zhou-19,Ulger-23,Yang25}.
In this paper, we  consider a special type of $\mathsf{d}()$ functions which are linear, defined as 
	\begin{align}
		\mathsf{d}_L(\mathsf{d}_s,\mathsf{d}_x) \triangleq  (1-\alpha)\mathsf{d}_s + \alpha \mathsf{d}_x, \label{Def_linear_d}
	\end{align}
	 where $\alpha \in [0,1]$.

Based on the $\mathsf{d}()$-relaxed rate distortion function, we further define the tilted information of $\mathsf{d} (\mathsf{d}_s,\mathsf{d}_x)$ as follows. 
\begin{definition}[$\mathsf{d}(\mathsf{d}_s,\mathsf{d}_x)$-tilted information] \label{d-tilted information}
The $\mathsf{d}(\mathsf{d}_s,\mathsf{d}_x)$-tilted information $\jmath_{S,X}(s, x)$ is defined as
\begin{align}
\jmath_{S,X}(s, x)
	\triangleq \log \frac{1}{ \mathbb{E}[ \exp\{\lambda^{\star} \mathsf{d}(\mathsf{D}_s,\mathsf{D}_x)   - \lambda^{\star} \mathsf{d}(\mathsf{d}_s({s}, \hat{S}^{\star}),\mathsf{d}_x({x}, \hat{X}^{\star}))  \}  ] }, \label{Equ_Def_j_ori} 
\end{align}
where the expected value is with respect to some distribution $P_{\hat{S}^{\star}\hat{X}^{\star}}$ which achieves the rate distortion function, i.e., the unconditional distribution of the reproduction random variable that achieves the infimum in \eqref{Equ_R_d}, and 
\begin{align}
	\lambda^{\star} \triangleq & - \frac{\partial R_{S,X}(\mathsf{D}_s,\mathsf{D}_x) }{ \partial\mathsf{d}(\mathsf{D}_s,\mathsf{D}_x)}. \label{Def_lamdas}
\end{align}		
\end{definition}

Intuitively, the $\mathsf{d}(\mathsf{d}_s,\mathsf{d}_x)$-tilted information denotes how many bits are needed to encode a source output symbol $x$ when the expectation of $\mathsf{d}(\mathsf{d}_s,\mathsf{d}_x)$ does not exceed the distortion constraint $\mathsf{d}(\mathsf{D}_s,\mathsf{D}_x)$. The parameter $\lambda^{\star}$ can be regarded as the Lagrange multiplier of constraint $\mathsf{d}(\mathsf{d}_s({s}, {\hat{s}}),\mathsf{d}_x({x}, {\hat{x}}))-\mathsf{d}(\mathsf{D}_s,\mathsf{D}_x)>0$. 
The properties of $\mathsf{d}$($\mathsf{d}_s$,$\mathsf{d}_x$)-tilted information are given as follows \cite[Lemma 1.4]{Csisza1974} and \cite[Definition 6]{Kostina-lossyJSCC-13}: 
	\begin{align}
		&{ \jmath_{S,X}(s, x)}
			\triangleq\imath_{X; \hat{X}, \hat{S}} (x; \hat{x}, \hat{s})
				+\lambda^{\star} \big(\mathsf{d}(\mathsf{d}_s({s}, {\hat{s}}),\mathsf{d}_x({x}, {\hat{x}}))-\mathsf{d}(\mathsf{D}_s,\mathsf{D}_x)\big), \label{Equ_Def_j} \\
		&\mathbb{E}[{ \jmath_{S,X}(s, x)}]=R_{S,X}(\mathsf{D}_s,\mathsf{D}_x), \label{Equ_Def_Ej} \\
		&\mathbb{E} \left[ \exp\left({ \jmath_{S,X}(s, x)}+\lambda^{\star} \big(\mathsf{d}(\mathsf{D}_s,\mathsf{D}_x)-\mathsf{d}(\mathsf{d}_s({s}, {\hat{s}}),\mathsf{d}_x({x}, {\hat{x}}))\big) \right)   \right]\leq 1, \label{Equ_Def_Ejd}
	\end{align}
	where { the information density 
	\begin{align}
		\imath_{X; \hat{X}, \hat{S}} (x; \hat{x}, \hat{s})
		\triangleq & \log \frac{ {\textnormal{d}} P_{ \hat{X}, \hat{S}|X=x} }{ {\textnormal{d}} P_{\hat{X}, \hat{S}}} (\hat{x}, \hat{s})= \log \frac{ {\textnormal{d}} P_{ X| \hat{X}=\hat{x}, \hat{S}=\hat{s}} }{ {\textnormal{d}} P_{X}} (x),
	\end{align}
	for any given joint distribution $P_{\hat{X}\hat{S}X}$. }

\begin{definition}[Information density function for channel capacity]
	The information density $\imath_{Y;Z} (Y; Z)$ is defined as 
	\begin{align}
		\imath_{Y;Z} (y; z)
		\triangleq & \log \frac{ {\textnormal{d}} P_{Z|Y=y} }{ {\textnormal{d}} P_{Z}} (z),
	\end{align}
	for a given channel transition probability $P_{Z|Y}$.
\end{definition}

\begin{definition}[Distortion ball]
The distortion $\mathsf{D}_x$-ball around $x$ is defined as
\begin{align}
	B_{\mathsf{D}_x}(x) \triangleq \{\hat{x}\in \hat{\mathcal{X}}: \mathsf{d}_x(x, \hat{x})\leq \mathsf{D}_x \}.
\end{align}
Similarly, the distortion $\mathsf{D}_s$-ball around $s$ is defined as
\begin{align}
	B_{\mathsf{D}_s}(s) \triangleq \{\hat{s}\in \hat{\mathcal{S}}: \mathsf{d}_s(s, \hat{s})\leq \mathsf{D}_s \}.
\end{align}
\end{definition}

\section{General Converse Bounds} \label{Sec_Converse_general}
{ In this section, we establish general converse bounds on the joint excess-distortion probability.} First, in subsection \ref{converse_preliminary}, we will give an intuitive introduction to the excess-distortion probability for hierarchical sources in the JSCC framework, and explain why it is not a straightforward extension of existing results. Then in subsection \ref{converse bounds}, we will derive converse bounds with a general form, which are derived by the proposed relaxed functional distortion constraint. 

{ \subsection{$\mathsf{d}()$-functional Distortion Relaxation Approach}\label{converse_preliminary}}

Characterizing the converse bounds of $\epsilon$ is essentially to find the lower bounds for the minimum excess-distortion probability $P\{\mathcal{E}_{(\mathsf{D}_s,\mathsf{D}_x)}\}$. 
However, the unique features of hierarchical sources make it substantially different from conventional works in this context, especially because the low-complexity numerical evaluation of the nonasymptotic upper and lower bounds becomes { considerably complex}.

In general, the difficulties of establishing converse bounds for hierarchical sources are two-fold.
First, the excess distortion event in the hierarchical sources model corresponds to the union of two events: either the distortion in reconstructing the semantic information or the distortion in reconstructing the observable source exceeds its respective threshold. However, the tight bound for the excess-distortion probability of the union event is still unknown.  
{ Second, two random variables $W_x\triangleq \mathsf{d}_x(X,\hat{X})$ and $W_s\triangleq \mathsf{d}_s(S,\hat{S})$ are correlated, since $S$ and $X$ are correlated. 
Hence, $P\{\mathcal{E}_{(\mathsf{D}_s,\mathsf{D}_x)}\}$ equals the probability density measure of the excess distortion region, namely the shadow area in Fig. \ref{Fig_region}.
This probability is even hard to be numerically computed, since the probability density function of $(W_s,W_x)$ varies with the choice of codebooks.  }
For example, in the transmission of a GMS over an AWGN channel which will be discussed later in Section \ref{Sec_Gaussian_Apprx}, the accurate excess-distortion probability $P\{\mathcal{E}_{(\mathsf{D}_s,\mathsf{D}_x)}\}$ is the cumulative probability of two joint chi-square distributed random variables, whose computation is notably complex.  

	\begin{figure}[!htpb]
	\centering  
	\includegraphics[width=0.45\textwidth]{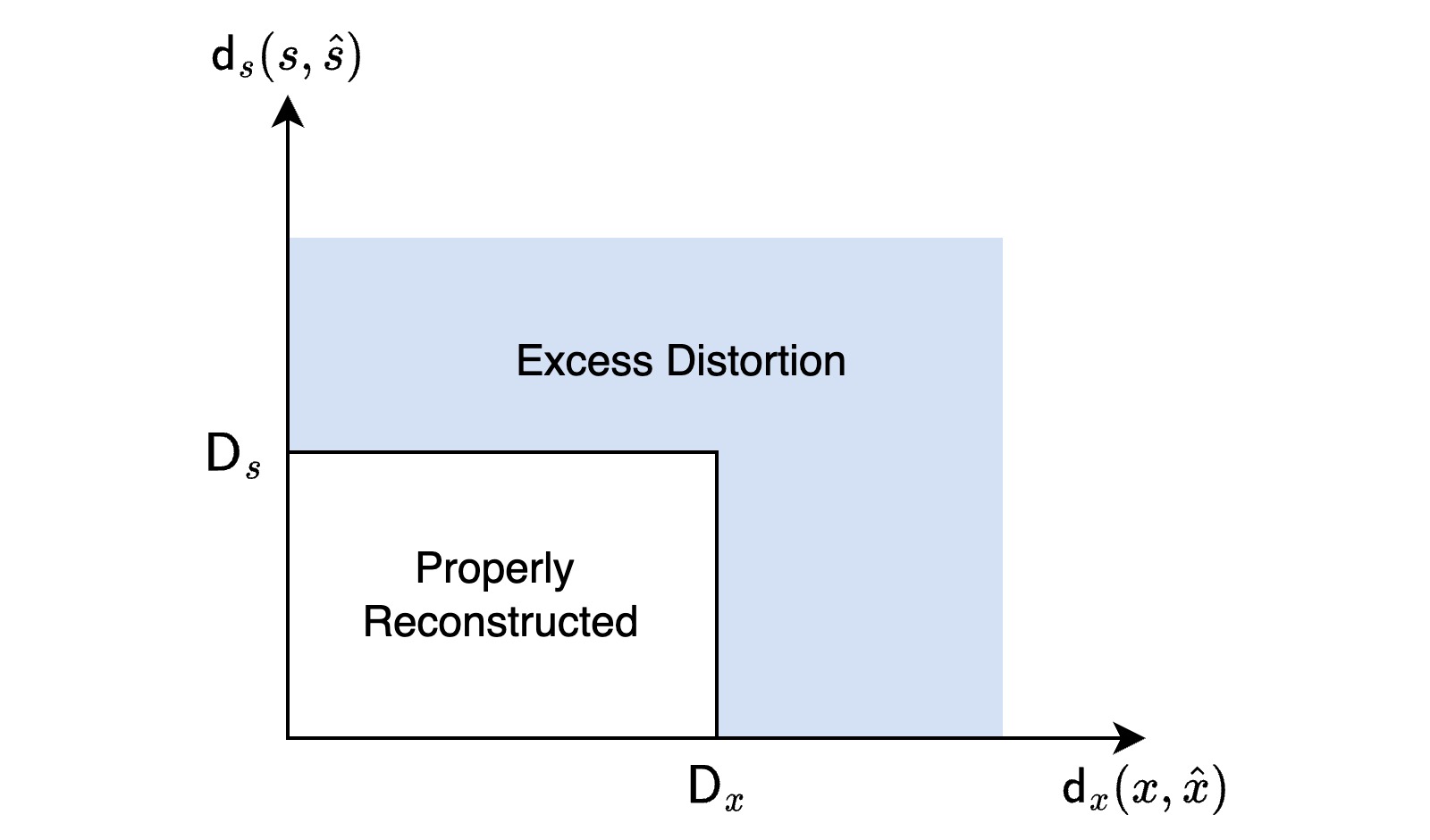} 
	\caption{The geometry of the excess-distortion probability calculation. }
	\label{Fig_region} 
	\end{figure}

A  { straightforward} bound can be obtained trivially by ignoring one of the distortion constraints, i.e. $\mathsf{D}_x=\infty$ or $\mathsf{D}_x=\infty$:
\begin{align}
	P\{\mathcal{E}_{(\mathsf{D}_s,\mathsf{D}_x)}\} \geq \max \{P\{\mathcal{E}_{(\infty,\mathsf{D}_x)}\},P\{\mathcal{E}_{(\mathsf{D}_s,\infty)}\}\}, \label{Equ_converse_loose}
\end{align}
where the lower bound of $P\{\mathcal{E}_{(\infty,\mathsf{D}_x)}\}$ and $P\{\mathcal{E}_{(\mathsf{D}_s,\infty)}\}$ can be obtained following similar proofs in \cite[Theorem 1]{Kostina-lossyJSCC-13}. 
Though the bound \eqref{Equ_converse_loose} can provide a computable converse bound for the excess-distortion probability, a natural question arises whether better bounds can be obtained. 
In the following, we propose a $\mathsf{d}()$-functional distortion relaxation approach to investigate tighter converse bounds for the union excess-distortion probability.
\begin{lemma}\label{Lem_d_function}
	Consider a $\mathsf{d}()$-functional relaxed excess distortion event defined as follows,
	\begin{align}
		\tilde{\mathcal{E}}_{(\mathsf{D}_s,\mathsf{D}_x)}=&\{(s,x,\hat{s},\hat{x}) 
		\in \mathcal{S} \times \mathcal{X} \times \hat{\mathcal{S}} \times \hat{\mathcal{X}}:
		{\mathsf{d}(\mathsf{d}_s(s,  \hat{s}),\mathsf{d}_x(x,  \hat{x}))		
			> \mathsf{d}(\mathsf{D}_s,\mathsf{D}_x) } \}.
	\end{align}
	where $\mathsf{d}(\mathsf{d}_s,\mathsf{d}_x)$ is monotonically non-decreasing with $\mathsf{d}_s$ and $\mathsf{d}_x$, and is called the $\mathsf{d}()$-functional distortion relaxation.
	Then, the probabilities of excess distortion event $\tilde{\mathcal{E}}_{(\mathsf{D}_s,\mathsf{D}_x)}$ and the original event $\mathcal{E}_{(\mathsf{D}_s,\mathsf{D}_x)}$ satisfy 
	\begin{align}
		P\{\tilde{\mathcal{E}}_{(\mathsf{D}_s,\mathsf{D}_x)}\} \leq  P\{\mathcal{E}_{(\mathsf{D}_s,\mathsf{D}_x)}\}, \label{sufficient}
	\end{align}
\end{lemma}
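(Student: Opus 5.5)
We prove \eqref{sufficient} by showing the set inclusion $\tilde{\mathcal{E}}_{(\mathsf{D}_s,\mathsf{D}_x)} \subseteq \mathcal{E}_{(\mathsf{D}_s,\mathsf{D}_x)}$ pointwise on $\mathcal{S} \times \mathcal{X} \times \hat{\mathcal{S}} \times \hat{\mathcal{X}}$. Monotonicity of probability, applied to the joint law $P_{SX\hat{S}\hat{X}}$ induced by $P_{SX}$, the encoder $f_{Y|X}$, the channel $P_{Z|Y}$ and the decoders, then gives the claim for every code simultaneously. The argument never uses the specific form of $\mathsf{d}()$, only that it is non-decreasing in each argument. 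In particular it covers the linear choice $\mathsf{d}_L$ in \eqref{Def_linear_d}.

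We argue by contraposition. Fix a tuple $(s,x,\hat{s},\hat{x}) \notin \mathcal{E}_{(\mathsf{D}_s,\mathsf{D}_x)}$, so that $\mathsf{d}_s(s,\hat{s}) \le \mathsf{D}_s$ and $\mathsf{d}_x(x,\hat{x}) \le \mathsf{D}_x$. Monotonicity of $\mathsf{d}$ in its first argument, with the second held fixed, gives
\begin{align*}
\mathsf{d}\big(\mathsf{d}_s(s,\hat{s}),\mathsf{d}_x(x,\hat{x})\big) \le \mathsf{d}\big(\mathsf{D}_s,\mathsf{d}_x(x,\hat{x})\big).
\end{align*}
Monotonicity in the second argument, with the first fixed at $\mathsf{D}_s$, then gives
\begin{align*}
\mathsf{d}\big(\mathsf{D}_s,\mathsf{d}_x(x,\hat{x})\big) \le \mathsf{d}(\mathsf{D}_s,\mathsf{D}_x).
\end{align*}
Chaining the two inequalities yields $\mathsf{d}(\mathsf{d}_s(s,\hat{s}),\mathsf{d}_x(x,\hat{x})) \le \mathsf{d}(\mathsf{D}_s,\mathsf{D}_x)$, so the tuple is not in $\tilde{\mathcal{E}}_{(\mathsf{D}_s,\mathsf{D}_x)}$. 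Taking complements gives $\tilde{\mathcal{E}} \subseteq \mathcal{E}$, and hence $P\{\tilde{\mathcal{E}}_{(\mathsf{D}_s,\mathsf{D}_x)}\} \le P\{\mathcal{E}_{(\mathsf{D}_s,\mathsf{D}_x)}\}$.

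There is no real obstacle; the only care needed is measurability. Both $\mathcal{E}$ and $\tilde{\mathcal{E}}$ must be events under $P_{SX\hat{S}\hat{X}}$. For $\mathcal{E}$ this follows from measurability of $\mathsf{d}_s$ and $\mathsf{d}_x$. For $\tilde{\mathcal{E}}$ it follows because $\mathsf{d}$ is monotone in each coordinate and hence Borel measurable on $[0,+\infty)^2$. It is also worth remarking after the proof that the reverse inclusion generally fails. For example, with $\mathsf{d}_L$ and $\alpha \in (0,1)$, a large $\mathsf{d}_s$ paired with a small $\mathsf{d}_x$ can lie in $\mathcal{E}$ but not in $\tilde{\mathcal{E}}$. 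This explains why the relaxation yields only a lower bound, and why the choice of $\alpha$ will later be optimized.
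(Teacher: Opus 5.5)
Your argument is correct and is the paper's own: the paper's one-line proof asserts that $\tilde{\mathcal{E}}_{(\mathsf{D}_s,\mathsf{D}_x)}$ is a sufficient condition for $\mathcal{E}_{(\mathsf{D}_s,\mathsf{D}_x)}$, i.e.\ $\tilde{\mathcal{E}}_{(\mathsf{D}_s,\mathsf{D}_x)}\subseteq\mathcal{E}_{(\mathsf{D}_s,\mathsf{D}_x)}$, and your two-step monotonicity chain supplies the justification the paper leaves implicit. One correction to your measurability aside: monotonicity in each coordinate does not make $\mathsf{d}$ Borel on $[0,+\infty)^2$. For example, set $\mathsf{d}(a,b)=1$ if $a+b>1$ or $(a,b)\in A$, and $\mathsf{d}(a,b)=0$ otherwise, where $A$ is a non-Borel subset of the segment $\{(a,b)\in[0,+\infty)^2: a+b=1\}$; this $\mathsf{d}$ is coordinatewise non-decreasing but not Borel. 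So measurability of $\tilde{\mathcal{E}}_{(\mathsf{D}_s,\mathsf{D}_x)}$ should be taken as an assumption, as the paper does implicitly, and it holds automatically for any Borel (e.g.\ continuous) choice such as $\mathsf{d}_L$.
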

\begin{IEEEproof}
	Obviously, it can be seen that $\tilde{\mathcal{E}}_{(\mathsf{D}_s,\mathsf{D}_x)}$ is a sufficient but not necessary condition of $\mathcal{E}_{(\mathsf{D}_s,\mathsf{D}_x)}$. Thus, \eqref{sufficient} is obtained. 
\end{IEEEproof}

Therefore, according to this lemma, the $\mathsf{d}()$-functional distortion relaxation provides a method to derive a converse bound of the original excess-distortion probability.	
 Moreover, if the probability of the relaxed region is smaller, the relaxed excess-distortion probability will be tighter. However, a more complicated $\mathsf{d}(\mathsf{d}_s,\mathsf{d}_x)$ will bring more complicated derivations of the bounds.

\subsection{Converse Bounds with General and Linear Relaxation Functions}\label{converse bounds}
According to the relaxed excess distortion event, we derive converse bounds on $\epsilon$ with a general form of function $\mathsf{d}(\mathsf{d}_s,\mathsf{d}_x)$. A nonasymptotic converse bound can be stated as follows. 

		
		
		\begin{theorem}[Converse bound with general $\mathsf{d}(\mathsf{d}_s,\mathsf{d}_x)$ function] \label{Thm_converse}
			For any achievable tuple  $(\mathsf{D}_s, \mathsf{D}_x, \epsilon)$, the following inequality must hold:
			\begin{align}
				\epsilon \geq &\inf_{P_{Y|X} } 
				\sup_{\gamma \geq 0}  
				\left\{  \sup_{P_{{Z}} } \mathbb{P} \big[ 
				\jmath_{S,X}(S, X) -\imath_{Y;{Z}} (Y; Z)\geq \gamma \big]
				-\exp(-\gamma) \right\}, \label{Equ_Converse}
			\end{align}
			for any $\mathsf{d}(\mathsf{d}_s,\mathsf{d}_x)$ which is monotonically non-decreasing with $\mathsf{d}_s$ and $\mathsf{d}_x$. Here the tilted information $\jmath_{S,X}(S,X)$ is defined as \eqref{Equ_Def_j}, and random variable tuple $(S,X,Y,Z,\hat{X},\hat{S})$ follows a distribution 
  \begin{align}
      P_{SXYZ\hat{X}\hat{S}}=P_{\hat{X}\hat{S}|Z}P_{Z|Y}P_{Y|X}{P_{X|S}P_{S}}.
  \end{align} 
		\end{theorem}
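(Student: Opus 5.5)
Since $\tilde{\mathcal{E}}_{(\mathsf{D}_s,\mathsf{D}_x)}\subseteq\mathcal{E}_{(\mathsf{D}_s,\mathsf{D}_x)}$, Lemma \ref{Lem_d_function} gives $\epsilon\geq P\{\mathcal{E}_{(\mathsf{D}_s,\mathsf{D}_x)}\}\geq P\{\tilde{\mathcal{E}}_{(\mathsf{D}_s,\mathsf{D}_x)}\}$. It therefore suffices to lower bound the relaxed excess-distortion probability. For that relaxed problem we mimic the converse of \cite[Theorem 1]{Kostina-lossyJSCC-13}, with the scalar distortion replaced by $\mathsf{d}(\mathsf{d}_s,\mathsf{d}_x)$ and the tilted information replaced by $\jmath_{S,X}$ of Definition \ref{d-tilted information}.

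Fix any code, i.e. $P_{Y|X}$ together with decoders yielding $P_{\hat{X}\hat{S}|Z}$, so that $(S,X,Y,Z,\hat{X},\hat{S})$ has the stated factorization. Fix $\gamma\geq 0$ and an arbitrary auxiliary output distribution $P_{\bar Z}$; the supremum over $P_Z$ in \eqref{Equ_Converse} is understood as this choice of $P_{\bar Z}$ inside $\imath_{Y;Z}$. Decompose
\begin{align*}
\mathbb{P}[\jmath_{S,X}(S,X)-\imath_{Y;\bar Z}(Y;Z)\geq\gamma]\leq P\{\tilde{\mathcal{E}}\}+\mathbb{P}\big[\jmath_{S,X}(S,X)-\imath_{Y;\bar Z}(Y;Z)\geq\gamma,\ \mathsf{d}(\mathsf{d}_s(S,\hat S),\mathsf{d}_x(X,\hat X))\leq \mathsf{d}(\mathsf{D}_s,\mathsf{D}_x)\big].
\end{align*}
On the second event, $\lambda^\star\geq0$ because $R_{S,X}$ is nonincreasing in the relaxed level. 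Hence
\begin{align*}
\jmath_{S,X}(S,X)\leq \jmath_{S,X}(S,X)+\lambda^\star\big(\mathsf{d}(\mathsf{D}_s,\mathsf{D}_x)-\mathsf{d}(\mathsf{d}_s(S,\hat S),\mathsf{d}_x(X,\hat X))\big).
\end{align*}
The indicator of $\{a\geq \gamma+b\}$ is then bounded by $\exp(a-b-\gamma)$. This bounds the second term by
\begin{align*}
\exp(-\gamma)\,\mathbb{E}\Big[\exp\big(\jmath_{S,X}(S,X)+\lambda^\star(\mathsf{d}(\mathsf{D}_s,\mathsf{D}_x)-\mathsf{d}(\mathsf{d}_s(S,\hat S),\mathsf{d}_x(X,\hat X)))\big)\tfrac{dP_{\bar Z}}{dP_{Z|Y}}(Z)\Big].
\end{align*}

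Next we change measure. Replace $Z\sim P_{Z|Y}$ by $\bar Z\sim P_{\bar Z}$, independent of $(S,X,Y)$. The reconstruction $(\hat S,\hat X)$ is then generated from $\bar Z$ through the decoder, so it is independent of $(S,X)$ with some marginal $P_{\hat S\hat X}'$. The expectation thus becomes
\begin{align*}
\mathbb{E}_{P_{SX}\times P'_{\hat S\hat X}}\big[\exp\big(\jmath_{S,X}(S,X)+\lambda^\star(\mathsf{d}(\mathsf{D}_s,\mathsf{D}_x)-\mathsf{d}(\mathsf{d}_s(S,\hat S),\mathsf{d}_x(X,\hat X)))\big)\big],
\end{align*}
which is at most $1$ by \eqref{Equ_Def_Ejd}. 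We then rearrange, take the supremum over $P_{\bar Z}$ and $\gamma$, and take the infimum over codes, in particular over $P_{Y|X}$. This yields \eqref{Equ_Converse}.

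The main obstacle is justifying \eqref{Equ_Def_Ejd} for an \emph{arbitrary} reproduction distribution $P'_{\hat S\hat X}$, not only for the optimal one. This is the Csisz\'ar-type property \cite[Lemma 1.4]{Csisza1974} for the relaxed problem of Definition \ref{relaxed rate distortion}, whose constraint is the single scalar $\mathsf{d}(\mathbb{E}\mathsf{d}_s,\mathbb{E}\mathsf{d}_x)\leq\mathsf{d}(\mathsf{D}_s,\mathsf{D}_x)$. For linear $\mathsf{d}_L$ in \eqref{Def_linear_d} this reduces to an ordinary rate-distortion problem with the combined distortion $(1-\alpha)\mathsf{d}_s+\alpha\mathsf{d}_x$ between $x$ and $(\hat s,\hat x)$, with $\mathsf{d}_s(s,\hat s)$ replaced by its conditional expectation given $x$, as in indirect coding \cite{Kostina-isc-2016}. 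There the property follows from convex duality. For general monotone $\mathsf{d}$ I would invoke the stated properties \eqref{Equ_Def_j}–\eqref{Equ_Def_Ejd} as given, noting that the $S$-dependence must be handled through $P_{S|X}$ in the same way as in \cite{Kostina-isc-2016}.
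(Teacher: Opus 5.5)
Your argument follows the same route as the paper's proof in Appendix~\ref{appendix-proof of general converse}. Both split on the relaxed event, bound that part by $\epsilon$ via Lemma~\ref{Lem_d_function}, change measure from $P_{Z|Y}$ to $P_{\bar Z}$, insert the tilt $\lambda^\star(\mathsf{d}(\mathsf{D}_s,\mathsf{D}_x)-\mathsf{d}(\mathsf{d}_s,\mathsf{d}_x))$, and close with \eqref{Equ_Def_Ejd}. Your ordering is cleaner: you tilt directly on $\{\mathsf{d}(\mathsf{d}_s,\mathsf{d}_x)\le\mathsf{d}(\mathsf{D}_s,\mathsf{D}_x)\}$ using $\lambda^\star\ge 0$, and you keep the joint law $P'_{\hat S\hat X}$ induced by $P_{\bar Z}$. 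The paper instead detours through $B_{\mathsf{D}_s}(s)\times B_{\mathsf{D}_x}(x)$, a product of marginals $P_{\hat S}P_{\hat X}$, and a second use of monotonicity in its step (c). Since the paper's step (d) simply asserts \eqref{Equ_Def_Ejd} for the code-induced reproduction law, your proof is at least as complete as the paper's.

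The weak point, in your proposal and equally in the paper, is that last inequality, and the justification you sketch for it does not work. After the change of measure, $(\hat S,\hat X)\sim P'_{\hat S\hat X}$ is independent of $(S,X)$, but the exponent still contains $\mathsf{d}_s(S,\hat S)$ with $S\sim P_{S|X}$. Duality for the reduced problem with $\bar{\mathsf{d}}_s(x,\hat s)=\mathbb{E}[\mathsf{d}_s(S,\hat s)\mid X=x]$ bounds by $1$ only the expectation in which $\mathsf{d}_s$ is replaced by $\bar{\mathsf{d}}_s$. Jensen gives $\mathbb{E}[e^{-\lambda^\star(1-\alpha)\mathsf{d}_s(S,\hat s)}\mid X=x]\ge e^{-\lambda^\star(1-\alpha)\bar{\mathsf{d}}_s(x,\hat s)}$, which is the wrong direction. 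No function of $(s,x)$ alone can repair this. By the Donsker--Varadhan inequality, if $\jmath(s,x)$ satisfies \eqref{Equ_Def_Ejd} for every $P'_{\hat S\hat X}$, then $\mathbb{E}[\jmath(S,X)]\le I(S,X;\hat S,\hat X)$ for every $P_{\hat S\hat X|SX}$ meeting the relaxed constraint. Its mean is therefore at most the rate--distortion function of an encoder that also observes $S$, which is generally strictly below $R_{S,X}$, so \eqref{Equ_Def_Ej} fails.

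The missing idea is the device behind the noisy converse of \cite{Kostina-isc-2016}: let the tilted information depend on the code's reproductions. For $\mathsf{d}_L$, put $\bar{\mathsf{d}}_L(x,\hat s,\hat x)=(1-\alpha)\bar{\mathsf{d}}_s(x,\hat s)+\alpha\mathsf{d}_x(x,\hat x)$ and let $\jmath_X$ be the ordinary tilted information of the reduced problem. Then use $\bar\jmath(s,x,\hat s,\hat x)=\jmath_X(x)+\lambda^\star(1-\alpha)(\mathsf{d}_s(s,\hat s)-\bar{\mathsf{d}}_s(x,\hat s))$. This is what \eqref{Equ_Def_j} gives when $\imath_{X;\hat X,\hat S}$ is the optimizer's information density evaluated at the decoder's output, up to $P_{\hat S^{\star}\hat X^{\star}}$-null sets. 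On the complement of the relaxed event, $\bar\jmath\le\jmath_X(X)+\lambda^\star(\mathsf{d}_L(\mathsf{D}_s,\mathsf{D}_x)-\bar{\mathsf{d}}_L(X,\hat S,\hat X))$, which no longer involves $S$. Your change of measure then closes with the $X$-only Csisz\'ar inequality. The identity $\mathbb{E}[\bar\jmath]=R_{S,X}(\mathsf{D}_s,\mathsf{D}_x)$ still holds by the Markov chain $S-X-(\hat S,\hat X)$, but the infimum in \eqref{Equ_Converse} must then run over decoders as well as $P_{Y|X}$. This is in fact the form the paper uses for the GMS in the proof of Theorem~\ref{Thm_GApprx_GMS}, where $\jmath_{S,X}$ depends on the reproduction through $\Delta_{X_i}$ and on $S$ through $N_{S_i}$.
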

The proof of Theorem \ref{Thm_converse} is provided in Appendix \ref{appendix-proof of general converse}.  

The lower bound of excess-distortion probability in Theorem \ref{Thm_converse} indicates that an excess distortion event is likely to occur when the number of bits for encoding a source symbol is larger than the number of bits that a channel symbol can convey. Specifically, the probability of satisfying the distortion constraint is smaller than $\exp(-\gamma)$ when the gap between the tilted information of source symbol and the information density of the channel is larger than $\gamma$. But different from the result in  \cite[Theorem 1]{Kostina-lossyJSCC-13}, the tilted information  used in \eqref{Equ_Converse} is the $\mathsf{d}(\mathsf{d}_s,\mathsf{d}_x)$-tilted information defined in Definition \ref{d-tilted information}, which is a balanced tilted information between the observable source and unobservable source rather than choosing either one merely. 

{ 
\begin{Remark}
 	There are two main technical difficulties in building the nonasymptotic bounds.	
	First, classical approaches of analyzing a single distortion \cite[Theorem 1]{Kostina-lossyJSCC-13} or a decomposable event structure \cite{Zhou-19} can not be extended to analyze the union of correlated excess-distortion events. 
	Furthermore, the dependence between the two distortion events makes the joint excess-distortion probability hard to be precisely characterized by existing analysis tools. Actually, even for the Gaussian sources, the joint excess-distortion probability requires integrating correlated chi-square random variables over a high-dimensional spherical region, making it intractable. 
	Second, the tilted-information approach used in previous works \cite{Kostina-lossyJSCC-13, Zhou-19} becomes inadequate under multiple distortion constraints, since an inactive constraint leads to vanishing Lagrange multiplier and hence ignores the corresponding distortion event. This results in a mismatch in the finite blocklength regime, where the excess distortion event may still occur even under nominally inactive constraints. 	
	To address these difficulties, we introduce a $\mathsf{d}()$-functional distortion relaxation that transforms the correlated distortion constraints into a tractable form. 
	This approach further enables the derivation of computable bounds and the optimal Gaussian approximations.

\end{Remark}
}

 If we further specify the function $\mathsf{d}()$ as the linear function $\mathsf{d}_L(\mathsf{d}_s,\mathsf{d}_x)$ in \eqref{Def_linear_d},  the lower bound of $\epsilon$ can be tightened by adjusting the parameters of the function in the following corollary.
\begin{Corollary}\label{L-converse bound}
    For the linear function $\mathsf{d}_L(\mathsf{d}_s,\mathsf{d}_x)$ defined in \eqref{Def_linear_d}, the converse bound in Theorem \ref{Thm_converse} can be further tightened as 
			\begin{align}
				\epsilon \geq &\inf_{P_{Y|X} } \sup_{\alpha \in [0,1]}
				\sup_{\gamma \geq 0}  
				\left\{  \sup_{P_{{Z}} } \mathbb{P} \big[ 
				\jmath_{S,X}(S, X) -\imath_{Y;{Z} } (Y; Z)\geq \gamma \big]
				-\exp(-\gamma) \right\},\label{L-converse bound-general}
			\end{align}
\end{Corollary}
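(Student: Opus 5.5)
The plan is to derive Corollary \ref{L-converse bound} directly from Theorem \ref{Thm_converse}, taking the linear function as an indexed family. For each $\alpha\in[0,1]$, the map $\mathsf{d}_L(\mathsf{d}_s,\mathsf{d}_x)=(1-\alpha)\mathsf{d}_s+\alpha\mathsf{d}_x$ is monotonically non-decreasing in each argument, because both coefficients are nonnegative. So Theorem \ref{Thm_converse} applies to every member of the family. Write $\jmath^{(\alpha)}_{S,X}$ for the $\mathsf{d}_L$-tilted information of Definition \ref{d-tilted information}, with its own multiplier $\lambda^{\star}_\alpha$ and optimal output distribution $P_{\hat S^\star\hat X^\star}^{(\alpha)}$. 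Then every achievable $(\mathsf{D}_s,\mathsf{D}_x,\epsilon)$ satisfies, for each fixed $\alpha$,
\begin{align}
\epsilon\geq \inf_{P_{Y|X}} G_\alpha(P_{Y|X}),\quad G_\alpha(P_{Y|X})\triangleq\sup_{\gamma\geq 0}\Big\{\sup_{P_Z}\mathbb{P}\big[\jmath^{(\alpha)}_{S,X}(S,X)-\imath_{Y;Z}(Y;Z)\geq\gamma\big]-\exp(-\gamma)\Big\}. \nonumber
\end{align}
Taking the supremum over $\alpha$ immediately gives the weaker bound $\epsilon\geq\sup_\alpha\inf_{P_{Y|X}}G_\alpha$.

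The statement instead puts $\sup_\alpha$ inside $\inf_{P_{Y|X}}$. Since $\inf\sup\geq\sup\inf$, this is the stronger claim, and it cannot simply be read off the theorem. My approach is to go back into the proof of Theorem \ref{Thm_converse} in Appendix \ref{appendix-proof of general converse}. That argument fixes an arbitrary code, i.e.\ $P_{Y|X}$ together with decoders $P_{\hat X\hat S|Z}$, and shows that this code's excess-distortion probability is at least $G(P_{Y|X})$ for the given $\mathsf{d}$. The steps would be:
\begin{enumerate}
\item Fix the code. For each $\alpha$, apply Lemma \ref{Lem_d_function} to get $P\{\mathcal{E}\}\geq P\{\tilde{\mathcal{E}}^{(\alpha)}\}$.
\item Bound $\mathbb{P}[\jmath^{(\alpha)}-\imath\geq\gamma]$ by splitting on $\tilde{\mathcal{E}}^{(\alpha)}$. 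On the complement of $\tilde{\mathcal{E}}^{(\alpha)}$, use the change of measure to $P_Z$ and property \eqref{Equ_Def_Ejd}, which yields the $\exp(-\gamma)$ term.
\item Conclude that this single code satisfies $P\{\mathcal{E}\}\geq G_\alpha(P_{Y|X})$ for every $\alpha$ and $\gamma$ simultaneously, and hence $P\{\mathcal{E}\}\geq\sup_\alpha G_\alpha(P_{Y|X})$.
\item Take the infimum over codes, which gives \eqref{L-converse bound-general}.
\end{enumerate}

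The main obstacle is step 3: we must check that the per-code inequality from the appendix really holds for a fixed $P_{Y|X}$, with no optimization over the code buried in it. The dependence on $\alpha$ enters only through $\jmath^{(\alpha)}$ and the event $\tilde{\mathcal{E}}^{(\alpha)}$, and the code is held fixed, so the supremum over $\alpha$ passes inside the infimum over codes. A smaller point is that $\lambda^\star_\alpha$ must be well defined, i.e.\ $R_{S,X}$ must be differentiable in $\mathsf{d}_L(\mathsf{D}_s,\mathsf{D}_x)$, for each $\alpha$. I would take this from the standing assumptions behind Definition \ref{d-tilted information}. At the endpoints $\alpha\in\{0,1\}$, where a constraint becomes inactive, I would handle it by a limiting or continuity argument.
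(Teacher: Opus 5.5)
Your proposal is correct and follows essentially the same route as the paper, which proves the corollary by rerunning the argument of Theorem \ref{Thm_converse} with $\mathsf{d}_L$. The paper replaces the separate balls $B_{\mathsf{D}_s}(s)$, $B_{\mathsf{D}_x}(x)$ by the joint set $B_{\mathsf{D}_L}(s,x)=\{(\hat s,\hat x):\mathsf{d}_L(\mathsf{d}_s(s,\hat s),\mathsf{d}_x(x,\hat x))\leq \mathsf{d}_L(\mathsf{D}_s,\mathsf{D}_x)\}$, which is exactly the complement of $\tilde{\mathcal{E}}^{(\alpha)}$ that your step 2 works on. Your per-code argument (the lower bound depends on the code only through $P_{Y|X}$, so $\sup_\alpha$ may sit inside $\inf_{P_{Y|X}}$) makes explicit a point the paper leaves implicit. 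No limiting argument is needed at $\alpha\in\{0,1\}$: that argument only uses $\lambda^\star_\alpha\geq 0$ and monotonicity, and both endpoint functions satisfy these directly.
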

\begin{IEEEproof}
	The proof is similar to that of Theorem \ref{Thm_converse}, but replacing $\mathsf{d}(\mathsf{d}_s({s}, {\hat{s}}),\mathsf{d}_x({x}, {\hat{x}}))>\mathsf{d}(\mathsf{D}_s,\mathsf{D}_x)$ by $\mathsf{d}_L(\mathsf{d}_s(s,  \hat{s}),\mathsf{d}_x(x,  \hat{x}))>\mathsf{d}_L(\mathsf{D}_s,\mathsf{D}_x)$,
	and $\mathsf{d}(\mathsf{d}_s({s}, {\hat{s}}),\mathsf{d}_x({x}, {\hat{x}}))\leq \mathsf{d}(\mathsf{D}_s,\mathsf{D}_x)$ by $\mathsf{d}_L(\mathsf{d}_s(s,  \hat{s}),\mathsf{d}_x(x,  \hat{x}))		
	\leq \mathsf{d}_L(\mathsf{D}_s,\mathsf{D}_x)$.
	Correspondingly, $\hat{s}\notin B_{\mathsf{D}_s}(s)$ and $\hat{x}\notin B_{\mathsf{D}_x}(x)$ can be replaced by the $(\hat{s}, \hat{x}) \notin {B}_{\mathsf{D}_L}(s,x)$, where ${B}_{\mathsf{D}_L}(s,x)=\{\hat{s}\in \hat{\mathcal{M}}_s, \hat{x}\in \hat{\mathcal{M}}_x: \mathsf{d}_L(\mathsf{d}_s(s,  \hat{s}),\mathsf{d}_x(x,  \hat{x}))		
	\leq \mathsf{d}_L(\mathsf{D}_s,\mathsf{D}_x)\}$ denotes the set of $(\hat{s},\hat{x})$ which do not satisfy the linear distortion constraint. 
\end{IEEEproof}

{ 
\begin{Remark}
	This corollary can be specialized to JSCC with a single observable source $X$ \cite[Theorem 1]{Kostina-lossyJSCC-13} and JSCC with the indirect source $S$ by setting the slope factor $\alpha=1$ and $\alpha=0$ respectively. 	
	More precisely, when $\alpha=1$, the relaxed distortion constraint degenerates to $(\mathsf{D}_s=\infty, \mathbb{E}[\mathsf{d}_x({X}, {\hat{X}})] \leq \mathsf{D}_x)$, and the $\mathsf{d}()$-relaxed rate distortion function degenerates to the traditional rate distortion function. Hence Corollary \ref{L-converse bound} degenerates to \cite[Theorem 1]{Kostina-lossyJSCC-13} when $\alpha=1$. Meanwhile, when $\alpha=0$, the distortion constraint on reconstructing $X$ is relaxed, and the relaxed distortion constraint degenerates to $(\mathsf{D}_x=\infty, \mathbb{E}[\mathsf{d}_s({S}, {\hat{S}})] \leq \mathsf{D}_s)$. By setting $\hat{X}$ to be a constant, 
	Corollary \ref{L-converse bound} degenerates to the following corollary.
		\begin{Corollary}\label{converse-indirect jscc} 
			The existence of a $(\mathsf{D}_s, \epsilon)$ code requires that 
			\begin{align}
				\epsilon \geq &\inf_{P_{Y|X} } 
				\sup_{\gamma \geq 0}  
				\left\{  \sup_{P_{{Z}} } \mathbb{P} \big[ 
				\jmath_X(X,\hat{S}) -\imath_{Y;{Z}} (Y; Z)\geq \gamma \big]
				-\exp(-\gamma) \right\},
			\end{align}
			where the function $\jmath_X(X, \hat{S})$ is defined as 
			\begin{align}
				\jmath_X(x, \hat{s})
				=\imath_{X;\hat{S}} (x; \hat{s})
				+\lambda_s \big(\mathsf{d}_s({s}, {\hat{s}})-\mathsf{D}_s\big),  \label{Def_jxs}
			\end{align}
			and $\lambda_s$ is defined as 
			\begin{align}
				\lambda_s \triangleq & \frac{\partial R_{ISC}(\mathsf{D}_s) }{ \partial \mathsf{D}_s},
			\end{align}
			where $R_{ISC}(\mathsf{D}_s)$ is the indirect source coding function defined in \eqref{Def_R_ISC}. 
		\end{Corollary}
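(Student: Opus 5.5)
Corollary~\ref{converse-indirect jscc} is the case $\alpha=0$ of Corollary~\ref{L-converse bound}, so the plan is to fix $\alpha=0$ there instead of taking the supremum over $\alpha$, and then identify the resulting tilted information with $\jmath_X(x,\hat{s})$ in \eqref{Def_jxs}. Since a lower bound that holds for every $\alpha\in[0,1]$ holds in particular for $\alpha=0$, fixing $\alpha=0$ only weakens \eqref{L-converse bound-general}. Hence the displayed inequality follows once the tilted information matches.

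\emph{Step 1: the relaxed event at $\alpha=0$.} With $\alpha=0$ we have $\mathsf{d}_L(\mathsf{d}_s,\mathsf{d}_x)=\mathsf{d}_s$. The relaxed event $\tilde{\mathcal{E}}_{(\mathsf{D}_s,\mathsf{D}_x)}$ is then $\{\mathsf{d}_s(s,\hat{s})>\mathsf{D}_s\}$, which is exactly the excess-distortion event of a $(\mathsf{D}_s,\epsilon)$ code for the indirect source. Lemma~\ref{Lem_d_function} and the argument of Theorem~\ref{Thm_converse} apply verbatim, and $\mathsf{d}$ is trivially non-decreasing in both arguments.

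\emph{Step 2: the relaxed rate-distortion function.} Under $\alpha=0$, the constraint \eqref{Equ_d_cst2} becomes $\mathbb{E}[\mathsf{d}_s(S,\hat{S})]\le\mathsf{D}_s$. In $R_{S,X}$ we minimize $I(X;\hat{S},\hat{X})\ge I(X;\hat{S})$, and choosing $\hat{X}$ constant attains equality. Therefore $R_{S,X}=R_{ISC}(\mathsf{D}_s)$, and the optimizer is $P_{\hat{S}^\star}\times\delta_{\hat{x}_0}$.

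\emph{Step 3: the tilted information.} By \eqref{Def_lamdas}, $\lambda^\star=-\partial R_{ISC}/\partial\mathsf{D}_s$, which is $\lambda_s$ in magnitude; the paper's sign convention for $\lambda_s$ should be read accordingly. With $\hat{X}$ degenerate, $\imath_{X;\hat{X},\hat{S}}(x;\hat{x},\hat{s})=\imath_{X;\hat{S}}(x;\hat{s})$. Substituting into \eqref{Equ_Def_j} gives exactly $\jmath_X(x,\hat{s})$.

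\emph{Main obstacle.} The delicate point is that the indirect tilted information depends on $s$ through $\mathsf{d}_s(s,\hat{s})$, whereas $\jmath_X$ is written as a function of $(x,\hat{s})$. I would handle this by carrying $s$ implicitly in the joint law $P_{SX}$, following the indirect-source treatment of \cite{Kostina-isc-2016}. Concretely, I would verify that \eqref{Equ_Def_Ejd} still holds with the expectation over $(S,X,\hat{S})$, since that is the only property used in the proof of Theorem~\ref{Thm_converse}. The expectation $\mathbb{E}[\cdot]\le1$ follows from the Csisz\'ar-type Lagrangian characterization \cite{Csisza1974} applied with the surrogate distortion $\bar{\mathsf{d}}_s(x,\hat{s})=\mathbb{E}[\mathsf{d}_s(S,\hat{s})\mid X=x]$. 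With that in place, the bound of Theorem~\ref{Thm_converse} specializes directly to the stated one.
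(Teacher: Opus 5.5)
Your proposal is correct and follows the paper's route: the paper also obtains this corollary by setting $\alpha=0$ in Corollary~\ref{L-converse bound} and taking $\hat{X}$ constant, so that $R_{S,X}$ reduces to $R_{ISC}(\mathsf{D}_s)$ and the tilted information reduces to $\jmath_X(x,\hat{s})$. You add two points that the paper leaves implicit but that the specialization needs: reading $\lambda_s$ as $-\partial R_{ISC}/\partial\mathsf{D}_s\ge 0$, and checking the exponential-moment property \eqref{Equ_Def_Ejd} for each fixed $\hat{s}$ via the surrogate distortion $\bar{\mathsf{d}}_s$, where it reduces to $\mathbb{E}[\exp(\imath_{X;\hat{S}^\star}(X;\hat{s}))]\le 1$.
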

		
		The proof of Corollary \ref{converse-indirect jscc} follows the same steps of the proof of \cite[Theorem 1]{Kostina-lossyJSCC-13} , by replacing the direct source $\hat{X}$ by  $\hat{S}$.
\end{Remark}
}

 \section{General Achievability Bounds and Coding Scheme} \label{Sec_Achievability_general}

 In Section \ref{achi-scheme}, a two-layered superposition coding scheme is proposed for JSCC with two distortion constraints. Then, we establish {the achievable bounds of the excess-distortion probability} in Section \ref{achievability bounds}.  
Before proposing the achievability bounds, we first give the complementary definitions about the achievable rate tuple for the coding scheme under SSCC framework. For convenience, the following encoding and decoding functions are defined for a semantic source-channel code under the SSCC framework. 
\begin{itemize}
	\item Encoding function of the source encoder $f_S(\cdot):\mathcal{X}\rightarrow[1:M]$ maps the observable source $X$ into a message $W$.
	\item Encoding function of the channel encoder $f_C(\cdot):[1:M]\rightarrow \mathcal{Y}$ maps the source encoder output $W$ to a channel input $Y$.
	\item Decoding function of the channel decoder $g_C(\cdot):\mathcal{Z}\rightarrow [1:M]$ decodes the channel output $Z$ as a reconstruction of source encoder output $\hat{W}_1$.
	\item Decoding function for the observable source $g_O(\cdot):[1:M]\rightarrow\mathcal{X}$ further decodes $\hat{W}_1$ as a reconstruction of the observable source $\hat{X}$.
	\item Decoding function for the semantic source $g_S(\cdot):[1:M]\rightarrow\mathcal{S}$ further decodes $\hat{W}_1$ as a reconstruction of the semantic source $\hat{S}$.
\end{itemize} 
From a practical perspective, here we require $M\leq |\mathcal{Y}|$. Hence, a tuple $(M,\mathsf{D}_s,\mathsf{D}_x,\epsilon)$ is achievable if there exists encoding and decoding functions such that $P\{\mathcal{E}_{(\mathsf{D}_s,\mathsf{D}_x)}\} \leq \epsilon$. 

\subsection{Achievable Coding Scheme}\label{achi-scheme}

In this subsection, we propose a a two-layer superposition code for encoding the semantic source and observable source respectively.  To show the scheme explicitly, we will first introduce how to generate the codebook, and then we will introduce  the encoding and decoding functions.

	\subsubsection{Codebook Generation}
	For the source codebook, we first generate a core codebook with $M_1$ codewords $\{\hat{x}_1,\dots,\hat{x}_{M_1}\}$, according to the distribution $P_{\hat{X}}(\hat{x}_i)$. Then for every $\hat{x}_i$ where $i=1,\dots,M_1$, we generate a satellite codebook with $M_2$ codewords $
    \{\hat{s}_{i,1},\dots,\hat{s}_{i,M_2}\}$ according to the distribution $P_{\hat{S}|\hat{X}}(\hat{s}_{i,j}| \hat{x}_{m_i})$, where $j=1,\ldots, M_2$. Here $M_1\times M_2\leq M$.     
	For the channel codebook, without loss of generality we can write $\mathcal{Y}=\mathcal{Y}_1 \times \mathcal{Y}_2$.
	First, we can generate $M_1$ codewords $\{y_1(1), \dots, y_1(M_1)\}$ independently according to some $P_{Y_1}$. 
	Then, for each $m_1$, generate $M_2$ satellite codewords $\{ y_2(m_1, 1), \dots, y_2(m_1, m_2)\}$ according to $ P_{Y_2}$.  Notice that  $ P_{Y}$ can be uniquely identified by  $ P_{Y_1}P_{Y_2}$, and $y$ can also be uniquely identified by $(y_1,y_2)$. Hence, we have a codebook with $M_1\times M_2$ codewords as $\{y(1,1), \dots, y(M_1,M_2)\}$. 	
	Considering the randomness of $\hat{X}$ into account, it is equivalent to directly generating $\hat{s}$ according to $P_{\hat{S}}$ or first generating an $\hat{x}\in \hat{x}^{M_1}$ then generating $\hat{s}$ according to $P_{\hat{S}|\hat{X}}$ as described above.

	\subsubsection{Source Encoder}
	The source encoder first selects the lowest index $m_1$ from $\{1,\dots,M_1\}$ such that the distance between $x$ and $\hat{x}_{m_1}$ is within the  distance $\mathsf{D}_x$. If there is no such feasible index, the encoder outputs the largest index. Then it selects an index $m_2$ from $\{1,\dots,M_2\}$ so that the codeword from the satellite codebook $\hat{s}_{m_1,m_2}$ has the minimum excess-distortion probability over $\hat{s}^{M_2}(m_1)$.  Finally, the index tuple $(m_1, m_2)$ is mapped to the channel input as follows
	\begin{equation}
		m_1=\left\{
		\begin{array}{cc}
			m,   &  d(x,\hat{x}_{m})\leq \mathsf{D}_x < \min_{i=1,\dots,m-1} d(x,\hat{x}_{i}), \\
			M_1,    & \mathsf{D}_x< \min_{1,\dots,M_1} d(x,\hat{x}_{i}),
		\end{array}
		\right.
	\end{equation}
	and 
	\begin{align}
		m_2=\arg \underset{i}{\min}~\pi(x,\hat{s}_{m_1,i}|\hat{x}_{m_1}).
	\end{align}

	\subsubsection{Channel Encoder}
	The channel encoder maps the source tuple $(m_1, m_2)$ to a channel input sequence using superposition coding, i.e.,  the transmitted signal $y$ is a functional superposition:
	\begin{align}
		y = (y_1(m_1), y_2(m_1, m_2)).
	\end{align}
	where $y_1(m_1)$ carries the information of the observable source $x$, and $y_2(m_1, m_2)$ denotes the refinement layer codeword which carries the semantic information $s$ superimposed on $y_1(m_1)$.
	
	\subsubsection{Channel Decoder}
	Since the distribution of channel input $(Y_1, Y_2)$ can hardly be uniformly distributed, the best decoder here should be the maximum a posteriori (MAP) decoder. However, the performance of a MAP decoder is hard to analyze. Hence, here we use the maximum-likelihood (ML) decoder instead, where we have 
	\begin{align}
		\hat{m}_1 \triangleq g_{C_1}(z)=& \arg \underset{i}{\max}~P(z|y_1^n(i)), \nonumber\\
		\tilde{m}_2 \triangleq g_{C_2}(z)=& \arg \underset{j}{\max}~P(z|y_1^n(\hat{m}_1), y_2^n(\hat{m}_1, j)), \nonumber
	\end{align}
	where $\hat{m}_1$ is a recovery of the channel input $m$.

	\subsubsection{Source Decoder}
	We first obtain $\hat{m}_2$ by  
	\begin{align*}
		& \hat{m}_2= \tilde{m}_2-\lfloor \tilde{m}_2 /M_2\rfloor\times M_2,
	\end{align*}
	which is a recovery of the source encoder output $(m_1,m_2)$. Accordingly, the reconstructions of the observable source and the semantic source are given by
	\begin{align}
		& g_O(\hat{m}_1)=\hat{x}_{\hat{m}_1},\\
		& g_S(\hat{m}_2)=\hat{s}_{\hat{m}_1,\hat{m}_2}.
	\end{align}
 	
	\begin{Remark}
		The existing schemes in the literature can be regarded as special cases of the proposed codebook. For example, we consider two extreme cases where the distortion constraint of the observable source or that of the semantic source is relaxed. When the distortion constraint of the observable source is relaxed, by setting $M_1=1$ and $M_2=M$, the scheme is equivalent to that in \cite[Theorem 7]{Kostina-lossyJSCC-13}. Meanwhile, when the distortion constraint of the semantic source is relaxed, by setting $M_1=M$ and $M_2=1$, the scheme degenerates to that in \cite[Theorem 3]{Kostina-isc-2016}. 
	\end{Remark}
	
\subsection{Achievability Bounds for Semantic Communication in JSCC Framework}\label{achievability bounds}

Based on the coding scheme proposed in Sec. \ref{achi-scheme}, for any achievable $(M,\mathsf{D}_s,\mathsf{D}_x,\epsilon)$ tuple, the nonasymptotic achievability bound is given in the following theorem. 
{ 
	\begin{theorem} \label{achievability bound-JSCC2}
		There exists an $(M,\mathsf{D}_s,\mathsf{D}_x,\epsilon)$ code satisfying that 
		\begin{align}
			\epsilon' \leq &\underset{M_1\times M_2=M}{\inf}\underset{P_{Y_1Y_2},P_{\hat{S}\hat{X}}}{\inf} \Bigg\{ \mathbb{E}[\exp(-|\imath_{Y_1;Z}(Y_1;Z)-\log(M_1)|^{+})]
			+  \mathbb{E} \bigg[ \left(1-P_{\hat{X}}\left(B_{\mathsf{D}_x}\left(X\right)\right)\right)^{M_1}\bigg]\nonumber\\
			&+\mathbb{E}[\exp(-|\imath_{Y_2;Z}(Y;Z|Y_1)-\log(M_2)|^{+})]
			+ \mathbb{E} \bigg[T(X)\sum_{\hat{x}_i\in B_{\mathsf{D}_x}\left(X\right)} P(\hat{x})\int_{0}^{1} P^{M_2} \left\{  {\pi(X,\hat{S}|\hat{x})}>t\right\}dt\bigg]\Bigg\}, \label{Equ_Inner_GA1}
		\end{align}
		where $\pi(x,\hat{s}|\hat{x})=P\{\mathsf{d}_s(S,\hat{s})>\mathsf{D}_s|X=x,\hat{X}=\hat{x}\}$ with respect to the distribution $P_{SX\hat{S}\hat{X}}=P_{SX}P_{\hat{S}\hat{X}}$, and $T(X)=\sum_{i=0}^{M_1-1}\left(1-P_{\hat{X}}\left(B_{\mathsf{D}_x}\left(X\right)\right)\right)^{i}$.
	\end{theorem}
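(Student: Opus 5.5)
We will use a random-coding argument for the two-layer superposition scheme of Sec.~\ref{achi-scheme}. Fix $M_1M_2=M$, $P_{Y_1}P_{Y_2}$ and $P_{\hat S\hat X}$, and average the excess-distortion probability over the random source and channel codebooks; this gives an upper bound on $\epsilon'$ for some deterministic code, and taking the infimum over the parameters gives \eqref{Equ_Inner_GA1}. We bound $P\{\mathcal{E}_{(\mathsf{D}_s,\mathsf{D}_x)}\}$ by a union over four error events:
\begin{itemize}
\item $\mathcal{F}_1=\{\hat m_1\neq m_1\}$, a first-layer channel decoding error;
\item $\mathcal{F}_2=\{\mathsf{d}_x(X,\hat x_{m_1})>\mathsf{D}_x\}$, failure of the core source codebook;
\item $\mathcal{F}_3=\{\hat m_1=m_1,\ \hat m_2\neq m_2\}$, a second-layer decoding error;
\item $\mathcal{F}_4$, the event that both indices are decoded correctly and $\hat x_{m_1}\in B_{\mathsf{D}_x}(X)$, yet $\mathsf{d}_s(S,\hat s_{m_1,m_2})>\mathsf{D}_s$.
\end{itemize}
Outside $\mathcal{F}_1\cup\mathcal{F}_2\cup\mathcal{F}_3\cup\mathcal{F}_4$ both distortion constraints hold, so $P\{\mathcal{E}\}\le\sum_i P\{\mathcal{F}_i\}$.

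The three channel and core-codebook terms follow from standard arguments.
\begin{itemize}
\item For $\mathcal{F}_1$, the channel codewords $y_1(\cdot)$ are i.i.d.\ and independent of the source codebook, so the transmitted index can be treated like a fixed message. The RCU-type bound for ML decoding, in the form used in \cite{Kostina-lossyJSCC-13}, then gives $P\{\mathcal{F}_1\}\le\mathbb{E}[\min\{1,(M_1-1)\mathbb{P}[\imath_{Y_1;Z}(\bar Y_1;Z)\ge\imath_{Y_1;Z}(Y_1;Z)]\}]$. Relaxing this via the standard change of measure yields $\mathbb{E}[\exp(-|\imath_{Y_1;Z}(Y_1;Z)-\log M_1|^+)]$.
\item $\mathcal{F}_3$ is handled in the same way, conditioning on $Y_1$ and using the conditional information density $\imath_{Y_2;Z}(Y;Z|Y_1)$ over the $M_2$ satellite codewords.
\item For $\mathcal{F}_2$, the core codewords are i.i.d.\ $P_{\hat X}$, so $\mathbb{P}[\mathcal{F}_2\mid X]=(1-P_{\hat X}(B_{\mathsf{D}_x}(X)))^{M_1}$.
\end{itemize}

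The main obstacle is $\mathcal{F}_4$, which carries the coupling between the two distortions. Conditioning on $X=x$, the selected core index $m_1=m$ is the first codeword falling in $B_{\mathsf{D}_x}(x)$. By symmetry of the i.i.d.\ codebook, the selected $\hat x_{m_1}$ has law $P_{\hat X}$ restricted to the ball. The index probabilities $P(m_1=m)=(1-p)^{m-1}p$, with $p=P_{\hat X}(B_{\mathsf{D}_x}(x))$, will be bounded by summing over $m$. This produces the factor $T(x)=\sum_{i=0}^{M_1-1}(1-p)^i$ multiplying $\sum_{\hat x\in B_{\mathsf{D}_x}(x)}P(\hat x)(\cdot)$; we will use this deliberately crude bound in place of exact normalization. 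Given $\hat x_{m_1}=\hat x$, the satellite codewords are i.i.d.\ $P_{\hat S|\hat X=\hat x}$, and $m_2$ minimizes $\pi(x,\hat s_{m_1,i}|\hat x)$. Since $S$ is conditionally independent of the codebook given $X$, the conditional excess probability is $\min_i\pi(x,\hat s_i|\hat x)$. Its expectation is $\int_0^1\mathbb{P}[\min_i\pi>t]\,dt=\int_0^1P^{M_2}\{\pi(x,\hat S|\hat x)>t\}\,dt$, which gives the fourth term.

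The delicate point is justifying the conditional-independence step. One must check that conditioning on which core index was selected does not bias the corresponding satellite codebook. This holds because the satellite codewords are generated independently given $\hat x_i$ and the selection of $m_1$ depends only on the core codebook. Finally, we take expectation over $X$, add the four terms, and take the infima.
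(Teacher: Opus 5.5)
Your proposal is correct and follows essentially the same argument as the paper: the paper's Type~I error is your $\mathcal{F}_1\cup\mathcal{F}_3$, bounded layer by layer through the ML/RCU change-of-measure step; Type~II is your $\mathcal{F}_2$; and Type~III is your $\mathcal{F}_4$. The paper handles Type~III exactly as you do, via $P\{\hat{x}_{m_1}=\hat{x}\}=P(\hat{x})T(x)$ for $\hat{x}\in B_{\mathsf{D}_x}(x)$ and $\mathbb{E}[\min_i\pi]=\int_0^1 P^{M_2}\{\pi>t\}\,dt$. The only quibble is that $T(x)P(\hat{x})$ is not a crude bound but the exact selection probability (note $T(x)P_{\hat{X}}(B_{\mathsf{D}_x}(x))=1-(1-P_{\hat{X}}(B_{\mathsf{D}_x}(x)))^{M_1}$), so your $\mathcal{F}_4$ term is precisely the probability of semantic failure jointly with core-codebook success.
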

	
}
The proof of Theorem \ref{achievability bound-JSCC2} can be found in Appendix \ref{analyze ex-dis prob}. Notably, the distribution $P_{SX\hat{S}\hat{X}}=P_{SX}P_{\hat{S}\hat{X}}$ characterizes the probability of generating some particular codewords according to the marginal distribution of $(\hat{X},\hat{S})$, and is used for evaluating the performance of a random codebook. In fact, the distribution of $(\hat{X},\hat{S})$ is highly correlated with $X$, and $P_{\hat{S}\hat{X}}$ is obtained by calculating the marginal distribution of $P_{\hat{S}\hat{X}|X}P_{X}$. 

{ 
The proof characterizes the error probability by decomposing the overall system into two SSCC frameworks corresponding to the observable source $X$ and the semantic source $S$. 
The resulting bound \eqref{Equ_Inner_GA1} consists of four terms, which can be grouped into channel decoding errors and source reconstruction errors.
The first and third terms represent channel decoding errors associated with the two layers $(Y_1,Y_2)$.
The second term characterizes the excess-distortion probability of reconstructing the observable source $X$, where $\left(1-P_{\hat{X}}(B_{\mathsf{D}_x}(X))\right)^{M_1}$ represents the probability that none of the $M_1$ codewords fall within the distortion ball of $X$. Similarly, the fourth term captures the conditional excess-distortion probability of reconstructing the semantic source $S$ given $X$.
}
	
\begin{Remark}
	The achievability bound in Theorem \ref{achievability bound-JSCC2} differs from classical results in that the transmitter must simultaneously convey information about both the observable source and the semantic source, which necessitates an appropriate rate allocation between the two components.
	Specifically, $\log M_1$ bits are spent to transmit the information of the observable source, and $\log M_2$ bits are used to transmit the information of the semantic source.  
The achievability bound is indeed characterized as the infimum of the tradeoff between $M_1$ and $M_2$. 
\end{Remark}
	
{ \begin{Remark}
The achievability bound in Theorem \ref{achievability bound-JSCC2} can be specialized to related previous results.  
First, when $\mathsf{D}_s=\infty$, the JSCC with hierarchical sources degenerates to the JSCC model with observable data source. Intuitively,  the semantic source becomes a dummy component of this system, and the excess-distortion probability will be zero, even when no rate is spent on transmitting the semantic information.
For the $\pi(\cdot)$ function, we have
\begin{align}
	\pi(x,\hat{s}|\hat{x})=P\{\mathsf{d}_s(S,\hat{s})>\infty|X=x,\hat{X}=\hat{x}\}=0,
\end{align}
when $\mathsf{D}_s=\infty$ for all codebooks and all possible values of $S$.
Thus the third term in \eqref{Equ_Inner_GA1} equals zero. Thus, we have
\begin{align}
	T(X)\sum_{\hat{x}\in B_{\mathsf{D}_x}\left(X\right)} P(\hat{x})\int_{0}^{1} P^{M_2} \left\{ \pi(X,\hat{S}|\hat{x})>t\right\}dt=0,
\end{align}
for any value of $M_1$. 
Meanwhile, since $1-P_{\hat{X}}\left(B_{\mathsf{D}_x}\left(x\right)\right)\leq 1$,  
$\left(1-P_{\hat{X}}\left(B_{\mathsf{D}_x}\left(x\right)\right)\right)^{M_1}$ is monotonically nonincreasing as $M_1$ increases. Given the constraint that $M_1\times M_2=M$, the infimum value is taken when $M_1=M$ and $M_2=1$.  
Therefore, we can obtain
\begin{align}
	\epsilon \leq &\underset{P_Y,P_{\hat{X}}}{\inf} \Bigg\{ \mathbb{E}[\exp(-|{ \imath_{Y;Z} (Y; Z)}-\log(M)|^{+})] 
	+ \mathbb{E} \bigg[ \left(1-P_{\hat{X}}\left(B_{\mathsf{D}_x}\left({ X}\right)\right)\right)^{M}\bigg]\Bigg\}\label{JSCC-achievability bound-general},
\end{align} 
which exactly is the result of  \cite[Theorem 7]{Kostina-lossyJSCC-13}.

Second, when $\mathsf{D}_x=\infty$, the model degenerates to the indirect rate distortion model. 
Since there is no requirement on the distortion of reconstructing observable source $X$, we can set its reconstruction $\hat{X}$ as a constant, and apparently $P_{\hat{X}}\left(B_{\mathsf{D}_x}\left(x\right)\right)=1$ for all $x$ trivially. 
Moreover, since the channel is error-free, the information density $\imath(y;z)=\infty$ according to the definition of channel information density in \cite[Eq.(13)]{Kostina-isc-2016}. 
Given these two points, the term $\left(1-P_{\hat{X}}\left(B_{\mathsf{D}_x}\left(x\right)\right)\right)^{M_1}$ vanishes, and $T(X)$ and $\sum_{\hat{x}\in B_{\mathsf{D}_x}\left(X\right)} P(\hat{x})$ equal 1 trivially. Finally, since $\hat{X}$ is a constant, $\pi(x,\hat{s}|\hat{x})$ can be written as $\pi(x,\hat{s})$, where $\pi(x,\hat{s})=P\{\mathsf{d}_s(S,\hat{s})>\mathsf{D}_s|{X=s}\}$. 
Thus, the achievability bound in Theorem \ref{achievability bound-JSCC2} can be derived as 
        \begin{align}
\epsilon \leq &\underset{P_{\hat{S}}}{\inf} \, \mathbb{E}_{\sim P_X} \Bigg\{ \int_{0}^{1} P^{M} \left\{ \pi(X,\hat{S})>t\right\}dt\Bigg\},
        \end{align}
which is exactly the result of \cite[Theorem 3]{Kostina-isc-2016}. 

\end{Remark}}

Next, we consider a special class of codebooks where $M_1= M$ and $M_2=1$, and with a deterministic reconstruction function $\hat{s}=h(\hat{x})$, where $h(\cdot)$ is a deterministic function. 
 Hence, in this type of codebooks, we focus on the quality of reconstructing the observable source $X$, and construct the semantic information as a deterministic function $h(\hat{x})$, which is the best guess of $\hat{s}$ given $\hat{x}$.

{ This codebook follows a straightforward idea that since the encoder is not capable of observing the semantic source $S$, then we focus on encoding the information of $X$ and decode $S$ according to the reconstruction $\hat{X}$ at the receiver. }This strategy is reasonable when the semantic source and the observable source are joint GMSs. As pointed out in previous works of semantic rate distortion problem in infinite regime for GMSs \cite{Shi-errorexponent-2023}, the semantic source $S$, the observable source $X$, and the reconstruction of observable source $\hat{X}$, the reconstruction of semantic information $\hat{S}$ form a Markov chain $S-X-\hat{X}-\hat{S}$. It shows that $\hat{X}$ is the sufficient statistic to estimate $\hat{S}$, thus there is no need to spend extra bits on encoding $S$.  

For such coding scheme, an achievability bound can be established in the follow theorem. 
{ \begin{Corollary} \label{achievability bound-simplified}
     There exists an $(M,\mathsf{D}_s,\mathsf{D}_x,\epsilon)$ code satisfying that 
 \begin{align}
	\epsilon \leq &\underset{P_Y,P_{\hat{S}\hat{X}}}{\inf} \, \underset{M\geq \left\lfloor \frac{\gamma}{P_{\hat{X}}\left(B_{\mathsf{D}_x}\left(X\right)\right)}\right\rfloor +1}{\inf} \, \underset{h(\hat{x})}{\inf}\Bigg\{ \exp(1-\gamma) +\mathbb{E} \bigg[ \exp\left(-\left|\imath_{Y_1,Y_2;Z} (Y_1,Y_2; Z)-\log \frac{\gamma}{P_{\hat{X}}\left(B_{\mathsf{D}_x}\left(X\right)\right)}\right|^{+}\right)\nonumber\\
	&+\sum_{\hat{x}\in B_{\mathsf{D}_x}\left(X\right)}\frac{P(\hat{x})}{P_{\hat{X}}\left(B_{\mathsf{D}_x}\left(X\right)\right)}P\left\{\mathsf{d}_s(S,h(\hat{x}))>\mathsf{D}_s\right\}\bigg]\Bigg\}\label{leq-achievability bound-general-simplified},
\end{align}
where $\gamma$ is an arbitrary positive number, $\pi(x,h(\hat{x}))=P\{\mathsf{d}_s(S,h(\hat{x})>\mathsf{D}_s|X=x)\}$ and $h(\hat{x})$ is a function to generate  $\hat{s}$ according to $\hat{x}$.  
\end{Corollary}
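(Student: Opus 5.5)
Following the scheme of Section \ref{achi-scheme}, I would set $M_1=M$, $M_2=1$ and replace the satellite codebook by the deterministic map $\hat{s}=h(\hat{x})$. The excess-distortion event is then contained in the union of three events, which I bound separately:
\begin{itemize}
\item[(i)] the channel decoder errs, $\hat{m}_1\neq m_1$;
\item[(ii)] no codeword of the core codebook falls in $B_{\mathsf{D}_x}(X)$;
\item[(iii)] the channel decodes correctly and $\hat{x}_{m_1}\in B_{\mathsf{D}_x}(X)$, yet $\mathsf{d}_s(S,h(\hat{x}_{m_1}))>\mathsf{D}_s$.
\end{itemize}

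For (i), I would use the random-coding-union (RCU) argument of \cite[Theorem 7]{Kostina-lossyJSCC-13} under ML decoding. This gives the term $\mathbb{E}[\exp(-|\imath_{Y_1,Y_2;Z}(Y_1,Y_2;Z)-\log M|^{+})]$. Because the expression is monotone in $M$, I would then replace $\log M$ by $\log\frac{\gamma}{P_{\hat{X}}(B_{\mathsf{D}_x}(X))}$, which is legitimate whenever $M$ is at least that quantity. The condition $M\geq \lfloor \gamma/P_{\hat{X}}(B_{\mathsf{D}_x}(X))\rfloor+1$ in the infimum is exactly what permits this.

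For (ii), the probability is $\mathbb{E}[(1-P_{\hat{X}}(B_{\mathsf{D}_x}(X)))^{M}]$, as in Theorem \ref{achievability bound-JSCC2}. I would bound it by $\mathbb{E}[\exp(-M P_{\hat{X}}(B_{\mathsf{D}_x}(X)))]$ using $1-p\le e^{-p}$. Inserting $M\geq \gamma/P_{\hat{X}}(B_{\mathsf{D}_x}(X))$ then gives $e^{-\gamma}\le \exp(1-\gamma)$. The extra factor $e$ absorbs the slack from the floor; alternatively I would follow the $\exp(1-\gamma)$ bookkeeping of \cite{Kostina-fixed-12}.

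For (iii), I would use the structure of the encoder. It selects the lowest index $m_1$ whose codeword lies in the ball, and the codewords are i.i.d. $P_{\hat{X}}$. Conditioned on at least one codeword being in the ball, the selected codeword $\hat{x}_{m_1}$ is therefore distributed as $P_{\hat{X}}$ restricted to $B_{\mathsf{D}_x}(X)$ and renormalized, i.e. as $P(\hat{x})/P_{\hat{X}}(B_{\mathsf{D}_x}(X))$. Given $X$ and $\hat{x}$, the semantic failure probability is $P\{\mathsf{d}_s(S,h(\hat{x}))>\mathsf{D}_s\mid X\}=\pi(X,h(\hat{x}))$. This uses the fact that the codebook is independent of $S$ given $X$, i.e. the Markov chain $S-X-\hat{X}$. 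Averaging gives the third term. I drop the conditioning on event (i) by a union bound, and bound the probability that some codeword falls in the ball by $1$. This term is also where the factor $T(X)$ from Theorem \ref{achievability bound-JSCC2} collapses: with $M_2=1$ and a deterministic $h$, $T(X)\,P(\hat{x})$ summed over the geometric waiting time reduces exactly to the conditional law above.

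Finally, since the bound holds for every $P_Y$, $P_{\hat{S}\hat{X}}$ (only $P_{\hat{X}}$ matters), every admissible $M$ and every $h$, I take the infimum over all of them.

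I expect the main obstacle to be (iii). The difficulty is justifying rigorously that the first-hit codeword has the renormalized law independently of $S$, and that correct channel decoding makes $\hat{x}_{\hat{m}_1}=\hat{x}_{m_1}$, so that $h$ is applied to the right codeword. I would handle this by conditioning on $X=x$ and computing $\sum_{i}(1-p)^{i-1}P(\hat{x})$ directly, where $p=P_{\hat{X}}(B_{\mathsf{D}_x}(x))$; this sum is at most $P(\hat{x})/p$.
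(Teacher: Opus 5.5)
Your step (i) uses the monotonicity in the wrong direction. The random-coding term $\mathbb{E}[\exp(-|\imath_{Y_1,Y_2;Z}(Y_1,Y_2;Z)-\log M|^{+})]$ is \emph{nondecreasing} in $M$, because more codewords means more competitors. So if $M\geq \gamma/P_{\hat{X}}(B_{\mathsf{D}_x}(X))$, replacing $\log M$ by $\log\frac{\gamma}{P_{\hat{X}}(B_{\mathsf{D}_x}(X))}$ makes the term smaller. That gives a lower bound on the channel-error contribution, not an upper bound.

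Steps (i) and (ii) pull in opposite directions. With $p=P_{\hat{X}}(B_{\mathsf{D}_x}(x))$, the channel term needs $M\leq\gamma/p$, while the covering term needs $M$ large. The paper's proof resolves this by taking $M=\lfloor\gamma/p\rfloor$, so that $\gamma/p-1\leq M\leq\gamma/p$. Then $\log M\leq\log(\gamma/p)$ handles the channel term, and
\[
(1-p)^{M}\leq e^{-pM}\leq e^{-p(\gamma/p-1)}=e^{p-\gamma}\leq e^{1-\gamma}
\]
handles the covering term. This is where the factor $e$ in $\exp(1-\gamma)$ really comes from. Under your choice $M\geq\gamma/p$ you would get $e^{-\gamma}$, and there would be no floor slack to absorb. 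The constraint $M\geq\lfloor\gamma/P_{\hat{X}}(B_{\mathsf{D}_x}(X))\rfloor+1$ printed in the statement is not what the appendix actually uses, so trying to satisfy it is what led you astray.

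Once the direction is fixed, you face a second issue: $p$ depends on the source realization. No single deterministic codebook size equals $\lfloor\gamma/p(x)\rfloor$ for every $x$. A fixed-$M$ ML decoder therefore does not, in general, deliver the $X$-dependent quantity $\log\frac{\gamma}{P_{\hat{X}}(B_{\mathsf{D}_x}(X))}$ inside the expectation. You need a version of the random-coding argument in which the effective codebook (list) size may be chosen as a function of the source realization, which is how \cite[Theorem 8]{Kostina-lossyJSCC-13} is obtained. The paper's appendix glosses over this by writing $M=\lfloor\gamma/P_{\hat{X}}(B_{\mathsf{D}_x}(x))\rfloor$ pointwise in $x$, but you should make it explicit.

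Your step (iii), by contrast, is sound and matches the paper. The first-hit codeword has the renormalized law $P(\hat{x})/P_{\hat{X}}(B_{\mathsf{D}_x}(x))$. This is exactly the paper's bound $T(X)\leq 1/P_{\hat{X}}(B_{\mathsf{D}_x}(X))$, combined with the collapse of $\int_0^1P^{M_2}\{\pi>t\}\,dt$ to $P\{\mathsf{d}_s(S,h(\hat{x}))>\mathsf{D}_s\}$ when $M_2=1$ and $h$ is deterministic.
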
}

The proof of Corollary \ref{achievability bound-simplified} can be found in Appendix \ref{appendix-proof of achi}. 
Note that the codebook with $M_2=1$ can be regarded as a codebook for traditional JSCC, but with additional information about the semantic in each codeword. Hence, the excess-distortion probability of such codebook equals the excess-distortion probability of the JSCC codebook, which are the first two terms on the RHS of  \eqref{leq-achievability bound-general-simplified}, plus the probability of making a ``bad" guess of $\hat{S}$ according to $\hat{X}$, which is the third term on the RHS of  \eqref{leq-achievability bound-general-simplified}. 
{ Similar to Theorem \ref{achievability bound-JSCC2}, if $\mathsf{D}_s=\infty$, Corollary \ref{achievability bound-simplified} can also be specialized to \cite[Theorem 8]{Kostina-lossyJSCC-13}. }

As a remark, optimizing over all possible estimation functions $h(\hat{x})$ sets an obstacle for numerically calculating this bound, but this problem can be addressed by fixing $h(\hat{x})$ as a particular one. For example, it is a reasonable choice to set $h(\hat{x})$ as the minimum mean squared error (MMSE) estimator of $S$ given $X$ if the distortion measure function is linear.

\section{Gaussian Approximation  for Block-wise Transmission} \label{Sec_Gaussian_Apprx}

In this section,  we focus on the case where the source is an i.i.d. sequence of length $k$, which is encoded into a sequence of length $n$ and then transmitted over a memoryless channel. Note that the distributions of source and channel are not specified in this section. 

\begin{figure}[h]
		\centering
		\includestandalone[width=0.6\textwidth]{ChannelModel2}%
		\caption{A semantic-aware communication system model equipped with an $(n,k,\mathsf{D}_s,\mathsf{D}_x)$ lossy joint source-channel code}
		\label{Sys2}
\end{figure}

Now we consider a special case of the single-shot transmission as shown in Fig. \ref{Sys2}, where each single shot of the source outputs are a $k$-length i.i.d sequence $S^k$ and $X^k$ respectively. Each single shot of channel input  is an $n$-length sequence $Y^n$, and the channel output is an $n$-length sequence $Z^n$. A more rigorous description of the source and channel model is stated as follows. 

	\begin{itemize}
		\item[(1)] The channel is stationary and memoryless. For channel input ${Y}^n=(Y_1,\dots,Y_n)$ and channel output  ${Z}^n=(Z_1,\dots,Z_n)$, the transfer probability $P_{Z^n|Y^n}=\prod_{i=1}^nP_{Z_i|Y_i}$, where $P_{Z_1|Y_1}= \ldots = P_{Z_n|Y_n}$ are identically distributed. There is a cost function $c(\cdot)$ to measure the cost of channel input, where the average cost of some sequence $y^n=(y_1,\dots,y_n)$ is defined as 
  \begin{align}
      c_n(y^n)=\frac{1}{n}\sum_{i=1}^{n} c(y_i).
  \end{align}
		\item[(2)] The source is stationary and memoryless with distribution $P_{S^kX^k}=\prod_{i=1}^kP_{S_iX_i}$, where $P_{S_1X_1}=\ldots = P_{S_kX_k}$ are identically distributed. The average distortions are defined as 
{   \begin{align}
     \mathsf{d}_s(s^k,\hat{s}^k)=\frac{1}{k}\sum_{i=1}^{k} \mathsf{d}_s(s_i,\hat{s}_i), \\
      \mathsf{d}_x(x^k,\hat{x}^k)=\frac{1}{k}\sum_{i=1}^{k} \mathsf{d}_x(x_i,\hat{x}_i), 
  \end{align}}
where $\hat{s}^k=(\hat{s}_1,\dots,\hat{s}_k)$ and $\hat{x}^k=(\hat{x}_1,\dots,\hat{x}_k)$. 
  
  Consequently, for some given distortion constraint $(\mathsf{D}_s,\mathsf{D}_x)$, the excess distortion event $\mathcal{E}_{(\mathsf{D}_s,\mathsf{D}_x)}$ in this case is defined as 
  \begin{align}
    \mathcal{E}_{(\mathsf{D}_s,\mathsf{D}_x)}=\{(s^k,x^k,\hat{s}^k,\hat{x}^k)
    \in \mathcal{S}^k \times \mathcal{X}^k \times \hat{\mathcal{S}}^k \times \hat{\mathcal{X}}^k:
    \mathsf{d}_s(s^k,\hat{s}^k)>\mathsf{D}_s \cup \mathsf{d}_x(x^k,\hat{x}^k)>\mathsf{D}_x\}.
\end{align}
Meanwhile, for the given $\mathsf{d}()$ function, the $\mathsf{d}()$-relaxed excess distortion event $\tilde{\mathcal{E}}_{(\mathsf{D}_s,\mathsf{D}_x)}$ is defined by 
  \begin{align}
    \tilde{\mathcal{E}}_{(\mathsf{D}_s,\mathsf{D}_x)}=\{(s^k,x^k,\hat{s}^k,\hat{x}^k)
    \in \mathcal{S}^k \times \mathcal{X}^k \times \hat{\mathcal{S}}^k \times \hat{\mathcal{X}}^k:
   \mathsf{d}(\mathsf{d}_s(s^k,\hat{s}^k),\mathsf{d}_x(x^k,\hat{x}^k)) >  \mathsf{d}(\mathsf{D}_s,\mathsf{D}_x) \}.
\end{align}
		\item[(3)] Additionally, we impose the following restrictions: the expected value, the second, and the third moment of $\imath_{Z;Y} (Z; Y)$ and $\jmath_{S,X}(S, X) $ are finite.
	\end{itemize}

For this case, we define the achievability of a 
$(k,n,\rho,\mathsf{D}_s,\mathsf{D}_x,\epsilon)$ code, and the achievability of a $(k,n,\rho,\mathsf{D}_s,\mathsf{D}_x,\mathsf{d}(),\epsilon)$ code below.
\begin{definition}[An achievable $(k,n,\rho,\mathsf{D}_s,\mathsf{D}_x,\epsilon)$ code]
			For the given source $P_{S^kX^k}$ with alphabet $\mathcal{S}^k\times \mathcal{X}^k$, the channel $P_{Z^n|Y^n}$ with channel input-output alphabet $\mathcal{Y}^n\times \mathcal{Z}^n$, the cost function $c(\cdot)$, the cost constraint $\rho$,  and the distortion measures $\mathsf{d}_s(s^k, \hat{s}^k)$ and $\mathsf{d}_x(x^k, \hat{x}^k)$, 		
			a $(k,n,\rho,\mathsf{D}_s,\mathsf{D}_x,\epsilon)$ source-channel code { consists of} encoding function $f_k: \mathcal{X}^k\rightarrow { \mathcal{Y}^n}$, decoding function $g_{n,S}:\mathcal{Z}^n\rightarrow \hat{\mathcal{S}}^k$ and decoding function $g_{n,O}:\mathcal{Z}^n\rightarrow \hat{\mathcal{X}}^k$, such that 
   \begin{itemize}
       \item[(1)] { the expected average cost of channel input satisfies the cost constraint, namely $\mathbb{E}[c_n(Y^n)] \leq \rho\sigma_1^2$};
       \item[(2)] the excess-distortion probability satisfies  $P\{\mathcal{E}_{(\mathsf{D}_s,\mathsf{D}_x)}\} \leq \epsilon$.
   \end{itemize}
		\end{definition}	

\begin{definition}[An achievable $(k,n,\rho,\mathsf{D}_s,\mathsf{D}_x,\mathsf{d}(),\epsilon)$ code]
			For the given source $P_{S^kX^k}$ with alphabet $\mathcal{S}^k\times \mathcal{X}^k$, the channel $P_{Z^n|Y^n}$ with channel input-output alphabet $\mathcal{Y}^n\times \mathcal{Z}^n$, the cost function $c(\cdot)$, the cost constraint $\rho$, and the distortion measures $\mathsf{d}_s(s^k, \hat{s}^k)$ and $\mathsf{d}_x(x^k, \hat{x}^k)$, 		
			a $(k,n,\rho,\mathsf{D}_s,\mathsf{D}_x,\mathsf{d}(),\epsilon)$ source-channel code consists of encoding function $f_k: \mathcal{X}^k\rightarrow \mathcal{Y}^n$, decoding function $g_{n,S}:\mathcal{Z}^n\rightarrow \hat{\mathcal{S}}^k$ and decoding function $g_{n,O}:\mathcal{Z}^n\rightarrow \hat{\mathcal{X}}^k$, such that 
   \begin{itemize}
       \item[(1)] the expected average cost of channel input satisfies the cost constraint, namely $\mathbb{E}[c_n(Y^n)] \leq \rho \sigma_1^2$, where $\sigma_1^2$ is the variance of channel noise;
       \item[(2)] the excess-distortion probability satisfies  $P\{\tilde{\mathcal{E}}_{(\mathsf{D}_s,\mathsf{D}_x)}\} \leq \epsilon$.
   \end{itemize}
\end{definition}

Apparently, the achievability condition of a $(k,n,\rho,\mathsf{D}_s,\mathsf{D}_x,\mathsf{d}(),\epsilon)$ code is weaker than that of a $(k,n,\rho,\mathsf{D}_s,\mathsf{D}_x,\epsilon)$ code. Therefore, if a code is $(k,n,\rho,\mathsf{D}_s,\mathsf{D}_x,\epsilon)$-achievable,  it is definitely $(k,n,\rho,\mathsf{D}_s,\mathsf{D}_x,\mathsf{d}(),\epsilon)$ achievable. Thus, we have the following lemma. 

\begin{lemma}\label{relaxed-epsilon-bound}
    For a given tuple $(k,n,\rho,\mathsf{D}_s,\mathsf{D}_x)$ and $\mathsf{d}()$ function, the minimum achievable excess-distortion probability of a $(k,n,\rho,\mathsf{D}_s,\mathsf{D}_x,\epsilon)$ code is no less than that of a $(k,n,\rho,\mathsf{D}_s,\mathsf{D}_x,\mathsf{d}(),\epsilon)$ code. Therefore, 
    \begin{align}
        \inf \{\epsilon: (k,n,\rho,\mathsf{D}_s,\mathsf{D}_x,\epsilon)\textit{ is achievable}\} 
        \geq &\inf \{\epsilon: (k,n,\rho,\mathsf{D}_s,\mathsf{D}_x,\mathsf{d}(),\epsilon)\textit{ is achievable}\} .
    \end{align}
\end{lemma}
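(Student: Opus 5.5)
The plan is a direct set-inclusion argument, lifting Lemma \ref{Lem_d_function} from the one-shot setting to the block setting. Fix a tuple $(k,n,\rho,\mathsf{D}_s,\mathsf{D}_x)$ and a $\mathsf{d}()$ function that is monotonically non-decreasing in each argument. Take any code $(f_k,g_{n,S},g_{n,O})$ that is $(k,n,\rho,\mathsf{D}_s,\mathsf{D}_x,\epsilon)$-achievable. I will show that the same code is $(k,n,\rho,\mathsf{D}_s,\mathsf{D}_x,\mathsf{d}(),\epsilon)$-achievable. Once this holds, the feasible set of $\epsilon$ on the left-hand side is contained in the feasible set on the right-hand side, and the infimum over a larger set can only be smaller, which is exactly the claimed inequality.

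The cost condition (1) is identical in the two definitions: both require $\mathbb{E}[c_n(Y^n)]\le\rho\sigma_1^2$, and the encoder is unchanged, so it transfers verbatim. For condition (2), I will show the pointwise inclusion $\tilde{\mathcal{E}}_{(\mathsf{D}_s,\mathsf{D}_x)}\subseteq\mathcal{E}_{(\mathsf{D}_s,\mathsf{D}_x)}$ for the block events, arguing by contraposition. Suppose $(s^k,x^k,\hat s^k,\hat x^k)\notin\mathcal{E}_{(\mathsf{D}_s,\mathsf{D}_x)}$, so that $\mathsf{d}_s(s^k,\hat s^k)\le\mathsf{D}_s$ and $\mathsf{d}_x(x^k,\hat x^k)\le\mathsf{D}_x$. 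Monotonicity in each coordinate, applied one coordinate at a time, gives
\begin{align*}
\mathsf{d}(\mathsf{d}_s(s^k,\hat s^k),\mathsf{d}_x(x^k,\hat x^k))\le \mathsf{d}(\mathsf{D}_s,\mathsf{d}_x(x^k,\hat x^k))\le\mathsf{d}(\mathsf{D}_s,\mathsf{D}_x).
\end{align*}
Hence the tuple is not in $\tilde{\mathcal{E}}_{(\mathsf{D}_s,\mathsf{D}_x)}$.

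Since this inclusion holds under the joint law induced by the code, $P\{\tilde{\mathcal{E}}_{(\mathsf{D}_s,\mathsf{D}_x)}\}\le P\{\mathcal{E}_{(\mathsf{D}_s,\mathsf{D}_x)}\}\le\epsilon$. This is precisely Lemma \ref{Lem_d_function} with the per-letter-averaged distortions in place of the single-shot ones, and that substitution is legitimate because that lemma only used monotonicity of $\mathsf{d}$.

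There is no substantive obstacle here. The only points needing care are that the cost constraint is literally the same in both definitions, and that monotonicity is applied coordinatewise so that no joint monotonicity assumption is needed. If the infimum on the left is over an empty set, its value is $+\infty$ by convention and the inequality holds trivially.
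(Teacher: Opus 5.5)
Your proposal is correct and follows the paper's route: the paper notes that the $\mathsf{d}()$-relaxed achievability condition is weaker (via the event inclusion $\tilde{\mathcal{E}}_{(\mathsf{D}_s,\mathsf{D}_x)}\subseteq\mathcal{E}_{(\mathsf{D}_s,\mathsf{D}_x)}$ of Lemma~\ref{Lem_d_function}), so every $(k,n,\rho,\mathsf{D}_s,\mathsf{D}_x,\epsilon)$-achievable code is also $(k,n,\rho,\mathsf{D}_s,\mathsf{D}_x,\mathsf{d}(),\epsilon)$-achievable, and the infimum over the larger feasible set is smaller. You go slightly further by spelling out the coordinatewise monotonicity step and checking that the cost constraint is identical in the two definitions, both of which the paper leaves implicit.
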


Lemma \ref{relaxed-epsilon-bound} shows that the excess-distortion probability can be lower-bounded by studying the relaxed excess-distortion probability in the block-wise transmission case. 
 With these definitions and Lemma \ref{relaxed-epsilon-bound}, we can establish the Gaussian approximation of converse bound on excess-distortion probability $\epsilon$ in the following theorem.
\begin{theorem}[Gaussian approximation of converse bound with $\mathsf{d}()$-functional distortion relaxation] \label{Thm_Gau_Converse1}
For the optimal  $(k, n, \rho, \mathsf{D}_s, \mathsf{D}_x, \mathsf{d}(), \epsilon)$ code, the following inequality holds,
	\begin{align}
		n{  C(\rho)}-kR_{S,X}(\mathsf{D}_s,\mathsf{D}_x)
		\leq & \sqrt{n{  V(\rho)}+k{ \tilde{V}(\rho)}}Q^{-1}(\epsilon)
		 +\theta(n),\label{Equ_NA_Cov_M1}
	\end{align}
where $\theta(n)\geq -\log n/2+\mathcal{O}(1)$, ${  C(\rho)}$ is the channel capacity  defined in \eqref{Channel_capacity}, and
\begin{itemize}
	\item[(1)] 
	{ $ V(\rho)$} is the channel dispersion given by
	\begin{align}
		 V(\rho)=& {\rm{Var}}\big[\imath_{Y;Z}(Y;Z) \big]. \label{channel-dispersion}
	\end{align}
	\item[(2)] { $\tilde{V}(\rho)$} is the source dispersion given by
	\begin{align}
		{ \tilde{V}(\rho)}=&  {\rm{Var}}\big[\jmath_{S,X}(S, X)\big], \label{GN_source-dispersion}
	\end{align}
	and $R_{S,X}(\mathsf{D}_s,\mathsf{D}_x)$ is the rate distortion function defined in \eqref{Equ_R_d}.	
\end{itemize}
\end{theorem}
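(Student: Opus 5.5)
We would follow the classical route of \cite[Theorem 10]{Kostina-lossyJSCC-13}, using the one-shot converse of Theorem \ref{Thm_converse} with the $\mathsf{d}()$-functional relaxation in place of the single-distortion tilted information. By Lemma \ref{relaxed-epsilon-bound} it suffices to lower bound the excess-distortion probability of an optimal $(k,n,\rho,\mathsf{D}_s,\mathsf{D}_x,\mathsf{d}(),\epsilon)$ code. First, we apply Theorem \ref{Thm_converse} to the super-symbol $(S^k,X^k)\mapsto Y^n\mapsto Z^n$. The i.i.d. source and the additive distortion $\mathsf{d}_s(s^k,\hat s^k)$, $\mathsf{d}_x(x^k,\hat x^k)$ make the $\mathsf{d}()$-relaxed rate distortion function single-letterize in the linear case $\mathsf{d}=\mathsf{d}_L$. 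For a general monotone $\mathsf{d}$ we would linearize it around the optimal point, i.e.\ use the supporting hyperplane with Lagrange multiplier $\lambda^\star$. With this, the optimal output distribution $P_{\hat S^{k\star}\hat X^{k\star}}$ is a product distribution and
\begin{align*}
\jmath_{S^k,X^k}(s^k,x^k)=\sum_{i=1}^k \jmath_{S,X}(s_i,x_i).
\end{align*}

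Second, we choose the auxiliary output distribution $P_{Z^n}$ in \eqref{Equ_Converse}. Following \cite{Kostina-lossyJSCC-13,Polyanskiy-10}, we take $P_{Z^n}=P_{Z^\star}^{\otimes n}$, the capacity-achieving output distribution under cost $\rho$. For each input $y^n$, the channel term $\imath_{Y^n;Z^n}(y^n;Z^n)=\sum_{j}\imath_{Y;Z}(y_j;Z_j)$ is then a sum of independent terms given $y^n$. Its conditional mean is at most $nC(\rho)$ up to the Lagrangian cost correction $\lambda_c(c_n(y^n)-\rho)$, which is handled by splitting according to the type or cost of $y^n$ as in \cite{Polyanskiy-10}. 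Its conditional variance is close to $nV(\rho)$. The source term is independent of $Z^n$ given $Y^n$, but it depends on $X^k$ only through the source. The difference $\sum_i\jmath_{S,X}(S_i,X_i)-\imath_{Y^n;Z^n}(Y^n;Z^n)$ is therefore a sum of $k$ i.i.d. terms and $n$ channel terms that are conditionally independent given $Y^n$. For any fixed encoder, this sum has mean at least $kR_{S,X}(\mathsf{D}_s,\mathsf{D}_x)-nC(\rho)$ and variance at least $nV(\rho)+k\tilde V(\rho)$ up to $O(1)$ corrections.

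Third, we apply the Berry--Esseen theorem, using the finite third moments from restriction (3). This gives
\begin{align*}
\mathbb{P}\big[\jmath-\imath\geq\gamma\big]\ \geq\ Q\!\left(\frac{nC(\rho)-kR_{S,X}(\mathsf{D}_s,\mathsf{D}_x)+\gamma}{\sqrt{nV(\rho)+k\tilde V(\rho)}}\right)-\frac{B}{\sqrt{n}}.
\end{align*}
We then choose $\gamma=\tfrac12\log n$, so that $\exp(-\gamma)=1/\sqrt n$ is absorbed with the Berry--Esseen remainder. Inverting $Q$ and Taylor expanding $Q^{-1}$ around $\epsilon$ yields \eqref{Equ_NA_Cov_M1}, with $\theta(n)$ collecting $-\tfrac12\log n$ and the $O(1)$ terms.

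The main obstacle is the infimum over $P_{Y|X}$, i.e.\ over arbitrary, possibly non-product, encoders. The Berry--Esseen bound must hold uniformly over channel input sequences, not just for i.i.d. inputs. We would handle this as in \cite[Theorem 10]{Kostina-lossyJSCC-13}: condition on $Y^n=y^n$ and bound the conditional probability uniformly over power-type classes of $y^n$. The cost constraint holds only in expectation, so we would use the cost-dependent version of the information density, together with a Markov/averaging argument over cost levels. A secondary issue is ensuring $\lambda^\star$ and $\tilde V(\rho)$ are well defined for nonlinear $\mathsf{d}$, which we would sidestep by arguing via $\mathsf{d}_L$ with a suitable $\alpha$.
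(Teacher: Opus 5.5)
Your overall route (Theorem \ref{Thm_converse} on blocks, additive tilted information, $P_{Z^n}=P_{Z^\star}^{\otimes n}$, Berry--Esseen with $\gamma\approx\frac12\log n$) is the one the paper imports from \cite{Kostina-lossyJSCC-13}. However, the step that carries the weight does not hold as you state it.

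In your second and third steps you treat $\sum_{i=1}^k\jmath_{S,X}(S_i,X_i)-\imath_{Y^n;Z^n}(Y^n;Z^n)$ as a sum of independent pieces with variance at least $nV(\rho)+k\tilde V(\rho)$ for every encoder, and you apply Berry--Esseen to it. But the source block and the channel block are coupled through $Y^n$. Given $Y^n=y^n$ the channel terms are independent, but the source terms then follow $P_{S^kX^k|Y^n=y^n}$, which depends on the encoder and is not i.i.d. Unconditionally, the encoder can correlate $\jmath$ with $\imath$: use capacity-achieving-type inputs when $\jmath$ is large and a slightly worse type when it is small. This can push the variance of the difference below $nV+k\tilde V$. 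Your fallback of conditioning on $Y^n=y^n$ and bounding uniformly over power-type classes fails for the same reason.

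The decoupling that works, and that the paper uses via \cite[Lemma 1]{Kostina-lossyJSCC-13}, conditions on the \emph{source} realization. By the Markov chain $(S^k,X^k)-Y^n-Z^n$,
\begin{align*}
&\mathbb{P}\big[\jmath_{S^k,X^k}(S^k,X^k)-\imath_{Y^n;Z^n}(Y^n;Z^n)\ge\gamma \,\big|\, S^k=s^k,X^k=x^k\big]\\
&\qquad\ge \inf_{y^n}\mathbb{P}\big[\imath_{Y^n;Z^n}(y^n;Z^n)\le \jmath_{S^k,X^k}(s^k,x^k)-\gamma \,\big|\, Y^n=y^n\big],
\end{align*}
so the encoder disappears. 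A channel-side estimate that is uniform in $y^n$ (the $K/\sqrt{n}$ loss in \eqref{Equ_Converse_4}) then lets you replace the channel block by a reference one that is independent of the source. Only at that point do you have $n+k$ genuinely independent terms, to which Berry--Esseen applies with variance $nV(\rho)+k\tilde V(\rho)$.

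That uniform estimate is available only when the threshold $\jmath_{S^k,X^k}(s^k,x^k)-\gamma$ lies within $n\bar\Delta$ of $nC(\rho)$. This is why the paper first discards rates with $k/n\ge C(\rho)/(R_{S,X}(\mathsf{D}_s,\mathsf{D}_x)-3\tau)$ and restricts to the typical set $\mathcal{T}_{k,n}$, paying $O(1/k)$ by Chebyshev. Your outline has no such localization.

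Separately, the ``Markov/averaging argument over cost levels'' you propose for the expected cost constraint cannot yield $C(\rho)$. Under only $\mathbb{E}[c_n(Y^n)]\le\rho\sigma_1^2$, an encoder can operate at SNR $\rho/(1-\delta)$ on a set of source realizations of probability $1-\delta$ and send nothing elsewhere. Over the AWGN channel this gives excess-distortion probability close to $\delta$ at rates with $nC(\rho)<kR_{S,X}(\mathsf{D}_s,\mathsf{D}_x)<nC(\rho/(1-\delta))$, where $kR_{S,X}-nC(\rho)$ grows linearly in $n$. At such rates Corollary \ref{Converse_GA_err} would force $\epsilon\to1$. So the infimum over $y^n$ above must run over inputs satisfying a per-codeword constraint $c_n(y^n)\le\rho\sigma_1^2$, as in \cite{Kostina-lossyJSCC-13}. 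The paper's appendix imports that argument without addressing the cost constraint; the fix belongs in the hypotheses, not in an averaging step.
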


The proof of Theorem \ref{Thm_Gau_Converse1} can be found in Appendix \ref{appendix-proof of gaussian approximation linear}. Similar to the result in \cite{Kostina-lossyJSCC-13}, Theorem \ref{Thm_Gau_Converse1} shows that the excess-distortion probability can be approximated as the tail probability of a Gaussian distribution. Meanwhile, this probability can be lower-bounded by the gap between channel capacity and rate distortion, with a coefficient determined by the channel dispersion and source dispersion.

{ \begin{Remark}
Theorem \ref{Thm_Gau_Converse1} is a reasonable result since the essence of nonasymptotic bound depends on the central limit theorem and Chebyshev inequality, which hold regardless of whether it is a classic JSCC model with one source or JSCC model with hierarchical sources. However, the nonasymptotic bound for JSCC model with hierarchical sources is not a straightforward extension of that for classic JSCC model for two reasons. 
First, the distortion now becomes a multivariate random variable with the distortion of semantic source and the distortion of observable source respectively. And the excess distortion event now is a union of two correlated error events which is difficult to analyze its probability. 
Second, it is a challenge to establish Gaussian approximation even by using the proposed $\mathsf{d}()$-tilted information, since the parameter of the $\mathsf{d}()$ function should be optimized.
An in-depth discussion of these two points can be found in the next subsection, where we derive the bounds for  GMS transmitting over AWGN channels. 
\end{Remark}}

According to Theorem \ref{Thm_Gau_Converse1}, we can derive the lower bound of excess-distortion probability $\epsilon$ as follows. 
\begin{Corollary}\label{Converse_GA_err}
For any achievable $(k,n,\rho,\mathsf{D}_s,\mathsf{D}_x,\epsilon)$ code, the error probability is lower-bounded by
\begin{align}
	\epsilon  \geq & Q\left( \frac{n C(\rho)-kR_{S,X}(\mathsf{D}_s,\mathsf{D}_x)-\theta(n)}{\sqrt{n V(\rho)+k\tilde{V}(\rho)} } 
 \right).\label{Equ_Corr1}
\end{align}
\end{Corollary}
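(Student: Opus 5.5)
The plan is to read Corollary \ref{Converse_GA_err} as a direct inversion of Theorem \ref{Thm_Gau_Converse1}, with Lemma \ref{relaxed-epsilon-bound} used to pass from the relaxed code class to the original one.

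\emph{Step 1 (reduce to the relaxed class).} Fix any achievable $(k,n,\rho,\mathsf{D}_s,\mathsf{D}_x,\epsilon)$ code and any monotone $\mathsf{d}()$.
\begin{itemize}
\item By Lemma \ref{Lem_d_function}, the relaxed event $\tilde{\mathcal{E}}_{(\mathsf{D}_s,\mathsf{D}_x)}$ is contained in $\mathcal{E}_{(\mathsf{D}_s,\mathsf{D}_x)}$ for block-wise averaged distortions as well.
\item Hence the same code is a $(k,n,\rho,\mathsf{D}_s,\mathsf{D}_x,\mathsf{d}(),\epsilon)$ code.
\item By Lemma \ref{relaxed-epsilon-bound}, $\epsilon$ is at least the minimal relaxed excess-distortion probability $\epsilon^\star$.
\end{itemize}
So it suffices to lower bound the $\epsilon$ of an optimal relaxed code. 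Theorem \ref{Thm_Gau_Converse1} applies to exactly such a code, and any larger $\epsilon$ only weakens the conclusion.

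\emph{Step 2 (invert the Gaussian approximation).} Inequality \eqref{Equ_NA_Cov_M1} gives
\begin{align*}
\frac{nC(\rho)-kR_{S,X}(\mathsf{D}_s,\mathsf{D}_x)-\theta(n)}{\sqrt{nV(\rho)+k\tilde V(\rho)}}\le Q^{-1}(\epsilon).
\end{align*}
Because $Q(\cdot)$ is strictly decreasing, we apply $Q$ to both sides and flip the inequality. This yields $\epsilon \ge Q\big((nC(\rho)-kR_{S,X}-\theta(n))/\sqrt{nV+k\tilde V}\big)$, which is \eqref{Equ_Corr1}.

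Two degenerate cases need checking.
\begin{itemize}
\item If $nV(\rho)+k\tilde V(\rho)=0$, the bound is read as a limit. Alternatively, we can assume positive dispersions, as in assumption (3) of the block model.
\item For $\epsilon\in\{0,1\}$, $Q^{-1}$ is $\pm\infty$ and the statement is trivial or vacuous.
\end{itemize}

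\emph{Main obstacle.} The delicate point is making sure $\theta(n)$ in the corollary is the same remainder term as in Theorem \ref{Thm_Gau_Converse1}. That theorem only guarantees $\theta(n)\ge -\tfrac12\log n+\mathcal{O}(1)$, so the inversion must carry that same $\theta(n)$ through unchanged. We must also confirm that $\epsilon$ in Theorem \ref{Thm_Gau_Converse1} can be any relaxed-achievable value, not only the optimal one. I would handle both at once: apply monotonicity to $\epsilon^\star\le\epsilon$ and note that $Q$ evaluated at the fixed argument does not depend on the code.

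Finally, since $\mathsf{d}()$ (e.g.\ $\alpha$ in $\mathsf{d}_L$) is arbitrary, one may take the supremum over it. This is optional and not needed for the stated corollary.
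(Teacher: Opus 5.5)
Your route (pass to the relaxed class via Lemmas \ref{Lem_d_function} and \ref{relaxed-epsilon-bound}, then invert the Gaussian-approximation converse) matches the paper's one-line derivation (``According to Theorem \ref{Thm_Gau_Converse1}\ldots''). However, Step~2 as written fails because the inversion goes the wrong way. Set
\[
x \triangleq \frac{nC(\rho)-kR_{S,X}(\mathsf{D}_s,\mathsf{D}_x)-\theta(n)}{\sqrt{nV(\rho)+k\tilde{V}(\rho)}} .
\]
From $x\le Q^{-1}(\epsilon)$, applying the decreasing map $Q$ gives $Q(x)\ge Q(Q^{-1}(\epsilon))=\epsilon$, i.e.\ $\epsilon\le Q(x)$. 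That is an upper bound on $\epsilon$, not \eqref{Equ_Corr1}. Moreover, the inequality $nC-kR_{S,X}\le\sqrt{nV+k\tilde{V}}\,Q^{-1}(\epsilon)+\theta(n)$ carries no information when $\theta(n)$ is only known to satisfy $\theta(n)\ge-\frac12\log n+\mathcal{O}(1)$, since $\theta(n)$ can be taken large. The corollary is equivalent to the \emph{reversed} inequality
\[
nC(\rho)-kR_{S,X}(\mathsf{D}_s,\mathsf{D}_x)\ \ge\ \sqrt{nV(\rho)+k\tilde{V}(\rho)}\,Q^{-1}(\epsilon)+\theta(n),\qquad \theta(n)\ge -\tfrac12\log n+\mathcal{O}(1).
\]
This is the genuine converse, as in \cite{Kostina-lossyJSCC-13}; the direction printed in \eqref{Equ_NA_Cov_M1} is a slip. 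The direction of this inequality, not the bookkeeping of $\theta(n)$, is the real obstacle, and your proposal misses it.

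To repair the argument, use the converse content that Appendix \ref{appendix-proof of gaussian approximation linear} actually proves (see \eqref{Equ_Converse_Cons}--\eqref{Equ_Converse_7}). Suppose $nC-kR_{S,X}\le\sqrt{nV+k\tilde{V}}\,Q^{-1}(\epsilon_{k,n})-\gamma$, with $\gamma=\frac12\log(n+1)$ and $\epsilon_{k,n}=\epsilon+\mathcal{O}(1/\sqrt{n})+\mathcal{O}(1/k)$. Then every relaxed code has excess-distortion probability at least $\epsilon$.

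By the contrapositive, any relaxed code with smaller error forces
\[
nC-kR_{S,X}>\sqrt{nV+k\tilde{V}}\,Q^{-1}(\epsilon_{k,n})-\tfrac12\log(n+1).
\]
By your Step~1 this applies to any original code as well, perturbing $\epsilon$ slightly if needed. For $\epsilon$ bounded away from $0$ and $1$ and $k,n$ of the same order, a first-order expansion of $Q^{-1}$ gives $\sqrt{nV+k\tilde{V}}\,\bigl(Q^{-1}(\epsilon)-Q^{-1}(\epsilon_{k,n})\bigr)=\mathcal{O}(1)$, which is absorbed into $\theta(n)$. Only then does applying $Q$ yield \eqref{Equ_Corr1}. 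Since its right side is increasing in $\theta(n)$, you may substitute the explicit value $-\frac12\log n+\mathcal{O}(1)$.

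Equivalently, if $\theta(n)$ denotes the exact remainder of the optimal relaxed code, \eqref{Equ_NA_Cov_M1} holds with equality, inversion gives $\epsilon^\star=Q(x)$, and $\epsilon\ge\epsilon^\star$ finishes. Even then, the informative lower bound on $\theta(n)$ must come from that contrapositive. Your Step~1 and the optional optimization over $\mathsf{d}()$ are fine.
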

{\color{black}Next, by Theorem \ref{achievability bound-JSCC2}, we can derive the Gaussian approximation of achievability bound as follows. }

{ 
\begin{theorem}[Gaussian approximation of achievability bound with superposition coding] \label{Thm_Gau_Ach1}
	Denote $\upsilon_1 \triangleq \sqrt{n V_1(\rho) + k \tilde{V}_1(\rho)}$, $\upsilon_2\triangleq \sqrt{n V_2(\rho) + k \tilde{V}_2(\rho)}$, and $\theta(k)\leq \log k+\log \log k+\mathcal{O}(1)$.
	For the optimal  $(k,n,\rho,\mathsf{D}_s,\mathsf{D}_x,\mathsf{d}(),\epsilon)$ code, the following bounds hold.

	\begin{itemize}
		\item[(1)] If $\frac{\upsilon_1}{\upsilon_2}< \exp\big\{-\frac{(n C(\rho)-k R_{S,X}(\mathsf{D}_s,\mathsf{D}_x))^2}{2}\big\} < \frac{\upsilon_2}{\upsilon_1}$, 
		\begin{align}
			\epsilon 
			\leq Q\left(\frac{n C(\rho)-k R_{S,X}(\mathsf{D}_s,\mathsf{D}_x)+\theta(k)}{\upsilon_2}  \right).
		\end{align}  		
		
		\item[(2)] If $\frac{\upsilon_2}{\upsilon_1}< \exp\big\{-\frac{(n C(\rho)-k R_{S,X}(\mathsf{D}_s,\mathsf{D}_x))^2}{2}\big\}< \frac{\upsilon_1}{\upsilon_2}$, 
		\begin{align}
			\epsilon 
			\leq Q\left(\frac{n C(\rho)-k R_{S,X}(\mathsf{D}_s,\mathsf{D}_x)+\theta(k)}{\upsilon_1}  \right).
		\end{align}  
		
		\item[(3)] If $\exp\big\{-\frac{(n C(\rho)-k R_{S,X}(\mathsf{D}_s,\mathsf{D}_x))^2}{2}\big\} <\min \Big\{ \frac{\upsilon_2}{\upsilon_1},  \frac{\upsilon_1}{\upsilon_2} \Big\}$, or $ \exp\big\{-\frac{(n C(\rho)-k R_{S,X}(\mathsf{D}_s,\mathsf{D}_x))^2}{2}\big\}>\max \Big\{ \frac{\upsilon_2}{\upsilon_1},  \frac{\upsilon_1}{\upsilon_2} \Big\}$, 
		\begin{align}
			\epsilon 
			\leq  Q\left(\frac{T_1^\star+\theta(k)}{\upsilon_1}  \right)
			+Q\left(\frac{n C(\rho)-k R_{S,X}(\mathsf{D}_s,\mathsf{D}_x)-T_1^\star+\theta(k)}{\upsilon_2}  \right),		
		\end{align}   
		where $T_1^\star$ is the solution to
		\begin{align}
			\frac{1}{\upsilon_1} f\left(\frac{T_1^\star}{\upsilon_1}  \right)
			-\frac{1}{\upsilon_2} f\left(\frac{n C(\rho)-k R_{S,X}(\mathsf{D}_s,\mathsf{D}_x)-T_1^\star}{\upsilon_2}  \right)=0. \label{Equ_Inner_C31}
		\end{align} 
		and $f(x)=\frac{1}{ \sqrt{2\pi} } e^{-\frac{x^2}{2}}$.  
	\end{itemize} 
	The corresponding definitions are given as follows.
	\begin{itemize}
		\item[(1)] 	$\upsilon_1 \triangleq \sqrt{n V_1(\rho) + k \tilde{V}_1(\rho)}$ and $\upsilon_2\triangleq \sqrt{n V_2(\rho) + k \tilde{V}_2(\rho)}$;
		\item[(2)]  $V_1(\rho)$ and $V_2(\rho)$ are the channel dispersions given by
		\begin{align}
			V_1(\rho)=& {\rm{Var}}\big[\imath_{Y_1;Z}(Y_1;Z) \big],  \label{channel-dispersion1} \\
			V_2(\rho)=&{\rm{Var}}\big[\imath_{Y_2;Z}(Y_2;Z|Y_1)  \big],  \label{channel-dispersion2} 
		\end{align}
		and $ C(\rho)$ is the channel capacity;
		\item[(3)] $R_{S,X}(\mathsf{D}_s,\mathsf{D}_x)$ is the rate distortion function defined in \eqref{Equ_R_d}.
		\item[(4)] $\tilde{V}(\rho)$ is the source dispersion given by
		\begin{align}
			\tilde{V}_1(\rho)=& {\rm{Var}}\big[\jmath_X(X, \hat{X})\big], \\
			\tilde{V}_2(\rho)=& {\rm{Var}}\big[\jmath_X(X, \hat{S})|\hat{X}=\hat{x}\big],
		\end{align}
		where $\jmath_X(x, \hat{s})$ is defined in \eqref{Def_jxs}, and $\jmath_X(x, \hat{x})$ is defined as follows
		\begin{align}
			&\jmath_{X}(x, \hat{x})
			\triangleq\imath_{X; \hat{X}} (x; \hat{x})
			+\lambda_x\big( \mathsf{d}_x({x}, {\hat{x}})-\mathsf{d}(\mathsf{D}_x)\big), \label{Equ_Def_j1} \\
			& \imath_{X; \hat{X}} (x; \hat{x})
			\triangleq \log \frac{ {\textnormal{d}} P_{ X| \hat{X}=\hat{x}} }{ {\textnormal{d}} P_{X}} (x),\\
			& \lambda_x \triangleq \frac{\partial R_{RD}(\mathsf{D}_x) }{ \partial \mathsf{D}_x}. 
		\end{align}
		and $R_{RD}(\mathsf{D}_x)$  is defined in \eqref{Def_R_RD}.		
	\end{itemize} 
\end{theorem}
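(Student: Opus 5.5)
We start from Theorem \ref{achievability bound-JSCC2} applied to the block setting, with $P_{Y_1Y_2}$ and $P_{\hat{S}\hat{X}}$ chosen i.i.d. (capacity-achieving superposition inputs and the product of the optimizers of $R_{RD}$ and of the conditional indirect problem given $\hat{X}$). We then bound the four terms in \eqref{Equ_Inner_GA1} by Gaussian tails. The four terms split into two layers. The first layer consists of the first and second terms, namely the channel error of $Y_1$ and the covering failure for $X$. The second layer consists of the third and fourth terms, namely the channel error of $Y_2$ given $Y_1$ and the conditional indirect covering of $S$ given $(X,\hat{X})$.

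\textbf{First layer.} For the channel term we use Berry--Esseen on $\sum_i \imath_{Y_1;Z}(Y_{1i};Z_i)$. For the source term we use the standard lower bound $P_{\hat{X}^k}(B_{\mathsf{D}_x}(x^k))\geq \exp\{-\jmath_X(x^k,\hat{X})-\tfrac12\log k-\mathcal{O}(1)\}$, as in Kostina--Verd\'u, together with $(1-p)^{M_1}\leq e^{-M_1p}$. Following the proof of \cite[Theorem 10]{Kostina-lossyJSCC-13}, these two terms combine into one event
\begin{align*}
\Big\{\textstyle\sum_i \jmath_X(X_i,\hat{X}) - \sum_i \imath_{Y_1;Z}(Y_{1i};Z_i) \geq -T_1 - \theta(k)\Big\},
\end{align*}
where $T_1 \triangleq nC_1 - kR_1$ is the first-layer margin determined by the choice of $\log M_1$. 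The summands on the two sides are independent, so the variances add. A joint Berry--Esseen step then gives the bound $Q((T_1+\theta(k))/\upsilon_1)$, with $\theta(k)\leq\log k+\log\log k+\mathcal{O}(1)$ absorbing the $\log\log k$ loss that comes from selecting $\log M_1$ in the covering lemma.

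\textbf{Second layer.} Here we proceed analogously, with $\pi(x,\hat{s}|\hat{x})$ bounded through $\jmath_X(x,\hat{S})$ conditioned on $\hat{X}$. The only requirement is that the $T(X)\sum P(\hat{x})$ weighting be at most a constant mixture weight, which holds because $T(X)P_{\hat{X}}(B_{\mathsf{D}_x})\leq 1$. This yields $Q((T_2+\theta(k))/\upsilon_2)$ with $T_2=nC_2-kR_2$.

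\textbf{Optimizing the split.} By chain rules, $C_1+C_2=C(\rho)$ (superposition achieves capacity). Likewise $R_1+R_2$ equals $R_{S,X}(\mathsf{D}_s,\mathsf{D}_x)$ for the linear relaxation at the optimal $\alpha$, via successive refinement of $X$ then $S$. Hence $T_1+T_2=nC(\rho)-kR_{S,X}$, and the constraint $M_1M_2=M$ turns the infimum in \eqref{Equ_Inner_GA1} into a one-dimensional minimization of
\begin{align*}
g(T_1)=Q\Big(\tfrac{T_1+\theta}{\upsilon_1}\Big)+Q\Big(\tfrac{\Delta-T_1+\theta}{\upsilon_2}\Big),\qquad \Delta = nC(\rho)-kR_{S,X}.
\end{align*}
Setting $g'(T_1)=0$ gives exactly \eqref{Equ_Inner_C31}. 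Whether an interior solution is a minimum or a maximum depends on how $\exp\{-\Delta^2/2\}$ compares with $\upsilon_1/\upsilon_2$, obtained by writing out the density-ratio equation $\upsilon_2/\upsilon_1=\exp\{-x_1^2/2+x_2^2/2\}$. When the interior stationary point is a maximum, as in cases (1) and (2), the minimum lies on the boundary $T_1\to\infty$ or $T_1\to -\infty$, i.e. $M_1$ or $M_2$ is degenerate. There one $Q$ vanishes and the other becomes $Q((\Delta+\theta)/\upsilon_{2})$ or $Q((\Delta+\theta)/\upsilon_{1})$, respectively. Otherwise, as in case (3), the interior point $T_1^\star$ is optimal.

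\textbf{Main obstacle.} I expect the hardest step to be the second-layer covering bound. The conditional tilted information $\jmath_X(X,\hat{S})$ given $\hat{X}$ must be controlled uniformly over the random core codeword, with the selection bias introduced by picking the first $\hat{x}$ inside $B_{\mathsf{D}_x}$. My first attempt would bound that bias by the $T(X)$ factor and then apply a conditional Berry--Esseen under the finite third-moment assumption (3).
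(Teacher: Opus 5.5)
\textbf{The optimization step for cases (1)--(2) fails.} The split is coupled through $T_2=\Delta-T_1$.

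\emph{The boundary limits are not single-$Q$ bounds.} Sending $T_1\to+\infty$ drives the second argument to $-\infty$, so $Q((\Delta-T_1+\theta)/\upsilon_2)\to 1$. The same happens at $T_1\to-\infty$. Hence $g\to 1$ at both ends, never $Q((\Delta+\theta)/\upsilon_i)$.

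\emph{The minimizer is always interior.} For $\Delta>0$, the balanced split $T_1=\Delta\upsilon_1/(\upsilon_1+\upsilon_2)$ gives $g=2Q(\Delta/(\upsilon_1+\upsilon_2))<1$ (ignoring $\theta$). So over $\mathbb{R}$ the infimum is attained at an interior critical point. The stationarity condition $(\Delta-T_1)^2/\upsilon_2^2-T_1^2/\upsilon_1^2=2\log(\upsilon_1/\upsilon_2)$ is a quadratic with nonnegative discriminant. For $\upsilon_1\neq\upsilon_2$ it has one local minimum and one local maximum. Its root structure is governed by normalized quantities such as $\Delta/\upsilon_i$ and $\log(\upsilon_1/\upsilon_2)$. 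It can never produce the unnormalized comparison of $\exp\{-\Delta^2/2\}$ with $\upsilon_1/\upsilon_2$.

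\emph{What this means.} Done correctly, your optimization yields only the case-(3) form, in every regime. Neither the single-$Q$ bounds nor the stated case conditions follow from it.

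\emph{How the paper gets them.} The paper restricts $T_1$ to $[0,\Delta]$. It reads off the signs of $g'$ at $T_1=0$ and $T_1=\Delta$, replacing $f(0)$ by $1$ and $f(\Delta/\upsilon_i)$ by $e^{-\Delta^2/2}$; that substitution is where the conditions come from. It then takes an endpoint as the minimizer. Even a finite endpoint gives $g(0)=\tfrac12+Q(\Delta/\upsilon_2)$, though. So a genuine single-$Q$ bound needs the other layer's terms in \eqref{Equ_Inner_GA1} to vanish outright, and you would have to argue that directly. The paper invokes case (2) only in the degenerate setting $\upsilon_2=0$ of Theorem \ref{Thm_GApprx_GMS}.

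\textbf{The layer bounds need a different device than the one you invoke.} In Theorem \ref{achievability bound-JSCC2}, $M_1$ is a fixed integer, so the channel term and the covering term are separate expectations. Berry--Esseen then gives $\min_{\log M_1}\{Q((nC_1-\log M_1)/\sqrt{nV_1})+Q((\log M_1-kR_1)/\sqrt{k\tilde V_1})\}$. That is separation-type dispersion, not a single event with variance $\upsilon_1^2$.

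\emph{First layer.} Kostina--Verd\'u's joint event needs the source-dependent threshold $\log(\gamma/P_{\hat X}(B_{\mathsf{D}_x}(X)))$ inside the channel term. This is their Theorem 8, the analogue of Corollary \ref{achievability bound-simplified}. The paper imports it explicitly by setting $M_1=\lfloor\gamma_1/P_{\hat X}(B_{\mathsf{D}_x}(X))\rfloor$. You must make the same move rather than ``combine'' the fixed-$M_1$ terms.

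\emph{Second layer.} The paper does not combine terms at all. It drives the semantic covering integral to $O(1/\sqrt{k})$ using $e^{-M_2/\gamma_2}$ plus a relative-entropy tail, a typical set, and Berry--Esseen with $\log\gamma_2=kR_2+\sqrt{k\tilde V_2}Q^{-1}(1/\sqrt{k})$. It then uses $T(x)P_{\hat X}(B_{\mathsf{D}_x}(x))=1-(1-P_{\hat X}(B_{\mathsf{D}_x}(x)))^{M_1}\le 1$, as you do. The whole layer-2 error budget is charged to the channel term.

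\emph{Rate identity.} $R_1+R_2=R_{S,X}(\mathsf{D}_s,\mathsf{D}_x)$ is load-bearing for an achievability claim, since in general $R_{S,X}\le\bar R_{S,X}\le R_1+R_2$. Your successive-refinement remark asserts this identity rather than proving it, and the paper simply posits it.
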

 \begin{IEEEproof}
 	See Appendix \ref{Appx_GN_Superposition}.  
 	Applying the nonasymptotic achievability bound of Theorem  \ref{achievability bound-JSCC2}, the excess-distortion probability is expressed as the sum of channel information density and source covering terms for the two layers. Each term is analyzed using the Berry-Esseen theorem, leading to the constraints $nC_1(P)-k R_1(\mathsf{D}_x) \geq  \sqrt{n V_1(\rho) + k \tilde{V}_1(\rho)} Q^{-1}(\epsilon_{1})+\theta_1(k)$ and $nC_2(P)-k R_2(\mathsf{D}_s)	\geq  \sqrt{n V_2(\rho) + k \tilde{V}_2(\rho)} Q^{-1}(\epsilon_{2})$. After optimizing the sum of two inverse $Q$-functions over $\epsilon_1+\epsilon_2=\epsilon$, the Gaussian approximation is derived.
 \end{IEEEproof}
}

{ Characterizing the optimal JSCC performance is inherently challenging due to the intrinsic tradeoff between the reconstructions of observable source $X$ and semantic source $S$, which means optimizing the distortion of $X$ is generally misaligned with that of $S$.		
Indeed, even in the first-order asymptotic regime, this joint optimization problem is computationally complex  \cite{Liu-semantic-2022}.
This difficulty is further exacerbated in the finite blocklength regime, where the inclusion of the dispersion term and the Gaussian $Q$-function leads to transcendental equations during optimization. 
In \cite{Ulger-23}, a similar source coding problem of the hierarchical source was studied. However, no computable optimal bounds are established. Theorem \ref{Thm_Gau_Ach1} solves the optimization problem  by decomposing the overall JSCC system into two coupled SSCC processes and deriving a constrained minimization of a sum of Gaussian $Q$-functions.
It is proved in Theorem \ref{Thm_GApprx_GMS} that the Gaussian approximation is optimal for the transmission of Gaussian memoryless sources over AWGN channels when $\hat{S}$ is a deterministic function of $\hat{X}$. }
 
{ \begin{Remark}
	For the JSCC with a single observable source, the Gaussian approximation is given by \cite[Theorem 19]{Kostina-lossyJSCC-13}
	\begin{align}
		n C(\rho)-kR_{X}(\mathsf{D}_x)
		\leq & \sqrt{n V(\rho)+k\tilde{V}(\rho)}Q^{-1}(\epsilon)
		+\theta(n),\label{Equ_NA_Kositna}
	\end{align}
	where $ C(\rho), R_X(\mathsf{D}_x),  V(\rho)$, and $\tilde{V}(\rho)$ are defined in \eqref{Channel_capacity}, \eqref{Equ_R_d}, \eqref{channel-dispersion} and \eqref{GN_source-dispersion} respectively with $\mathsf{D}_s=\infty$, and $\mathcal{O}(1)\leq \theta(n)\leq \log n+\log \log n+\mathcal{O}(1)$.	
	Compared with Gaussian approximations of the converse and the achievability bounds in Corollary \ref{Converse_GA_err} and Theorem \ref{Thm_Gau_Ach1}, the Gaussian approximation of single source given by \cite[Theorem 19]{Kostina-lossyJSCC-13} is optimal for the general case. The proposed Gaussian approximation of hierarchical sources is optimal for the transmission of a GMS over AWGN channels, which is given in Theorem \ref{Thm_GApprx_GMS}. In that case, the result in Theorem \ref{Thm_GApprx_GMS} can be specialized to \cite[Theorem 19]{Kostina-lossyJSCC-13} with $\mathsf{D}_s=\infty$.
	
\end{Remark}}

\section{Transmission of a GMS over an AWGN Channel}   \label{Sec_GMS_AWGN}

In this section,  we focus on the transmission of an $k$-length i.i.d. Gaussian memoryless source, which is encoded into a sequence of length $n$ and then transmitted over a memoryless channel. We establish the nonasymptotic achievability and converse bounds for the case of GMS and AWGN channels. Moreover, the Gaussian approximations is given and proved to be optimal. 
First, the source and channel models are introduced below. 
\begin{itemize}
    \item[(1)] The semantic source $S^k$ and the observable source $X^k$ are jointly Gaussian distributed. $X^k$ is a $k$-length i.i.d. sequence, and $X_i\sim \mathcal{N}(0,\sigma_X^2)$, $i=1, \ldots, k$. 
    Meanwhile, the semantic source $S_i$ can be written as $S_i=X_i+N_{s_i}$, where $N_{s_i}\sim \mathcal{N}(0,\sigma_{N_s}^2)$, $i=1, \ldots, k$. 
    \item[(2)]  The channel output of an AWGN channel is $Z^n=Y^n+N_1^n$, where ${Y}^n=(Y_1,\dots,Y_n)$ is the $n$-length i.i.d. channel input sequence satisfying the cost constraint $\mathbb{E}[c_n(Y^n)] \leq \rho\sigma_1^2$.    
     $N_1^n$ is the $n$-length i.i.d. noise sequence, where $N_{1_i}\sim \mathcal{N}(0,\sigma_1^2)$. The channel is stationary and memoryless, and the transfer probability $P_{Z^n|Y^n}=\prod_{i=1}^nP_{Z_i|Y_i}$, where $P_{Z_1|Y_1}= \ldots = P_{Z_n|Y_n}$ are identically distributed.     
    
    \item[(3)]   The mean squared error is used as the distortion measure function, i.e. 
    \begin{align}
        \mathsf{d}_x(x^k,\hat{x}^k)=\frac{1}{k}\sum_{i=1}^{k}\mathsf{d}_x(x_i-\hat{x}_i)^2,\\
        \mathsf{d}_s(s^k,\hat{s}^k)=\frac{1}{k}\sum_{i=1}^{k}\mathsf{d}_s(s_i-\hat{s}_i)^2.
    \end{align}
\end{itemize}

For such a transmission of a GMS over an AWGN Channel, the nonasymptotic bounds are established as follows. 
\begin{theorem} \label{converse-GMS}
For the case of transmitting GMS over AWGN channels, if there exists a $(k,n,\rho,\mathsf{D}_s,\mathsf{D}_x,\mathsf{d}(),\epsilon)$ code, then 
    \begin{align}
        	\epsilon \geq \sup_{\alpha \in [0,1]}
				\sup_{\gamma \geq 0}  
				\left\{ \mathbb{P} \big[ U \geq 
				nC-kR_{S,X}(\mathsf{D}_s,\mathsf{D}_x)+\gamma \big]
				-\exp(-\gamma) \right\},\label{L-converse bound-gaussian}
    \end{align}
   where 
   \begin{align}
       U=(\frac{\rho}{1+\rho}W_{\frac{n}{\rho}}^n-n)+W_{0}^{k}-\frac{ k\mathsf{D}_L}{\mathsf{D}_L -(1-\alpha) \sigma_{N_s}^2}
       +\frac{ 2(1-\alpha)\sum_{i=1}^{k}\Delta_{X_i}\sigma_{N_s}+ \sum_{i=1}^{k}(1-\alpha)\sigma^2_{N_s}}{\mathsf{D}_L -(1-\alpha) \sigma_{N_s}^2 }.\label{def of U}
   \end{align}

\end{theorem}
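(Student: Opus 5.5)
The plan is to specialize Corollary \ref{L-converse bound} (the linear-relaxation version of Theorem \ref{Thm_converse}) to the Gaussian source and the AWGN channel. Following \cite{Kostina-lossyJSCC-13}, the idea is to compute explicit expressions for the channel information density and for the $\mathsf{d}_L$-tilted information, so that the random quantity $\jmath_{S,X}(S^k,X^k)-\imath_{Y;Z}(Y^n;Z^n)$ becomes a deterministic shift of the random variable $U$ in \eqref{def of U}. The bound then follows by keeping the supremum over $\alpha$ and $\gamma$.

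\textbf{Step 1 (channel side).} Handle the infimum over $P_{Y^n|X^k}$ and the supremum over $P_{Z^n}$ exactly as in the AWGN converse of \cite{Kostina-lossyJSCC-13}. Choose the auxiliary output distribution $P_{Z^n}=\mathcal{N}(0,\sigma_1^2(1+\rho)I_n)$. Use the spherical symmetry of the noise to reduce the problem to inputs on the power sphere $\|y^n\|^2=n\rho\sigma_1^2$. The constraint is only on the average cost, so the reduction to the sphere needs the standard argument of \cite{Kostina-lossyJSCC-13}. Then
\[
\imath_{Y;Z}(y^n;Z^n)=nC-\tfrac{\log e}{2}\Big(\tfrac{\rho}{1+\rho}W^n_{n/\rho}-n\Big)
\]
in distribution, where $C=\frac12\log(1+\rho)$ and the noncentral chi-square $W^n_{n/\rho}$ appears from $\|y^n+N_1^n\|^2/\sigma_1^2$. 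This produces the first bracket of $U$, and the distribution of this term does not depend on the chosen $y^n$.

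\textbf{Step 2 (source side).} Evaluate the $\mathsf{d}_L$-tilted information \eqref{Equ_Def_j} for $S=X+N_s$ with mean-square distortion. The key identity is
\[
\mathbb{E}[(S-\hat s)^2\mid X,\hat S=\hat s]=(X-\hat s)^2+\sigma_{N_s}^2,
\]
and the realized value contains the cross term $2(X-\hat s)N_s$. The relaxed constraint therefore becomes a single MSE constraint on $X$ with an effective distortion level $\mathsf{D}_L-(1-\alpha)\sigma_{N_s}^2$. The optimal reproduction and the multiplier $\lambda^\star$ of \eqref{Def_lamdas} then follow from the Gaussian rate-distortion solution. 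Plugging these into \eqref{Equ_Def_j} yields
\[
\jmath_{S,X}=kR_{S,X}(\mathsf{D}_s,\mathsf{D}_x)+\tfrac{\log e}{2}\big(\,\cdot\,\big),
\]
with the bracket consisting of three pieces:
\begin{itemize}
\item a $W_0^k$ term from $\|X^k\|^2/\sigma_X^2$;
\item the normalization $-k\mathsf{D}_L/(\mathsf{D}_L-(1-\alpha)\sigma_{N_s}^2)$;
\item the semantic-noise contributions $\big(2(1-\alpha)\sum_i\Delta_{X_i}\sigma_{N_s}+k(1-\alpha)\sigma_{N_s}^2\big)/(\mathsf{D}_L-(1-\alpha)\sigma_{N_s}^2)$, where $\Delta_{X_i}$ denotes the normalized cross-term between $X_i-\hat X_i$ and $N_{s_i}$.
\end{itemize}

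\textbf{Step 3 (assembly).} Combine Steps 1 and 2 inside $\mathbb{P}[\jmath-\imath\ge\gamma]$ to obtain $\mathbb{P}[U\ge nC-kR_{S,X}+\gamma]$. Absorb the $\log e/2$ scaling into $\gamma$ (using natural logarithms, the factor is handled by rescaling). Since the resulting expression no longer depends on the encoder, the infimum over encoders drops out. Finally, keep the supremum over $\alpha\in[0,1]$ as allowed by Corollary \ref{L-converse bound}, together with the relaxation in Lemma \ref{Lem_d_function}.

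\textbf{Expected obstacle.} The main difficulty is Step 2. The $\mathsf{d}_L$-relaxed rate-distortion function has to be solved in closed form, including the shift by $(1-\alpha)\sigma_{N_s}^2$ and the corresponding multiplier $\lambda^\star$. In addition, the random cross term between the reproduction error and $N_s$ must be handled so that it collapses into the $\Delta_{X_i}$ form, rather than leaving a dependence on the code. I would first verify the scalar case $k=1$ with a Gaussian test channel and then tensorize.
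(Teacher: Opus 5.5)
Your route is the same as the paper's: its proof only substitutes the AWGN information density and the Gaussian $\mathsf{d}_L$-tilted information into Corollary~\ref{L-converse bound}. However, the steps that are supposed to turn that substitution into the displayed bound do not go through.

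\textbf{The factor $\frac12$.} In Step 3 the factor $\frac{\log e}{2}$ cannot be absorbed into $\gamma$. With natural logarithms the substitution gives $\jmath_{S,X}-\imath_{Y;Z}=kR_{S,X}(\mathsf{D}_s,\mathsf{D}_x)-nC+\frac12U$. The event is therefore $\{U\ge 2(nC-kR_{S,X}(\mathsf{D}_s,\mathsf{D}_x))+2\gamma\}$. A change of variable in $\gamma$ would have to rescale the centering $nC-kR_{S,X}$ and the penalty $e^{-\gamma}$ at the same time, which it cannot do. What the argument actually proves is the bound with $\frac12U$ in place of $U$; the paper's own appendix formulas for $\imath$ and $\jmath$ carry exactly this $\frac12$.

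\textbf{The semantic-noise terms.} Step 2 does not produce the bracket you write down. Inserting $S_i-\hat S_i=\Delta_{X_i}+N_{s_i}$ into \eqref{Equ_Def_j} gives $(1-\alpha)\sum_i(2\Delta_{X_i}N_{s_i}+N_{s_i}^2)/(\mathsf{D}_L-(1-\alpha)\sigma_{N_s}^2)$. Reinterpreting $\Delta_{X_i}$ can absorb $N_{s_i}/\sigma_{N_s}$ into the cross term. It cannot fix $\sum_iN_{s_i}^2$, which is $\sigma_{N_s}^2$ times a $\chi^2_k$ variable, not the constant $k\sigma_{N_s}^2$. Its fluctuation is exactly the $2(1-\alpha)^2\sigma_{N_s}^4$ term of $\tilde V(\rho)$ in \eqref{Equ_linear_V}, so replacing it by its mean changes the law of $U$.

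\textbf{Dependence on $\hat x$.} The more basic gap is the one you list as the expected obstacle, and the proposal does not resolve it. In the direct case, $\imath+\lambda^\star(\mathsf{d}-D)$ is a.s.\ constant in $\hat x$. Here $\imath_{X;\hat X,\hat S}+\lambda^\star(\mathsf{d}_L-\mathsf{D}_L)$ is not: it keeps the term $2(1-\alpha)\lambda^\star(x_i-\hat x_i)n_{s_i}$. So what you substitute is not a function of $(s,x)$, whereas the proof of Theorem~\ref{Thm_converse} pulls $\exp(\jmath_{S,X}(s,x))$ out of the sum over the code's $(\hat s,\hat x)$. Neither reading of $\Delta_{X_i}$ rescues the argument. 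If $\Delta_{X_i}$ is the code's own reproduction error, the change of measure can be redone. But then $\sum_i\Delta_{X_i}N_{s_i}$ has conditional variance $\sigma_{N_s}^2\|X^k-\hat X^k\|^2$, which depends on the code, so the infimum over encoders does not drop out as Step 3 claims. If instead $\Delta_{X_i}$ comes from an independent test-channel draw $\hat x^{\star k}$, averaging over $N_s^k$ leaves a factor $\exp\{2(1-\alpha)^2\lambda^{\star2}\sigma_{N_s}^2\|\hat x^k-\hat x^{\star k}\|^2\}>1$, and step (d) of that proof fails.

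\textbf{The cost constraint.} The reduction to the power sphere in Step 1 needs a per-codeword constraint $\|y^n\|^2\le n\rho\sigma_1^2$. Under the expected-cost constraint in the paper's code definition, the encoder can vary its power with the source realization. Bounds of this form with $C(\rho)$ then fail, since the AWGN channel has no strong converse under an average power constraint.

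A minor slip: $\|y^n+N_1^n\|^2/\sigma_1^2$ is $W^n_{n\rho}$, not $W^n_{n/\rho}$. The $W^n_{n/\rho}$ comes from completing the square in $\|z^n\|^2/(1+\rho)-\|z^n-y^n\|^2$ (with $\sigma_1=1$).
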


	The proof of Theorem \ref{converse-GMS} can be found in Appendix \ref{proof of converse GMS}. It can be proved by taking the distribution of GMS and AWGN channels into Theorem \ref{Thm_converse}. To derive the achievability bound, we first introduce the a useful lemma.
 
{ 	 
 \begin{figure*} 
 	\centering
 	\captionsetup[subfigure]{margin=120pt} 
 	\begin{minipage}[t]{0.3\linewidth}
 		\subfigure[]{
 			\centering
 			\includegraphics[width=0.8\textwidth]{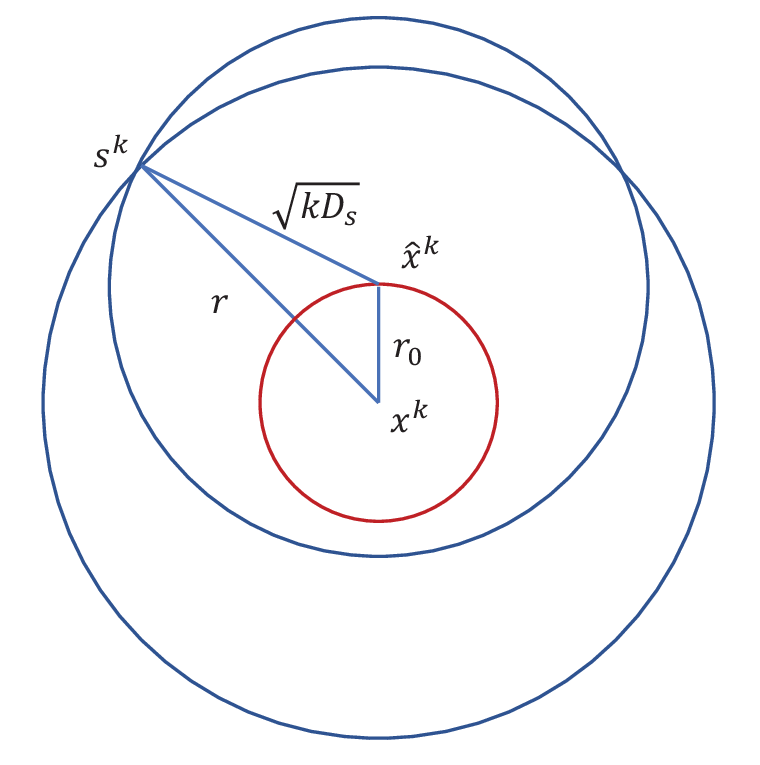}}
 	\end{minipage}
 	\hspace{-.45in} 
 	\begin{minipage}[t]{0.3\linewidth}
 		\centering
 		\subfigure[]{
 			\includegraphics[width=0.8\textwidth]{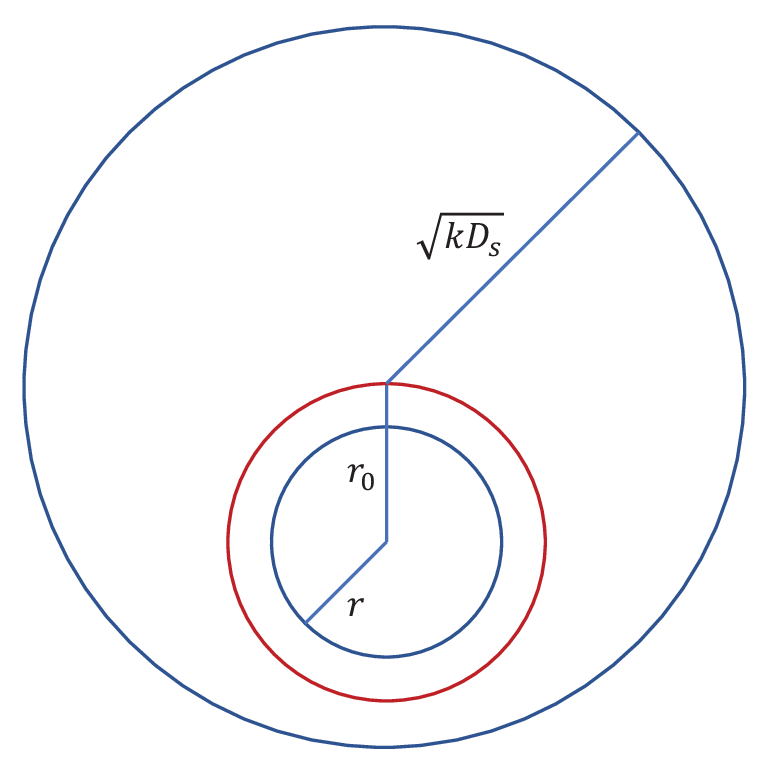}}
 	\end{minipage}
 	\begin{minipage}[t]{0.3\linewidth}
 		\subfigure[]{
 			\centering
 			\includegraphics[width=0.8\textwidth]{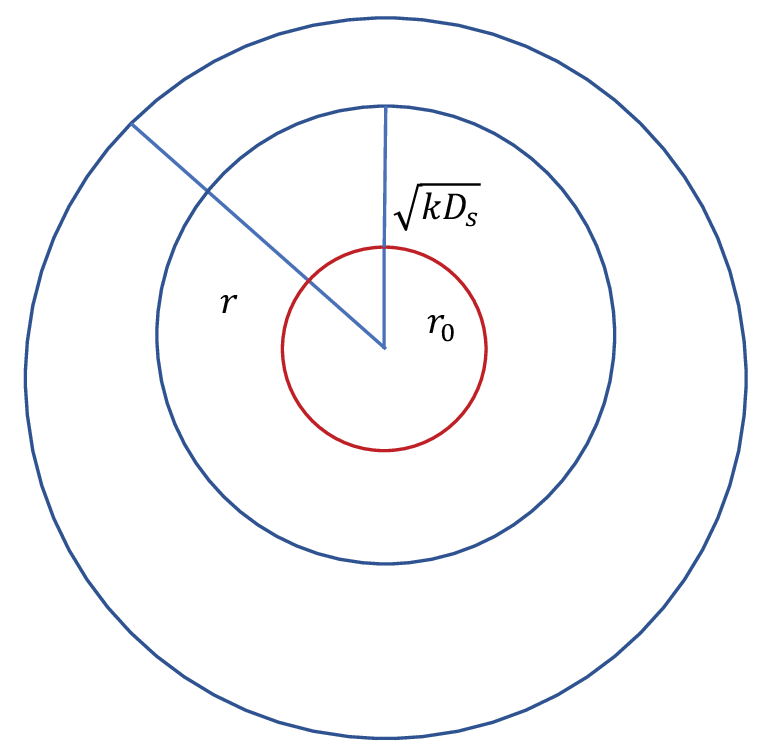}}
 	\end{minipage}
 	\caption{The geometry of the excess-distortion probability calculation for three different cases. (a) $r$ is within the radius bounds; (b) $r$ is too small; (c) $r$ is too large.}\label{Fig_Bd_Inner}
 \end{figure*}
 
Denote $\Delta(x_i)$ by the reconstruction error for given the observable source $x_i$, namely  $\Delta(x_i)=x_i-\hat{X}_i$. 
 $\Delta(x_i)$ is a Gaussian random variable with distribution $\Delta(x_i)\sim \mathcal{N}(x_i, \sigma^2_{\hat{X}})$, which varies with the source output $x_i$. 
We have the following lemma.
 \begin{lemma} \label{Lemma_Bd}
 	Let $\sum_{i=1}^k(s_i-x_i)^2=r^2$, $\sum_{i=1}^k(x_i-\hat{x}_i)^2=r_0^2$. The following statements hold, as shown in Fig. \ref{Fig_Bd_Inner}.
 	\begin{itemize}
 		\item[(1)] If $r\leq \sqrt{k\mathsf{D}_s}-r_0$, the excess-distortion probability $P_{S}\left(B_{\mathsf{D}_s}\left(\hat{x}\right)\right)=1$.
 		
 		\item[(2)] If $r>\sqrt{k\mathsf{D}_s}+r_0$, the excess-distortion probability $P_{S}\left(B_{\mathsf{D}_s}\left(\hat{x}\right)\right) =0$. 
 		
 		\item[(3)] If $\sqrt{k\mathsf{D}_s}-r_0\leq |n_s|\leq \sqrt{k\mathsf{D}_s}+r_0$, the excess-distortion probability 
 		\begin{align}
 			P_{S^k}\left(B_{\mathsf{D}_s}\left(\hat{x}^k\right)\right)
 			\geq &
 			\frac{\Gamma\left(\frac{k}{2}+1\right)}{\sqrt{\pi}k\Gamma\left(\frac{k-1}{2}+1\right)} 
 			\left(1-  \left(  \frac{ \sigma^2_{N_s}w_1 +\sigma^2_{\hat{X}}w_2- \mathsf{D}_s}{2 \sqrt{ \sigma^2_{N_s}w_1  } \sqrt{  \sigma^2_{\Delta_X}w_2}} \right)^2  \right)^{\frac{k-1}{2}}.
 			\label{Equ_GSM_Inner_rhoB0}
 		\end{align}
 	\end{itemize}
 	where $w_1=r^2/\sigma_{N_s}^2$ and $w_2=r_0^2/\sigma^2_{\hat{X}}$.
 \end{lemma}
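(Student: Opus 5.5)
This is a geometric statement about two balls in $\mathbb{R}^k$. Given $x^k$ and $\hat{x}^k$, the semantic source $S^k = x^k + N_s^k$ lies on the sphere of radius $r$ centred at $x^k$. The reconstruction succeeds exactly when $S^k$ falls in the ball $B_{\mathsf{D}_s}(\hat{x}^k)$ of radius $\sqrt{k\mathsf{D}_s}$ centred at $\hat{x}^k$, and the two centres are at distance $r_0$. Conditioned on $\|N_s^k\| = r$, the direction of $N_s^k$ is uniform on the unit sphere, because $N_s^k$ is i.i.d. Gaussian and hence isotropic. So $P_{S^k}(B_{\mathsf{D}_s}(\hat{x}^k))$, conditioned on $r$, is the normalized surface measure of the part of the sphere $\{x^k + r u\}$ that lies inside the distortion ball. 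I would prove the three cases in that order.

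\textbf{Cases (1) and (2).} Both follow from the triangle inequality. For any point $s^k$ on the sphere,
\begin{align*}
\big|\, r - r_0 \,\big| \;\le\; \|s^k - \hat{x}^k\| \;\le\; r + r_0 .
\end{align*}
\begin{itemize}
\item If $r \le \sqrt{k\mathsf{D}_s} - r_0$, then $\|s^k - \hat{x}^k\| \le r + r_0 \le \sqrt{k\mathsf{D}_s}$. The whole sphere lies inside the ball, so the probability is $1$.
\item If $r > \sqrt{k\mathsf{D}_s} + r_0$, then $\|s^k - \hat{x}^k\| \ge r - r_0 > \sqrt{k\mathsf{D}_s}$. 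The sphere misses the ball entirely, so the probability is $0$.
\end{itemize}
These correspond to panels (b) and (c) of Fig.~\ref{Fig_Bd_Inner}.

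\textbf{Case (3): reducing to a spherical cap.} In the intermediate case the intersection is a spherical cap on the $r$-sphere. Write $\theta$ for the angle between $u$ and the direction $(\hat{x}^k - x^k)/r_0$. By the law of cosines, $s^k$ lies in the ball iff
\begin{align*}
r^2 + r_0^2 - 2 r r_0 \cos\theta \le k\mathsf{D}_s ,
\end{align*}
that is, iff $\cos\theta \ge c$ with
\begin{align*}
c = \frac{r^2 + r_0^2 - k\mathsf{D}_s}{2 r r_0}.
\end{align*}
Substituting $r^2 = \sigma_{N_s}^2 w_1$ and $r_0^2 = \sigma_{\hat{X}}^2 w_2$ turns $c$ into the quantity squared in \eqref{Equ_GSM_Inner_rhoB0}, up to normalization of $\mathsf{D}_s$. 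The cap probability is the regularized incomplete beta integral
\begin{align*}
\frac{\Gamma(\frac{k}{2})}{\sqrt{\pi}\,\Gamma(\frac{k-1}{2})}\int_c^1 (1-t^2)^{\frac{k-3}{2}}\,dt .
\end{align*}
This is the standard formula from Shannon's sphere-packing and cap computations, as used in the two-sphere analysis of \cite{Kostina-lossyJSCC-13}.

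\textbf{Case (3): the lower bound.} It remains to bound the cap integral from below by $(1-c^2)^{\frac{k-1}{2}}$ times an explicit constant. I would do this by comparing the cap with an inscribed or projected volume. The cap area is at least the $(k-1)$-volume of its boundary disc, of radius $r\sqrt{1-c^2}$, projected onto the hyperplane. Equivalently, one can use the inequality
\begin{align*}
\int_c^1 (1-t^2)^{\frac{k-3}{2}}\,dt \;\ge\; \int_c^1 t\,(1-t^2)^{\frac{k-3}{2}}\,dt \;=\; \frac{(1-c^2)^{\frac{k-1}{2}}}{k-1},
\end{align*}
which holds when $c \ge 0$. Combined with $\Gamma(\frac{k}{2}) = \frac{2}{k}\Gamma(\frac{k}{2}+1)$ and $\Gamma(\frac{k-1}{2})(k-1) = 2\,\Gamma(\frac{k-1}{2}+1)$, this gives exactly the constant $\frac{\Gamma(\frac{k}{2}+1)}{\sqrt{\pi}\,k\,\Gamma(\frac{k-1}{2}+1)}$.

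\textbf{Main obstacle.} The sign of $c$ is the delicate point. When $c < 0$ the cap is more than a hemisphere, and the bound should then follow by monotonicity: either compare with $c = 0$, or note that the cap probability already exceeds $1/2$, and verify that $1/2$ exceeds the right-hand side. I also expect some care with matching the paper's normalizations ($k\mathsf{D}_s$ versus $\mathsf{D}_s$, and $\sigma_{\hat{X}}$ versus $\sigma_{\Delta_X}$) so that the expression in \eqref{Equ_GSM_Inner_rhoB0} agrees exactly with $c$.
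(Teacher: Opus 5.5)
Your proposal is correct and follows the paper's argument. Cases (1)--(2) are the containment/disjointness of the radius-$r$ sphere around $x^k$ and the radius-$\sqrt{k\mathsf{D}_s}$ ball around $\hat{x}^k$, and case (3) is the spherical-cap bound with $\cos\theta=(r^2+r_0^2-k\mathsf{D}_s)/(2rr_0)$, which the paper imports from \cite[Thm.~37]{Kostina-fixed-12}; your comparison $\int_c^1(1-t^2)^{\frac{k-3}{2}}dt\ge\int_c^1 t(1-t^2)^{\frac{k-3}{2}}dt$ rederives exactly that projected-base-disc bound, and your reading of $k\mathsf{D}_s$ and $\sigma_{\hat{X}}$ in \eqref{Equ_GSM_Inner_rhoB0} matches the appendix. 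One caveat you share with the paper: the cap picture in case (3) also needs $c\le 1$, i.e.\ $r\ge r_0-\sqrt{k\mathsf{D}_s}$, which the stated range does not guarantee when $r_0>\sqrt{k\mathsf{D}_s}$ (there the probability is $0$ while the right-hand side can be positive or undefined), so add the hypothesis $r_0\le\sqrt{k\mathsf{D}_s}$, which holds in the application whenever $\mathsf{D}_x\le\mathsf{D}_s$ because $r_0^2\le k\mathsf{D}_x$.
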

 \begin{IEEEproof}
 	See Appendix \ref{Proof_Lem_Bd}. 	
 	Note that the results in \cite[Therem 37]{Kostina-fixed-12} can not be directly applied here, since the distribution of $S^k$ depends on $\hat{x}^k$.  There are two major differences between the geometry of this case and that of the case in \cite{Kostina-fixed-12}. First, the sphere center here is $x^k$ instead of the origin point.
 	In this sense, we evaluate the probability that an i.i.d. Gaussian sequence $N_s^k = (N_{s_1}, \dots, N_{s_k})$ lies within the $ \mathsf{D}_s$-ball centered at  $\Delta(x^k) = (\Delta(x_1), \dots, \Delta(x_k))$. 	 
 	Second, in this case the distortion constraint for semantic information $\mathsf{D}_s$ is larger than $\sigma_{N_s}^2$ from a physical interpretation. That is to say, it is trivial to analyze the case where $\mathsf{D}_s < \sigma^2_{N_s}$, since the MMSE of estimating $S_i$ with a perfect knowledge of $X_i$ is $\sigma^2_{N_s}$. Hence, it will have a non-negligible probability that when $r$ is small enough, all $s^k$ such that $\sum_{i=1}^k(s_i-x_i)^2=r^2$ can be decoded within the distortion constraint. This situation is shown in Fig. \ref{Fig_Bd_Inner}-(b). Thus, when $r\leq \sqrt{k\mathsf{D}_s}-r_0$, the excess-distortion probability for semantic source, namely $1-P_{S^k}\left(B_{\mathsf{D}_s}\left(\hat{x}^k\right)\right)$, equals zero. This is a case that does not arise in \cite{Kostina-fixed-12}. Meanwhile, it will also have a small probability that $r$ is too large so that no such $s^k$ with $\sum_{i=1}^k(s_i-x_i)^2=r^2$ can be correctly encoded. This case is shown in Fig. \ref{Fig_Bd_Inner}-(c). In this case, namely when $r>\sqrt{k\mathsf{D}_s}+r_0$, the excess-distortion probability $1-P_{S^k}\left(B_{\mathsf{D}_s}\left(\hat{x}^k\right)\right) =1$. 
 	The detailed proof of the case Fig. \ref{Fig_Bd_Inner}-(a) is given in Appendix \ref{Proof_Lem_Bd}. 	
 \end{IEEEproof}
}

\begin{theorem} \label{achievability bound-GMS}
	For the transmission of GMS over AWGN channels, if there exists a $(k,n,\rho,\mathsf{D}_s,\mathsf{D}_x,\mathsf{d}(),\epsilon)$ code, then 
	{\small \begin{align}
		\epsilon \leq &  \underset{\gamma}{\inf}
		\left\{ \exp(1-\gamma) 
		+\mathbb{E}_{W_{\frac{n}{\rho}}^n, W_{0}^n} \left[ \exp\left\{ - |U-\log \gamma |^+ \right\} \right]    \right. \nonumber \\
	    & \left.+\mathbb{E}_{W_0^k}  
		  \left[ \frac{1}{\rho_1 \left( W_0^{k}, \mathsf{D}_x, \sigma_x \right)} \int_{w_2 \leq {\mathsf{D}_x}/{\sigma^2_{\hat{X}}}} f_{W_{\sigma^2_{X}W_0^k/\sigma^2_{\hat{X}}}^{k}}(w_2)  
  \mathbb{E}_{W_0^k} 	\left[1-  \rho_2(\sigma^2_{N_s}w_{0}^k,\sigma^2_{\hat{X}}w_2,\mathsf{D}_s)  \right] \text{d}w_2 \right]  \right\}, \label{Equ_GSM_Inner}
	\end{align}} 
	where
	\begin{align}
		U=&\frac{n}{2}\log (1+\rho)-\frac{1}{2}\left(\frac{\rho}{1+\rho} W_{\frac{n}{\rho}}^n-n \right) -\log \frac{F}{\rho_1 \left( W_0^{k}, \mathsf{D}_x, \sigma_x \right)  }, \\
		F=& \max_{n\in \mathbb{N}, t\in \mathbb{R}^+ } \frac{f_{W_{n\rho}^n}(t)}{f_{W_{0}^n}(\frac{t}{1+\rho})} <\infty, \\
  \sigma^2_{\hat{X}}=&\sigma^2_{X}-\min\{\mathsf{D}_x, \mathsf{D}_s-\sigma^2_{N_s}\},
	\end{align}
	and  $\gamma$ is an arbitrary positive number. Meanwhile, function 
 $\rho_1: \mathbb{R}^+\mapsto [0,1]$ is defined by
	\begin{align}
		\rho_1(t,\mathsf{D},\sigma )
		=& \frac{\Gamma\left(\frac{k}{2}+1\right)}{\sqrt{\pi}k\Gamma\left(\frac{k-1}{2}+1\right)} 
		\left(1-L\left(\sqrt{\frac{t}{k}},\mathsf{D},\sigma\right)\right)^{\frac{k-1}{2}} ,\label{Equ_GSM_Inner_rho}
	\end{align}
	where the $L(\cdot)$ function is defined as 
	\begin{equation}  \label{Equ_GMS_Inner_Lr}
		L(r,\mathsf{D},\sigma)=\left\{
		\begin{aligned}
			&0, & \quad  r<\sqrt{\frac{\mathsf{D}}{\sigma^2}}-\sqrt{1-\frac{\mathsf{D}}{\sigma^2}} \\
			&1, &\quad \left|  r-\sqrt{1-\frac{\mathsf{D}}{\sigma^2}}  \right| > \sqrt{\frac{\mathsf{D}}{\sigma^2}} \\
			& \frac{\left(1+r^2-2\frac{\mathsf{D}}{\sigma^2} \right)^2}{4\left( 1-\frac{\mathsf{D}}{\sigma^2}\right)r^2}, &  \rm{otherwise,}
		\end{aligned}
		\right.
	\end{equation} 
and function $\rho_2: \mathbb{R}^+\mapsto [0,1]$ is defined by
	\begin{equation}  \label{Equ_GMS_Inner_rho2}
		\rho_2(r^2,r_0^2,\mathsf{D})=\left\{
		\begin{aligned}
			&0, & \quad  r>r_0+\sqrt{k\mathsf{D}} \\
			&1, &\quad r\leq \sqrt{k\mathsf{D}}-r_0 \\
			& \frac{\Gamma\left(\frac{k}{2}+1\right)}{\sqrt{\pi}k\Gamma\left(\frac{k-1}{2}+1\right)} 
		\left(1-\frac{\left(r^2+r_0^2-2k\mathsf{D} \right)^2}{4r_0^2r^2}\right)^{\frac{k-1}{2}} , &  \rm{otherwise.}
		\end{aligned}
		\right.
	\end{equation} 
\end{theorem}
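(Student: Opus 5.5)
The plan is to specialize Corollary \ref{achievability bound-simplified} (the $M_2=1$ codebook with deterministic semantic estimator $\hat{s}=h(\hat{x})$) to the GMS/AWGN setting, instead of working from the full superposition bound of Theorem \ref{achievability bound-JSCC2}. That corollary already gives three terms: $\exp(1-\gamma)$, a channel term $\mathbb{E}[\exp(-|\imath_{Y;Z}-\log(\gamma/P_{\hat{X}}(B_{\mathsf{D}_x}(X)))|^+)]$, and a semantic-guess term weighted by $P(\hat{x})/P_{\hat{X}}(B_{\mathsf{D}_x}(X))$. Each term will be evaluated in closed form using chi-square distributions.

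\emph{Step 1: choice of distributions.} Take $P_{Y^n}$ uniform on the power sphere of radius $\sqrt{n\rho\sigma_1^2}$, as in \cite{Kostina-lossyJSCC-13}. With this choice, $\imath_{Y;Z}$ can be bounded using the output distribution $\mathcal{N}(0,(1+\rho)\sigma_1^2 I)$ plus the density-ratio constant $F$. This produces $\frac{n}{2}\log(1+\rho)-\frac12(\frac{\rho}{1+\rho}W^n_{n/\rho}-n)-\log F$.

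Take $P_{\hat{X}^k}$ uniform on the sphere of radius $\sqrt{k\sigma^2_{\hat{X}}}$, with $\sigma^2_{\hat{X}}=\sigma_X^2-\min\{\mathsf{D}_x,\mathsf{D}_s-\sigma^2_{N_s}\}$. This makes the tighter of the two constraints active. Choose $h(\hat{x})=\hat{x}$, which is the MMSE estimate of $S$ since $S=X+N_s$.

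\emph{Step 2: the observable covering probability.} Use the spherical-cap lower bound of \cite[Theorem 37]{Kostina-fixed-12}: $P_{\hat{X}}(B_{\mathsf{D}_x}(x^k))\ge \rho_1(\|x^k\|^2/\sigma_X^2,\mathsf{D}_x,\sigma_X)$, with $L(\cdot)$ given by the law of cosines in the three geometric cases. Here $\|X^k\|^2/\sigma_X^2\sim W_0^k$. Substituting this bound inside the log gives $U$, and the $\gamma$ term becomes $\log\gamma$.

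\emph{Step 3: the semantic term.} Condition on $\|x^k-\hat{x}^k\|^2=r_0^2=\sigma^2_{\hat{X}}w_2$; given $x^k$, this has the noncentral chi-square density $f_{W^k_{\sigma_X^2W_0^k/\sigma^2_{\hat{X}}}}$. The sum over $\hat{x}\in B_{\mathsf{D}_x}$ becomes the integral over $w_2\le \mathsf{D}_x/\sigma^2_{\hat{X}}$, and the normalizer $1/P_{\hat{X}}(B)$ is upper bounded by $1/\rho_1$. Then $P\{\mathsf{d}_s(S,\hat{x})>\mathsf{D}_s\}=1-P_{S^k}(B_{\mathsf{D}_s}(\hat{x}^k))$, where $N_s^k$ has radius $r^2=\sigma^2_{N_s}w$ with $w\sim W_0^k$. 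Lemma \ref{Lemma_Bd} supplies the three cases that form $\rho_2$. Averaging over $w$ gives the inner expectation.

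\emph{Main obstacle.} The hardest step is this third one. The normalizer and the conditional density both depend on $x^k$, so the lower bound $\rho_1$ must be plugged in while keeping every inequality in the correct direction. Lemma \ref{Lemma_Bd} must also be applied with the center offset. I would first verify that replacing $P_{\hat{X}}(B)$ by $\rho_1$ only enlarges the bound, and then check that the uniform-sphere $\hat{X}$ keeps $\|x-\hat{x}\|$ with the stated noncentral chi-square law, approximating it by Gaussian $\hat{X}$ if needed.
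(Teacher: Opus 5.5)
Your outline follows the paper's argument in Appendix~\ref{Proof of achievability bound GMS}: Corollary~\ref{achievability bound-simplified} with $M_2=1$ and $h(\hat{x})=\hat{x}$, the cap estimate $\rho_1$ from \cite[Thm.~37]{Kostina-fixed-12}, the constant $F$ for the output-law mismatch, a noncentral chi-square integral in $w_2$, and Lemma~\ref{Lemma_Bd} for $\rho_2$. The directional concerns you list are harmless. The semantic term's numerator is nonnegative, so replacing $1/P_{\hat{X}}(B_{\mathsf{D}_x}(x^k))$ by $1/\rho_1$ only enlarges the bound. Conditioning on $\|N_s^k\|^2=\sigma_{N_s}^2 w$ makes $N_s^k$ uniform on a sphere, so the cap bound legitimately gives $\rho_2$ before you average over $w$. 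The genuine gap is the one you defer to the end, and the check you propose would fail rather than pass. Corollary~\ref{achievability bound-simplified} uses a single $P_{\hat{X}}$ in the covering term (inside $U$ and in the normalizer) and in the semantic term. By contrast, \eqref{Equ_GSM_Inner} combines formulas that belong to two different codebooks.

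\textbf{Why the two ingredients are incompatible.} The function $L$ in \eqref{Equ_GMS_Inner_Lr} is the law of cosines for codewords uniform on the sphere of radius $\sqrt{k(\sigma_X^2-\mathsf{D}_x)}$. With your radius $\sqrt{k}\,\sigma_{\hat{X}}$, the cap cosine is $(\sigma_X^2r^2+\sigma_{\hat{X}}^2-\mathsf{D}_x)/(2\sigma_X\sigma_{\hat{X}}r)$, which differs whenever $\mathsf{D}_s-\sigma_{N_s}^2<\mathsf{D}_x$. Conversely, for any uniform-sphere codebook of radius $R$, $\|x^k-\hat{X}^k\|^2=\|x^k\|^2+R^2-2\|x^k\|R\cos\Phi$, where $\Phi$ has density proportional to $\sin^{k-2}\phi$ on $[0,\pi]$. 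That law is not noncentral chi-square. The density $f_{W^k_{\sigma_X^2W_0^k/\sigma_{\hat{X}}^2}}$ arises only for i.i.d.\ $\mathcal{N}(0,\sigma_{\hat{X}}^2)$ codewords, and \cite[Thm.~37]{Kostina-fixed-12} does not apply to those. Indeed, $\rho_1$ is not a lower bound on the Gaussian covering probability in general; it fails, for example, at typical $x^k$ when $\mathsf{D}_x/\sigma_X^2$ is small and $k$ is large. So ``approximating by a Gaussian $\hat{X}$'' is not an admissible step in an achievability proof.

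\textbf{The paper has the same gap.} Its appendix makes exactly this unannounced switch. It uses a sphere of radius $\sqrt{k}\sigma_x\sqrt{1-\mathsf{D}_x/\sigma_x^2}$ for $P_{\hat{X}}(B_{\mathsf{D}_x}(x^k))$, and a Gaussian $\hat{X}$ with variance $\sigma_{\hat{X}}^2$ for the law of $\|x^k-\hat{X}^k\|^2$. Following it therefore reproduces the displayed formula but does not derive it.

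\textbf{How to repair it.} To get a correct bound of the same three-term shape, commit to one codebook.
\begin{itemize}
\item Gaussian codebook: replace $\rho_1$, in $U$ and in the normalizer, by the exact covering probability $\mathbb{P}[W^k_{\|x^k\|^2/\sigma_{\hat{X}}^2}\le k\mathsf{D}_x/\sigma_{\hat{X}}^2]$. The semantic term then becomes a genuine conditional expectation of $1-\rho_2$ given $W_2\le k\mathsf{D}_x/\sigma_{\hat{X}}^2$.
\item Spherical codebook of radius $\sqrt{k(\sigma_X^2-\mathsf{D}_x)}$: keep $\rho_1$, but integrate $1-\rho_2$ against the $\sin^{k-2}$ angular law over the cap instead of against $f_{W^k}$.
\end{itemize}
When you redo the geometry, note two further corrections. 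The $\mathsf{D}_x$-ball is $w_2\le k\mathsf{D}_x/\sigma_{\hat{X}}^2$, not $w_2\le \mathsf{D}_x/\sigma_{\hat{X}}^2$. The cap cosine in $\rho_2$ is $(r^2+r_0^2-k\mathsf{D}_s)/(2rr_0)$, not the expression with $2k\mathsf{D}$.
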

\begin{IEEEproof}
	 See Appendix \ref{Proof of achievability bound GMS}. The proof is by taking the distribution of GMS and AWGN channel into Corollary \ref{achievability bound-simplified}. Specifically, for Gaussian sources and codebook, the probability of correctly decoding within some $\mathsf{D}$ distortion constraint $P(B_{\mathsf{D}})$ can be lower bounded by the ratio of two Gamma functions. 	 
\end{IEEEproof}

{ \begin{theorem}[Gaussian approximation of GMS-AWGN] \label{Thm_GApprx_GMS}
	For the transmission of a GMS over an AWGN channel, the error probability should satisfy
	\begin{align}
		\epsilon 
		= Q\left(\frac{n C(\rho)-k R_{S,X}(\mathsf{D}_s,\mathsf{D}_x)+\theta(n,k)}{\sqrt{n  V(\rho) + k \tilde{V}(\rho)}}  \right).
	\end{align} 
	where $-\log n/2+\mathcal{O}(1)\leq \theta(n,k)\leq \log k+\log \log k+\mathcal{O}(1)$, and the channel capacity $ C(\rho)$ and channel dispersion $ V(\rho)$ are given by
	\begin{align}
		&  C(\rho)= \frac{1}{2} \log (1+\rho), \label{Equ_C}\\
		&  V(\rho)= \frac{1}{2} \left(1-\frac{1}{(1+\rho)^2}\right) . \label{Equ_V}
	\end{align}
	Moreover, for the linear $\mathsf{d}(\cdot)$ function defined in  \eqref{Def_linear_d}, the linear $\mathsf{d}()$-relaxed rate distortion function and the source dispersion $\tilde{V}$ are given by
	\begin{align}
		R_{S,X}(\mathsf{D}_s,\mathsf{D}_x) 
		=&\frac{1}{2} \log \frac{\sigma_x^2}{ \mathsf{D}_L- (1-\alpha) \sigma_{N_s}^2 } \leq \bar{R}_{S,X}(\mathsf{D}_s,\mathsf{D}_x),\label{Equ_linear_R} \\
		\tilde{V}(\rho)=& \frac{1}{2}+\frac{4(1-\alpha)^2(\mathsf{D}_L -(1-\alpha) \sigma_{N_s}^2)\sigma_{N_s}^2+2(1-\alpha)^2\sigma_{N_s}^4}{4(\mathsf{D}_L -(1-\alpha) \sigma_{N_s}^2)^2 }\label{Equ_linear_V}, 
	\end{align}
	where $\mathsf{D}_L=\alpha \mathsf{D}_x+(1-\alpha)\mathsf{D}_s$. 
\end{theorem}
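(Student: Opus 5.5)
The proof has two directions. The lower bound (the $-\log n/2$ end of $\theta(n,k)$) will come from specializing Corollary \ref{Converse_GA_err} to the GMS--AWGN model. The upper bound (the $\log k+\log\log k$ end) will come from the achievability bound of Theorem \ref{achievability bound-GMS}, i.e.\ the $M_2=1$ codebook of Corollary \ref{achievability bound-simplified} with $\hat{S}^k=h(\hat{X}^k)$. Throughout, the linear relaxation $\mathsf{d}_L$ is used with $\alpha$ fixed, and at the end $\alpha$ is optimized.

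\emph{Computing $C,V,R_{S,X},\tilde V$.} The channel quantities (\ref{Equ_C})--(\ref{Equ_V}) are standard. Take the Gaussian input $P_Y=\mathcal{N}(0,\rho\sigma_1^2)$. Then $\imath_{Y;Z}(Y;Z)=\frac12\log(1+\rho)+\frac{\log e}{2}\big(\frac{\rho}{1+\rho}W^1_{1/\rho}-1\big)$ in distribution, which gives mean $C(\rho)$ and variance $V(\rho)$.

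For the source, note that $\mathbb{E}[\mathsf{d}_s(S,\hat S)]=\sigma_{N_s}^2+\mathbb{E}[(X-\hat S)^2]$ whenever $\hat S$ depends on $X$ only. The optimal choice is then $\hat S=\hat X$, since this minimizes the combined distortion at a given rate. The constraint therefore becomes $\mathbb{E}[(X-\hat X)^2]\le \mathsf{D}_L-(1-\alpha)\sigma_{N_s}^2$, and the Gaussian rate--distortion function gives (\ref{Equ_linear_R}).

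The inequality $R_{S,X}\le \bar R_{S,X}$ follows because the feasible set in Definition \ref{Def_SRD} is contained in that of Definition \ref{relaxed rate distortion}. Next, write the tilted information (\ref{Equ_Def_j}) explicitly. With $\lambda^\star=\frac{1}{2(\mathsf{D}_L-(1-\alpha)\sigma_{N_s}^2)}$, it is a sum over $i$ of a $\chi^2$ term in $X_i$ plus cross terms $2(1-\alpha)\Delta_{X_i}N_{s_i}$ and $(1-\alpha)N_{s_i}^2$; these are exactly the terms appearing in $U$ of Theorem \ref{converse-GMS}. Computing the per-letter variance, using independence of $X_i$ and $N_{s_i}$ and Gaussian fourth moments, yields (\ref{Equ_linear_V}).

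\emph{Converse.} Apply Theorem \ref{converse-GMS}, whose random variable $U$ is a sum of $n+k$ independent terms with finite third moments. Choose $\gamma=\frac12\log n$ and apply the Berry--Esseen theorem. This gives $\epsilon\ge Q\big((nC-kR_{S,X}+\theta)/\sqrt{nV+k\tilde V}\big)$ with $\theta\ge -\frac12\log n+\mathcal{O}(1)$. This is precisely Corollary \ref{Converse_GA_err}, instantiated with the closed forms above.

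\emph{Achievability.} Bound the three terms of (\ref{Equ_GSM_Inner}). The channel term $\mathbb{E}[\exp(-|U-\log\gamma|^+)]$ is handled as in \cite[Theorem 19]{Kostina-lossyJSCC-13}. Use the lower bound $\rho_1\gtrsim \exp(-k R_{RD}-\mathcal{O}(\log k))$, obtained from the Gamma-ratio asymptotics, and choose $\gamma=\log k$. Berry--Esseen plus a standard Gaussian tail estimate then produce the $\log k+\log\log k$ penalty.

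The key step, and the main obstacle, is the semantic term. It must be shown that, inside the $\mathsf{D}_x$-ball, $1-\rho_2$ contributes only an event already captured by the Gaussian term with variance $\tilde V$. Equivalently, the joint excess event must be shown to be asymptotically equivalent to the single linear event $\{(1-\alpha)\mathsf{d}_s+\alpha\mathsf{d}_x>\mathsf{D}_L\}$ for the optimal $\alpha$. To do this, I would use Lemma \ref{Lemma_Bd} together with $\sigma^2_{\hat X}=\sigma_X^2-\min\{\mathsf{D}_x,\mathsf{D}_s-\sigma_{N_s}^2\}$. Expanding $\|S^k-\hat X^k\|^2=\|N_s^k\|^2+2\langle N_s^k,\Delta\rangle+\|\Delta\|^2$ shows that the fluctuation of $\mathsf{d}_s$ given the codeword is a sum of i.i.d.\ terms matching those in $\jmath_{S,X}$.

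The difficulty is that when one constraint is inactive, that constraint's error probability is exponentially small, so only one $Q$-term survives. When both constraints are active, one must pick $\alpha$ so that the boundary point of the rectangle $\{\mathsf{d}_s\le\mathsf{D}_s,\mathsf{d}_x\le\mathsf{D}_x\}$ is the tangency point of the line. The union event then differs from the half-plane event only by a set whose probability is $\mathcal{O}(1/\sqrt k)$, and this is absorbed into $\theta$. I would first attempt this by bounding the union-minus-half-plane region with a bivariate Berry--Esseen (CLT) argument.

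Finally, combine the two directions. Optimizing over $\alpha$ on both sides, the converse and the achievability have the same first- and second-order terms, so $\epsilon$ equals the stated $Q$-expression with $-\log n/2+\mathcal{O}(1)\le\theta(n,k)\le\log k+\log\log k+\mathcal{O}(1)$.
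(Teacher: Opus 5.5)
Your computation of $C(\rho)$, $V(\rho)$, $R_{S,X}$ and $\tilde V$ matches the paper, and so does your converse (Theorem~\ref{converse-GMS} with $\gamma=\frac12\log n$ and Berry--Esseen, which is the paper's Corollary~\ref{Converse_GA_err}). The achievability is where you depart. The paper specializes Theorem~\ref{Thm_Gau_Ach1} to $\hat S=h(\hat X)$, sets $V_2=\tilde V_2=0$ (so $\upsilon_2=0$ and case (2) applies), and identifies $\tilde V_1$ with $\tilde V$. It never compares the union event with the half-plane event. Your route through \eqref{Equ_GSM_Inner} makes that comparison the crux, and the claim you make there is false.

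Every line $(1-\alpha)\mathsf{d}_s+\alpha\mathsf{d}_x=\mathsf{D}_L$ passes through the corner $(\mathsf{D}_s,\mathsf{D}_x)$. So ``tangency'' selects no $\alpha$, and $\mathcal{E}_{(\mathsf{D}_s,\mathsf{D}_x)}\setminus\tilde{\mathcal{E}}_{(\mathsf{D}_s,\mathsf{D}_x)}$ is a wedge at that corner. With $\hat S^k=\hat X^k$ and $\Delta^k=X^k-\hat X^k$, one has $\mathsf{d}_s-\mathsf{d}_x-\sigma^2_{N_s}=\frac1k\big(\|N_s^k\|^2-k\sigma^2_{N_s}+2\langle N_s^k,\Delta^k\rangle\big)$. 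Since $N_s^k$ is independent of the source, codebook and channel, its conditional law given $(X^k,\hat X^k,Z^n)$ depends only on $\|\Delta^k\|$. For $\|\Delta^k\|^2\approx k\mathsf{D}_x$ it is approximately $\mathcal{N}\big(0,(2\sigma^4_{N_s}+4\sigma^2_{N_s}\mathsf{D}_x)/k\big)$, independent of the source/channel fluctuation that drives $\mathsf{d}_x$ and the decoding errors.

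Now take both constraints active ($\mathsf{D}_s-\sigma^2_{N_s}=\mathsf{D}_x$) with $\epsilon$ bounded away from $0$ and $1$. After $\sqrt k$-scaling, the success event is an intersection of two half-planes under a non-degenerate two-dimensional Gaussian, with its corner at distance $\mathcal{O}(1)$ from the mean. The wedge therefore has probability bounded away from zero, not $\mathcal{O}(1/\sqrt k)$; a bivariate Berry--Esseen argument would confirm this constant, not remove it. In this regime $R_{S,X}=\bar R_{S,X}(\mathsf{D}_s,\mathsf{D}_x)$ for every $\alpha$. So your converse is a half-plane (single-$Q$) bound, while the error probability of your scheme is genuinely bivariate, and the two cannot meet with a logarithmic $\theta(n,k)$.

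The single-active case is also not covered by \eqref{Equ_GSM_Inner} beyond $\alpha=1$. That bound's encoder takes the first codeword in the $\mathsf{D}_x$-ball, so the selected $\mathsf{d}_x$ concentrates at the threshold, and the $\rho_1$ term delivers the rate $R_{RD}(\mathsf{D}_x)$. If $\mathsf{D}_x>\mathsf{D}_s-\sigma^2_{N_s}$, the semantic term then tends to one. Lowering the encoding threshold only yields two independent events, a rectangle in the two fluctuations. Their union again exceeds the single-$Q$ expression (with $\tilde V$ at $\alpha=0$) by a constant, because the semantic noise is not added to the source/channel fluctuation as $\tilde V$ requires.

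So ``only one $Q$-term survives'' is not the right picture: the semantic fluctuation must be part of the surviving term. Merging it needs a selection rule driven by $\pi(x,\hat s)$, as in \cite{Kostina-isc-2016}, which \eqref{Equ_GSM_Inner} does not analyze. As written, your plan establishes the upper bound only for $\mathsf{D}_x+\sigma^2_{N_s}<\mathsf{D}_s$ with $\alpha=1$, where it reduces to \cite[Theorem 19]{Kostina-lossyJSCC-13}. The paper's shortcut cannot fill the remaining cases either. With $\jmath_X$ as in \eqref{Equ_Def_j1}, $\mathrm{Var}[\jmath_X(X,\hat X)]=\frac12$ for the GMS, so $\tilde V_1=\tilde V$ holds only at $\alpha=1$.
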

\begin{IEEEproof}
	See Appendix \ref{source dispersion linear proof}. Specifically, we evaluate the  source dispersion of linear $\mathsf{d}(\mathsf{d}_s,\mathsf{d}_x)$-tilted information for GMS, where the variance of $\mathsf{d}(\mathsf{d}_s,\mathsf{d}_x)$-tilted information can be computed explicitly. 
\end{IEEEproof}

Note that the slope rate $\alpha$ should be optimized to obtain the tightest Gaussian approximation. However, the derivative of the Gaussian approximation will be a high-order polynomial whose root cannot be written in analytical forms in general, and we can only obtain closed-form solutions for some special cases. }

\begin{Corollary}
	In the transmission of a GMS over an AWGN channel, the optimal Gaussian approximation for excess-distortion probability with a linear $\mathsf{d}(\cdot)$ function is obtained by ignoring the distortion constraint of semantic information when { $\mathsf{D}_x+\sigma^2_{N_s}-\mathsf{D}_s<0$}. That is 
	\begin{align}
		\arg \min_{\alpha} \frac{nC(\rho)-kR_{S,X}(\mathsf{D}_s,\mathsf{D}_x)}{\sqrt{nV(\rho)+k\tilde{V}(\rho)}}=1,
	\end{align}
\end{Corollary}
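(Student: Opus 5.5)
The plan is to show directly that, under $\mathsf{D}_x+\sigma^2_{N_s}-\mathsf{D}_s<0$, the numerator and the denominator of the objective
\begin{align}
\Phi(\alpha)\triangleq\frac{nC(\rho)-kR_{S,X}(\mathsf{D}_s,\mathsf{D}_x)}{\sqrt{nV(\rho)+k\tilde{V}(\rho)}}
\end{align}
are both minimized at $\alpha=1$. This uses the closed forms \eqref{Equ_linear_R} and \eqref{Equ_linear_V} from Theorem \ref{Thm_GApprx_GMS}. Only $R_{S,X}$ and $\tilde{V}$ depend on $\alpha$; $C(\rho)$ and $V(\rho)$ do not. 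First I rewrite the effective distortion as
\begin{align}
g(\alpha)\triangleq\mathsf{D}_L-(1-\alpha)\sigma^2_{N_s}=\alpha\mathsf{D}_x+(1-\alpha)(\mathsf{D}_s-\sigma^2_{N_s}).
\end{align}
This is affine in $\alpha$ with slope $\mathsf{D}_x+\sigma^2_{N_s}-\mathsf{D}_s<0$. Hence $g$ is strictly decreasing on $[0,1]$ and positive there, with $g(1)=\mathsf{D}_x$.

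\emph{Step 1 (numerator).} Since $R_{S,X}=\frac12\log(\sigma_x^2/g(\alpha))$ and $g$ decreases, $R_{S,X}$ is strictly increasing in $\alpha$. Thus $N(\alpha)\triangleq nC(\rho)-kR_{S,X}$ is strictly decreasing, and $N(\alpha)\ge N(1)$ for all $\alpha\in[0,1]$.

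\emph{Step 2 (denominator).} From \eqref{Equ_linear_V},
\begin{align}
\tilde{V}=\frac12+(1-\alpha)^2\,\frac{4g(\alpha)\sigma^2_{N_s}+2\sigma^4_{N_s}}{4g(\alpha)^2}.
\end{align}
Because $g>0$, the second term is nonnegative and vanishes at $\alpha=1$. Hence $\tilde{V}(\alpha)\ge\tilde{V}(1)=\frac12$, which is the dispersion of the GMS with the semantic constraint dropped. So $\Delta(\alpha)\triangleq\sqrt{nV+k\tilde{V}(\alpha)}\ge\Delta(1)>0$.

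\emph{Step 3 (combining).} This is the only delicate point, because the sign of $N$ matters when comparing ratios. I would argue by cases.
\begin{itemize}
\item[(a)] If $N(\alpha)\ge 0$ and $N(1)\ge0$, then $N(\alpha)/\Delta(\alpha)\ge N(1)/\Delta(\alpha)\ge N(1)/\Delta(1)$.
\item[(b)] If $N(\alpha)\ge0>N(1)$, then the ratio at $\alpha$ is nonnegative while $\Phi(1)<0$.
\item[(c)] If $N(\alpha)<0$, then $N(1)\le N(\alpha)<0$ and $\Delta(\alpha)\ge\Delta(1)$ give $N(\alpha)/\Delta(\alpha)\ge N(\alpha)/\Delta(1)\ge N(1)/\Delta(1)$.
\end{itemize}
In every case $\Phi(\alpha)\ge\Phi(1)$. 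Strict monotonicity of $N$ makes the inequality strict for $\alpha<1$ whenever the strict step of the relevant case applies. This yields $\arg\min_\alpha\Phi(\alpha)=1$, i.e., $\mathsf{d}_L=\mathsf{d}_x$, which means ignoring the semantic constraint. The interpretation is that this makes the converse \eqref{Equ_Corr1} tightest, since $Q$ is decreasing.

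I expect the main obstacle to be Step 3. The clean interval argument requires checking that $g(\alpha)>0$ on all of $[0,1]$, which holds since $g(1)=\mathsf{D}_x>0$ and $g$ is decreasing. It also requires handling the sign of the numerator as above, rather than naively claiming that a smaller numerator and a smaller denominator give a smaller ratio.
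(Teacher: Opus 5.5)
The gap is in Step 3(a). When $N(1)\ge 0$ and $\Delta(\alpha)\ge\Delta(1)$, one has $N(1)/\Delta(\alpha)\le N(1)/\Delta(1)$, not $\ge$, so your chain uses a reversed inequality there. Cases (b) and (c) are correct. Case (a), however, is exactly the regime $nC(\rho)>kR_{RD}(\mathsf{D}_x)$, where the Gaussian-approximated error is below $1/2$ and which is the case of interest. In that regime lowering $\alpha$ raises both the numerator and the denominator, so your two monotonicity facts pull $\Phi$ in opposite directions. Since $N(1)=\Phi(1)\Delta(1)$, what you actually need is $N(\alpha)-N(1)\ge\Phi(1)\bigl(\Delta(\alpha)-\Delta(1)\bigr)$. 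This cannot be proved in general, because the claimed $\arg\min$ is false in this regime.

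Here is a counterexample. Take $\sigma_x^2=\sigma_{N_s}^2=1$, $\mathsf{D}_x=1/4$, $\mathsf{D}_s=3/2$, so $\mathsf{D}_x+\sigma^2_{N_s}-\mathsf{D}_s=-1/4<0$. By \eqref{Equ_linear_R} and \eqref{Equ_linear_V}:
\begin{itemize}
\item[(i)] $\alpha=1$ gives $R_{S,X}=\log 2$ and $\tilde V=1/2$;
\item[(ii)] $\alpha=0$ gives $R_{S,X}=\frac12\log 2$ and $\tilde V=9/2$.
\end{itemize}
With $\rho=80$ (so $C(\rho)=\log 9$ and $V(\rho)\approx 1/2$) and $n=k$, you get $\Phi(1)\approx 1.50\sqrt{k}$ but $\Phi(0)\approx 0.83\sqrt{k}$, for every $k$. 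So the semantic relaxation gives the tighter converse. The paper's own justification (monotonicity plus the geometric picture) is heuristic and does not resolve this competition either.

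Your argument does prove two things. First, cases (b) and (c) give $\arg\min_\alpha\Phi(\alpha)=1$ whenever $nC(\rho)\le kR_{RD}(\mathsf{D}_x)$. Second, $\alpha=1$ is always a local minimizer: $\tilde V(\alpha)-\frac12=(1-\alpha)^2h(\alpha)$ with $h(\alpha)=\sigma^2_{N_s}/g(\alpha)+\sigma^4_{N_s}/(2g(\alpha)^2)$, so $\tilde V'(1)=0$, while $N'(1)=k(\mathsf{D}_x+\sigma^2_{N_s}-\mathsf{D}_s)/(2\mathsf{D}_x)<0$.

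A global statement when $N(1)>0$ needs an extra hypothesis. One workable route:
\begin{itemize}
\item[(i)] Concavity of $\log g$ gives $N(\alpha)-N(1)\ge\frac{k}{2}(1-\alpha)\log\frac{\mathsf{D}_s-\sigma^2_{N_s}}{\mathsf{D}_x}$.
\item[(ii)] Since $g(\alpha)\ge g(1)=\mathsf{D}_x$ on $[0,1]$, $\Delta(\alpha)-\Delta(1)\le\sqrt{k}(1-\alpha)\sqrt{h(\alpha)}\le\sqrt{k}(1-\alpha)\sqrt{h(1)}$.
\item[(iii)] Hence $\Phi(\alpha)\ge\Phi(1)$ whenever $\Phi(1)\le\frac{\sqrt{k}}{2}\log\frac{\mathsf{D}_s-\sigma^2_{N_s}}{\mathsf{D}_x}\big/\sqrt{h(1)}$.
\end{itemize}
This condition covers the second-order regime $nC(\rho)-kR_{RD}(\mathsf{D}_x)=O(\sqrt{k})$ for large $k$, but it does not cover the example above.
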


\begin{figure}[!htpb]
	\centering  
	\includegraphics[width=0.3\textwidth]{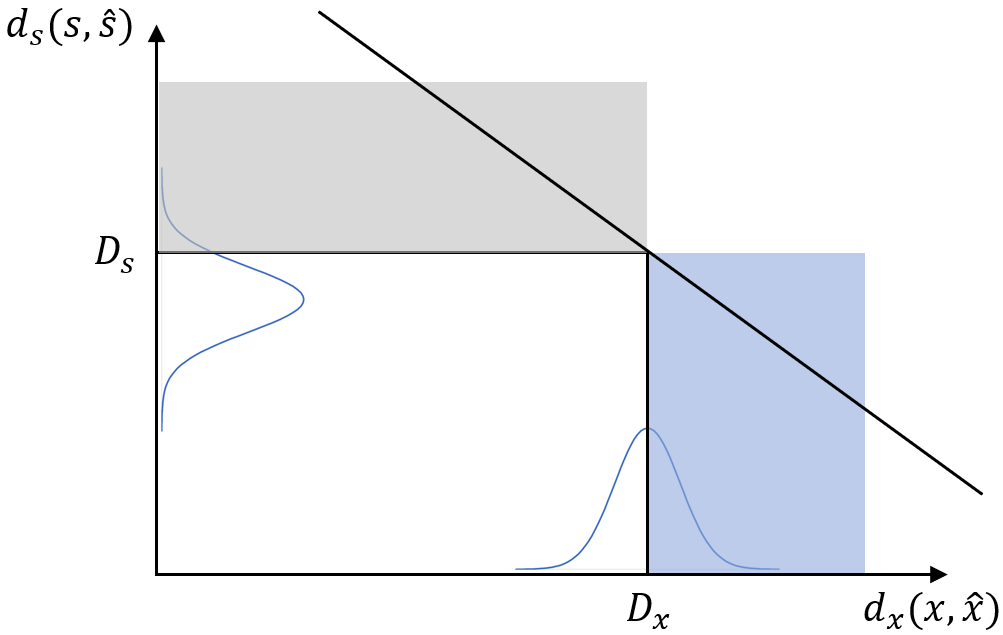}
	\caption{The geometry of the excess-distortion probability calculation. }
	\label{Fig_approximation} 
\end{figure}  
\begin{IEEEproof}
	This corollary can be proved by analyzing the monotonicity of the Gaussian approximation bound.
	The intuitive explanation of this corollary is as illustrated in Fig. \ref{Fig_approximation}. When $\mathsf{D}_x+\sigma^2_{N_s}-\mathsf{D}_s<0$, the distortion constraint on the semantic information becomes an inactive constraint, and the expectation of $\mathsf{d}_s(S,\hat{S})$ lies below the constraint $\mathsf{D}_s$.  Meanwhile, the variance of $\mathsf{d}_s(S,\hat{S})$ is strictly larger than the variance of $\mathsf{d}_x(X,\hat{X})$ in this case. Hence, the probability mass of the grey area will be much larger than that of the blue area. Thus, the optimal linear relaxation function $\mathsf{D}_L$ should be the one which includes the blue area and { excludes the grey area}, namely the vertical line. 
\end{IEEEproof}

\begin{Remark}
	Since there exists a Markov chain $S-X-\hat{X}-\hat{S}$ in the case of transmitting GMS over AWGN channels, which is proved in \cite[Appendix III]{Liu-semantic-2022}, Corollary \ref{achievability bound-simplified} rather than Theorem \ref{achievability bound-JSCC2} is used to establish the achievability bound.
	This fact leads to the fact that we can decode the semantic information $S$ functionally according to $\hat{X}$ as the reconstruction of the observable source. More precisely, we choose a simple decoding function for the semantic information as $\hat{s}=h(\hat{x})=\hat{x}$. 
	Since for this particular codebook, we only need to optimizing with the codebook size $M_1$ with $M_2=1$, hence we only have one parameter $\gamma$ to optimize with. 
\end{Remark}

\begin{figure}[!htpb]
	\centering  
	\includegraphics[width=0.5\textwidth]{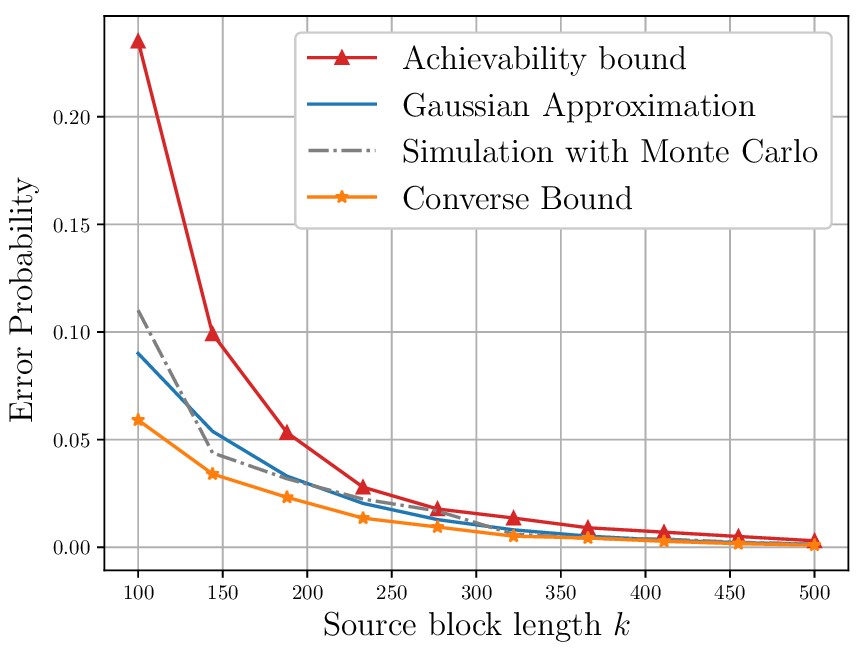}
	\caption{The geometry of the excess-distortion probability calculation. }
	\label{error vs blocklength} 
	\end{figure}  
 
 In order to show the proposed results more directly, we plot the numerical evaluation of the proposed bounds on excess-distortion probability over different blocklengths in Fig. \ref{error vs blocklength}. Specifically, we set the channel SNR as 0 dB, the variance of the observable source as $\sigma^2_X=1$, the variance of $N_s$ as $\sigma^2_{N_s}=0.5$, the distortion constraint for reconstructing the observable source as $\mathsf{D}_x=0.6$, and the distortion constraint for reconstructing the semantic source as $\mathsf{D}_s=1$. Moreover, we set the ratio $n/k=1.2$, and vary $k$ from 100 to 500 with a step size of 50.
In general, we can see that our proposed inner and converse bounds sandwich the Gaussian approximation bound, and converge to zero when the blocklength increases. Specifically, when the blocklength $k=100$, the gap between the achievability bound and the Gaussian approximation is about 0.119, and that between the Gaussian approximation and the converse bound is 0.028. However, the values of these two gaps decrease to 0.002 and 0.004 respectively when $k=500$.

\section{Conclusion}
{ 
This paper investigates computable bounds for joint source-channel coding with hierarchical sources in the finite blocklength regime. To address the challenge of precisely characterizing the joint excess-distortion probability, we introduce a novel $\mathsf{d}(\cdot)$-based distortion relaxation framework. 
Based on this formulation, the nonasymptotic  achievability and converse bounds are derived for  one-shot transmission. 
Furthermore, for the block-wise transmission, where the source is an i.i.d. sequence of length $k$, which is encoded into a sequence of length $n$ and then transmitted over a memoryless channel, Gaussian approximations of both achievability and converse bounds are derived via a two-layer superposition coding scheme, together with the optimization of a sum of correlated $Q$-functions, thereby explicitly capturing the statistical dependence between semantic and observable distortions. 
In particular, the proposed Gaussian approximation is shown to be optimal for the transmission of GMS over AWGN channels. 
Furthermore, for the transmission of a GMS over AWGN channels, we establish explicit bounds by leveraging a novel geometric characterization involving three correlated spherical regions, which generalizes the classical two-region framework used in single-source excess-distortion analysis.
}

\begin{appendices}
\section{Proof of Theorem \ref{Thm_converse}}\label{appendix-proof of general converse}
	
     For any $\gamma \geq 0$, we have
\begin{align}
& \mathbb{P} \big[\jmath_{S,X}(S, X) -\imath_{Y;Z} (Y; Z) \geq \gamma \big] \nonumber \\
=& \mathbb{P} \big[ \jmath_{S,X}(S, X)  -\imath_{Y;Z} (Y; Z) \geq \gamma, \mathsf{d}(\mathsf{d}_s({s}, {\hat{s}}),\mathsf{d}_x({x}, {\hat{x}}))>\mathsf{d}(\mathsf{D}_s,\mathsf{D}_x) \big] \nonumber \\
	& +  \mathbb{P} \big[ \jmath_{S,X}(S, X)  -\imath_{Y;Z} (Y; Z)\geq \gamma, \mathsf{d}(\mathsf{d}_s({s}, {\hat{s}}),\mathsf{d}_x({x}, {\hat{x}})) \leq \mathsf{d}(\mathsf{D}_s,\mathsf{D}_x) \big] \nonumber \\
	\leq&  \mathbb{P} \big[ \mathsf{d}(\mathsf{d}_s({s}, {\hat{s}}),\mathsf{d}_x({x}, {\hat{x}}))>\mathsf{d}(\mathsf{D}_s,\mathsf{D}_x) \big] \nonumber \\
	&+ \underset{s,x,y,\hat{x},\hat{s}}{\sum} \mathbb{P} \big[ \mathsf{d}(\mathsf{d}_s({s}, {\hat{s}}),\mathsf{d}_x({x}, {\hat{x}}))\leq \mathsf{d}(\mathsf{D}_s,\mathsf{D}_x) \big] 
	\mathds{1} \big[ \jmath_{S,X}(s,x) -\imath_{Y;Z} (y; z)\geq \gamma \big] \nonumber \\ 
	\overset{(a)}{\leq} & \epsilon+\underset{s,x,y,\hat{x},\hat{s}}{\sum} \mathbb{P} \big[ \mathsf{d}_s({s}, {\hat{s}})\leq \mathsf{D}_s \cup  \mathsf{d}_x({x}, {\hat{x}})\leq \mathsf{D}_x \big] 
	\mathds{1} \big[ \jmath_{S,X}(s,x) -\imath_{Y;Z} (y; z)\geq \gamma \big] \nonumber \\ 
\overset{(b)}{\leq} & \epsilon
	+ \sum_{(s,x)} P_{SX}(sx) \sum_{y} P_{Y|X}(y|x) \sum_{z} \sum_{\hat{s}\in B_{\mathsf{D}_s}(s)} \sum_{\hat{x}\in B_{\mathsf{D}_x}(x)} P_{\hat{X}|Z}(\hat{x}|z) P_{\hat{S}|Z}(\hat{s}|z) P_{Z|Y}(z|y) \nonumber \\
	& \cdot \mathds{1} \big[ P_{Z|Y}(z|y)\leq P_{\bar{Z}}(z) \exp(\jmath_{S,X}(s,x) -\gamma) \big] \nonumber \\ 
\leq & \epsilon 
	+ \exp(-\gamma) \sum_{(s,x)} P_{SX}(sx) \exp(\jmath_{S,X}(s,x))
	\sum_{z} P_{\bar{Z}}(z)  \sum_{\hat{s}\in B_{\mathsf{D}_s}(s)} P_{\hat{S}|Z}(\hat{s}|z) \sum_{\hat{x}\in B_{\mathsf{D}_x}(x)}  P_{\hat{X}|Z}(\hat{x}|z)  \nonumber \\
	& \cdot \sum_{y} P_{Y|X}(y|x) \nonumber  \\
	\leq & \epsilon 
	+ \exp(-\gamma) \sum_{(s,x)} P_{SX}(sx) \exp(\jmath_{S,X}(s,x))
	\sum_{z} P_{\bar{Z}}(z)  \sum_{\hat{s}\in B_{\mathsf{D}_s}(s)} P_{\hat{S}|Z}(\hat{s}|z) \sum_{\hat{x}\in B_{\mathsf{D}_x}(x)}  P_{\hat{X}|Z}(\hat{x}|z) \nonumber \\
= & \epsilon+ \exp(-\gamma) \sum_{(s,x)} P_{SX}(sx) \exp(\jmath_{S,X}(s,x))
	P_{\hat{S}}(B_{\mathsf{D}_s}(s)) P_{\hat{X}}(B_{\mathsf{D}_x}(x)) \nonumber \\
	\overset{(c)}{\leq} & \epsilon+ \exp(-\gamma) \sum_{(s,x)} P_{SX}(sx)  \sum_{\hat{s}\in \hat{S}} P_{\hat{S}}(\hat{s})  \sum_{\hat{x}\in \hat{X}} P_{\hat{X}}(\hat{x})   
				\exp\left(\jmath_{S,X}(s,x)+\lambda^{\star} \big(\mathsf{d}(\mathsf{D}_s,\mathsf{D}_x)-\mathsf{d}(\mathsf{d}_s({s}, {\hat{s}}),\mathsf{d}_x({x}, {\hat{x}}))\big) \right) \nonumber \\
\overset{(d)}{\leq} & \epsilon_1 + \exp(-\gamma). \label{eq-main thm-converse}  
				\end{align}
where step $(a)$ follows from the sufficiency of relaxed excess distortion event shown in  \eqref{sufficient}; step $(b)$ follows from the definition of information density function; step $(c)$ follows from $\mathsf{d}(\mathsf{d}_s({s}, {\hat{s}}),\mathsf{d}_x({x}, {\hat{x}}))-\mathsf{d}(\mathsf{D}_s,\mathsf{D}_x)>0$ if $\hat{x}\notin B_{\mathsf{D}_x}(x)$ and $\hat{s}\notin B_{\mathsf{D}_s}(s)$, and the exponential term is larger than zero;
step $(d )$ follows from the property of $\mathsf{d}$-tilted information that \begin{align*}
    \mathbb{E} \left[ \exp\left(\jmath_{S,X}(s,x)+\lambda^{\star} \big(\mathsf{d}(\mathsf{D}_s,\mathsf{D}_x)-\mathsf{d}(\mathsf{d}_s({s}, {\hat{s}}),\mathsf{d}_x({x}, {\hat{x}}))\big) \right)   \right]\leq 1.
\end{align*}
By optimizing over all possible distributions, we complete the proof Theorem \ref{Thm_converse} by \eqref{eq-main thm-converse}.

\section{Proof of Theorem \ref{achievability bound-JSCC2}} \label{analyze ex-dis prob}
In this section, we analyze the excess-distortion probability of the proposed coding scheme in Section \ref{achi-scheme}. For the given source tuple $(s,x) \in \mathcal{X}\times\mathcal{S}$ which is encoded as channel input $y_m$, an excess distortion event may happen when the following three error events occur:
\begin{itemize}
    \item \textit{Error type I} (Error with channel transmission): Namely, the output of the channel decoder is not the correct channel input. We denote this event by $\mathcal{E}_1 \triangleq \{m_1 \neq \arg \underset{i}{\max} P(z|y_{1i}) \cup m_2 \neq \arg \underset{i}{\max} P(z|y_{1i}, y_{2i}) \}$. 
    \item \textit{Error type II} (Odd observable source): The output of observable source is an ``odd'' one so that the source encoder failed to find any reconstruction $\hat{x}$ in the codebook  {$\hat{x}^{M_1}$ which satisfies the distortion constraint. We denote this event by $\mathcal{E}_2 \triangleq \{ m_1=M_1\}$}. 
    \item \textit{Error type III} (Odd semantic source): The semantic source is an ``odd'' one so that the reconstruction of semantic source exceeds the distortion requirement. We denote this event as $\mathcal{E}_3 \triangleq\{ \mathsf{d}(s,\hat{s}_{m_1,m_2})>\mathsf{D}_s\}$. 
\end{itemize}
Note that an excess distortion event may not occur even if these three error events occur, hence the probability of these error events provides an upper bound of the excess-distortion probability. 
That is to say, we will derive the upper bound of $P\{\mathcal{E}_1 \cup \mathcal{E}_2 \cup \mathcal{E}_3|X\}$ to characterize the excess distortion rate for observable source output $x$. Obviously, we have \begin{align}
  \epsilon &\leq \mathbb{E} \left[  P\{\mathcal{E}_1 \cup \mathcal{E}_2 \cup \mathcal{E}_3|X\}\right] \nonumber\\
  &\leq \mathbb{E} \left[ P\{\mathcal{E}_1|X\}\right]+ \mathbb{E} \left[ P\{\mathcal{E}_2 |X\}\right]  + \mathbb{E}\left[ P\{\mathcal{E}_3|\Bar{\mathcal{E}_2},X\}\right],\label{achievability bound-general-mid}
\end{align} 
Hence, in the following we will derive an upper bound for the three terms in the RHS of \eqref{achievability bound-general-mid} one by one. 
\begin{itemize}
	\item[(1)] Since a SSCC framework is used here, and the type I error is only about channel transmission error event, which is independent of the source, the error probability is the same as \cite[Theorem 16]{Yury-10}. 
	Thus, by applying the result in \cite{Yury-10}, the error probability of type I error can be upper bounded as  
	\begin{align}
		\mathbb{E} \left[ P\{\mathcal{E}_1|X\}\right] \leq \underset{P_{Y_1Y_2}}{\inf} \, \left\{  \mathbb{E}[\exp(-|\imath(Y_1;Z)-\log(M_1)|^{+})]
		+ \mathbb{E}[\exp(-|\imath(Y_2;Z|Y_1)-\log(M_2)|^{+})]\right\}. \label{Equ_errorI}
	\end{align}
	
	\item[(2)] For some output source symbol $x$, the type II error happens when all codewords in the codebook $\hat{x}^{M_1}$ exceed the $\mathsf{D}_x$ distortion constraint with $x$. That is to say,  $\hat{x}_i \notin B_{\mathsf{D}_x}\left(x\right) $ for all $i\in [1:M_1]$. 
	Since the codebook is independently generated with probability $P_{\hat{X}}$, hence the probability of one generated codeword exceeding the distortion constraint is $1-P_{\hat{X}}\left(B_{\mathsf{D}_x}\left(x\right)\right)$, and the probability of all codewords exceeding the distortion constraint is
	\begin{align}
		\left(1-P_{\hat{X}}\left(B_{\mathsf{D}_x}\left(x\right)\right)\right)^{M_1}. 
	\end{align}
	By averaging over all $x \in \mathcal{X}$, the error probability of type II error can be upper bounded as 
	\begin{align}
		\mathbb{E} \left[ P\{\mathcal{E}_2 |X\}\right] \leq  \underset{P_{\hat{X}}}{\inf} \, \mathbb{E} \left[ \left(1-P_{\hat{X}}\left(B_{\mathsf{D}_x}\left(x\right)\right)\right)^{M_1}\right]. \label{Equ_errorII}
	\end{align}
	
	\item[(3)]     For some source output $x$, it is first encoded as some $\hat{x}$ then encoded as an $\hat{s}$ in the conditional codebook of $\hat{x}$. Hence, the probability of excess distortion depends on $\hat{x}$. Thus, we have 
	\begin{align}
		P\{\mathcal{E}_3|\Bar{\mathcal{E}_2},x\}= \sum_{\hat{x}\in B_{\mathsf{D}_x}(x)} P\{\mathcal{E}_3|g_O(f_S(x))=\hat{x},x\} P\{g_O(f_S(x))=\hat{x}\}\label{lemma-achi-mid1},
	\end{align}
	where $P\{g_O(f_S(x)))=\hat{x}\}$ denotes the probability of $x$ being encoded as $\hat{x}$.
	
	We first focus on $P\{g_O(f_S(x))=\hat{x}\}$. If $x$ is encoded as some $\hat{x}$, it means that $\hat{x}$ is the codeword satisfying the $\mathsf{D}_x$ distortion requirement and with the smallest index. It indicates that all codewords generated before this particular $\hat{x}$ are not satisfying the distortion constraint. Hence, we have 
	\begin{align}
		P\{g_O(f_S(x))=\hat{x}\}=P(\hat{x}) \sum_{i=0}^{M_1-1}\left(1-P_{\hat{X}}\left(B_{\mathsf{D}_x}\left(x\right)\right)\right)^{i}=P(\hat{x})T(x)\label{lemma-achi-mid2}.
	\end{align}
	Note that $T(x)=\sum_{i=0}^{M_1-1}\left(1-P_{\hat{X}}\left(B_{\mathsf{D}_x}\left(x\right)\right)\right)^{i}$ is a function of $x$ and independent of the value of $\hat{x}$.
	Then we focus on $P\{\mathcal{E}_3|g_O(f_S(x))=\hat{x},x\}$, thus the excess-distortion probability for reconstructing the semantic information if the source output $x$ is firstly encoded as some $\hat{x}$. We have
	\begin{align}
		&P\{\mathcal{E}_3|g_O(f_S(x))=\hat{x},x\}\nonumber\\
		=&P\{d(S,g_S(f_S(X))>\mathsf{D}_s  |g_O(f_S(x))=\hat{x},x\}\nonumber\\
		\overset{(a)}{=}& \mathbb{E} \left[\pi(x,g_S(f_S(x))|\hat{x}\right]\nonumber\\
		\overset{(b)}{=}& \int_{0}^{1} P \left\{ \pi(x,g_S(f_S(x))|\hat{x})>t\right\}dt. \label{lemma-achi-mid3}
	\end{align}
	The expectation in step $(a)$ is to average with respect to the randomness of codebook. Step $(b)$ is because the function $\pi(\cdot)$ takes value from $[0,1]$. 
		
	By recalling codebook generation in Section \ref{achi-scheme}, the excess-distortion probability being larger than $t$ means the all codewords in the conditional satellite codebook have an excess-distortion probability larger than $t$. Equivalently, it can be obtained that 
	\begin{align}
		&\mathds{1} \{\pi(x,g_S(f_S(x))|\hat{x})>t\}\nonumber\\
		= &\mathds{1} \left\{ \underset{i\in [1:M_2]}{\min} \pi(x,\hat{s}_{m_1,i}|\hat{x})>t\right\}\nonumber\\
		=& \Pi_{i=1}^{M_2}\mathds{1} \left\{  \pi(x,\hat{s}_{m_1,i}|\hat{x})>t\right\}\label{lemma-achi-mid4}.
	\end{align}
	
	Therefore, 
	\begin{align}
		&\mathbb{E} \left[ P\{\mathcal{E}_3|\Bar{\mathcal{E}_2},X\}\right] \nonumber\\
		\overset{(a)}{=} & \underset{P_{\hat{S}|\hat{X}}}{\inf}\,\mathbb{E} \left[ T(X)\sum_{\hat{x}\in B_{\mathsf{D}_x}\left(X\right)} P(\hat{x})P\{d(S,g_S(f_S(X)))>\mathsf{D}_s   |g_O(f_S(X))=\hat{x}\}\right] \nonumber\\
		\overset{(b)}{=}& \underset{P_{\hat{S}|\hat{X}}}{\inf}\,\mathbb{E} \left[ T(X)\sum_{\hat{x}\in B_{\mathsf{D}_x}\left(X\right)} P(\hat{x})\mathbb{E}\left[\pi(X,g_S(f_S(x))|\hat{x})\right]\right] \nonumber\\
		\overset{(c)}{=}&\underset{P_{\hat{S}|\hat{X}}}{\inf}\,\mathbb{E} \left[ T(X)\sum_{\hat{x}\in B_{\mathsf{D}_x}\left(X\right)} P(\hat{x})\int_{0}^{1} P^{M_2} \left\{ \pi(X,\hat{S}|\hat{x})>t\right\}dt\right], \label{Equ_errorIII}
	\end{align}
	where $(a)$ follows from  \eqref{lemma-achi-mid2}, $(b)$ is from  \eqref{lemma-achi-mid3}, and $(c)$ is obtained by substituting  \eqref{lemma-achi-mid4} into  \eqref{lemma-achi-mid3}.       
	
	Note  that the proof of type III error is different from the conventional work on indirect source coding \cite{Kostina-lossyJSCC-13}. This is mainly because we use a conditional satellite codebook to encode the semantic information $S$ instead of encoding it directly as \cite{Kostina-lossyJSCC-13}.  
\end{itemize}

By adding the probabilities of three error types (i.e. \eqref{Equ_errorI}, \eqref{Equ_errorII}, and \eqref{Equ_errorIII}) together, Theorem \ref{achievability bound-JSCC2} is obtained. 

\section{Proof of Corollary \ref{achievability bound-simplified}}\label{appendix-proof of achi}
The corollary can be derived from Theorem \ref{achievability bound-JSCC2} with the setup of $M_1=M$ and $M_2=1$.
We first focus on the term $T(X)\sum_{\hat{x}\in B_{\mathsf{D}_x}\left(X\right)} P(\hat{x})\int_{0}^{1} P^{M_2} \{ \pi(X,\hat{S}|\hat{x})>t\}dt$ on the RHS of  \eqref{Equ_Inner_GA1}, which gives the probability of Type III error defined in Section \ref{analyze ex-dis prob}.

First, from the codebook generation, we can find that for any given $\hat{x}$ the satellite codebook is deterministic. It means that 
\begin{equation}
     P\big\{ \pi(X,\hat{S}|\hat{x})>t\big\}\\
     =\mathds{1}\big\{ P\{\mathsf{d}_s(S,h(\hat{x}))>\mathsf{D}_s\}>t\big\},
\end{equation}
and hence 
\begin{align}
    &\int_{0}^{1} P^{M_2} \big\{ \pi(X,\hat{S}|\hat{x})>t\big\}dt\nonumber\\
    =&\int_{0}^{1} P \big\{ \pi(X,\hat{S}|\hat{x})>t\big\}dt\nonumber\\
    =& \int_{0}^{1} \mathds{1}\big\{ P\{\mathsf{d}_s(S,h(\hat{x}))>\mathsf{D}_s\}>t\big\}dt\nonumber\\
    =&  P\big\{\mathsf{d}_s(S,h(\hat{x}))>\mathsf{D}_s\big\}.\label{appendixa-1}
\end{align}
For any given $\hat{x}$, the probability of $\hat{s}=h(\hat{x})$ exceeding the distortion constraint $\mathsf{D}_s$ only related to the randomness of $S$.
Meanwhile, we have 
\begin{align}
    T(X)=&\sum_{i=0}^{M-1}\left(1-P_{\hat{X}}\left(B_{\mathsf{D}_x}\left(X\right)\right)\right)^{i}\nonumber\\
    =&\frac{1-\left(1-P_{\hat{X}}\left(B_{\mathsf{D}_x}\left(X\right)\right)\right)^{M}}{1-\left(1-P_{\hat{X}}\left(B_{\mathsf{D}_x}\left(X\right)\right)\right)}\nonumber\\
    \leq & \frac{1}{P_{\hat{X}}\left(B_{\mathsf{D}_x}\left(X\right)\right)}.\label{appendixa-2}
\end{align}
The probability of some $x$ being encoded as $\hat{x}$ within the distortion constraint $\mathsf{D}_x$ is given by $T(x)P(\hat{x})$, which is upper bounded by $P(\hat{x}) / P_{\hat{X}}\!\left(B_{\mathsf{D}_x}(x)\right)$, i.e., the probability of selecting $\hat{x}$ from the distortion ball $B_{\mathsf{D}_x}(x)$.
By combining the results in  \eqref{appendixa-1} and  \eqref{appendixa-2} together, we have that
\begin{align}
    &T(X)\sum_{\hat{x}\in B_{\mathsf{D}_x}\left(X\right)} P(\hat{x})\int_{0}^{1} P^{M_2} \big\{ \pi(X,\hat{S}|\hat{x})>t\big\}dt\nonumber\\
    \leq &\sum_{\hat{x}\in B_{\mathsf{D}_x}\left(X\right)}\frac{P(\hat{x})}{P_{\hat{X}}\left(B_{\mathsf{D}_x}\left(X\right)\right)}P\big\{\mathsf{d}_s(S,h(\hat{x}))>\mathsf{D}_s\big\}, 
\end{align}
where the RHS is the probability that an $\hat{x}$ within $B_{\mathsf{D}_x}\left(x\right)$ is picked and its functional reconstruction $\hat{s}$ exceeds the $\mathsf{D}_s$ distortion constraint of the semantic information $S$.

By letting $M=\lfloor \frac{\gamma}{P_{\hat{X}}\left(B_{\mathsf{D}_x}\left(x\right)\right) } \rfloor$ for some $\gamma > 0$, it follows that
\begin{align}
	\left(1-P_{\hat{X}}\left(B_{\mathsf{D}_x}\left(X\right)\right)\right)^{M_1}
	\leq & \left(1-P_{\hat{X}}\left(B_{\mathsf{D}_x}\left(X\right)\right)\right)^{\frac{\gamma}{P_{\hat{X}}\left(B_{\mathsf{D}_x}\left(x\right)\right) }-1} \nonumber\\
	\leq & \exp \left\{- P_{\hat{X}}\left(B_{\mathsf{D}_x}\left(X\right)\right) \left(\frac{\gamma}{P_{\hat{X}}\left(B_{\mathsf{D}_x}\left(x\right)\right) }-1\right)   \right\} \nonumber\\
	\leq & \exp\{1-\gamma\}
\end{align}
    where the last inequality utilizes $(1-a)^{1/a-1} \leq e^{a-1} \leq e^{1}$ for small $a$. Combining these bounds leads to the final expression in Corollary \ref{achievability bound-simplified}.

\section{Proof of Theorem  \ref{Thm_Gau_Converse1} } \label{appendix-proof of gaussian approximation linear}
	Following Appendix C in \cite{Kostina-lossyJSCC-13}, we ignore all rates exceeding the bound
	\begin{align}
		\frac{k}{n} \geq \frac{C(\rho)}{R_{S,X}(\mathsf{D}_s,\mathsf{D}_x)-3\tau}, \label{Def_Converse_kn}
	\end{align}
	for any $0<\tau< R_{S,X}(\mathsf{D}_s,\mathsf{D}_x)/3$.
	Specifically, the excess-distortion probability of any code with such rate will converge to 1 as $n\to \infty$. Hence, for any $\epsilon<1$, there exists a blocklength threshold $n_0$ such that no $(M, k,n, \mathsf{d}_s, \mathsf{d}_x, \epsilon)$  code can exist for $k, n$ satisfying \eqref{Def_Converse_kn} for all $n\geq n_0$.
 
	Denote by $C(\rho)=\max_{P_Y} \mathbb{E}[\imath_{Y;Z}(Y;Z)]$ the channel capacity, and let $\mathcal{P}^{\star}$ be the set of capacity-achieving distributions, i.e.
	\begin{align}
		\mathcal{P}^{\star}
		= \left\{  P_Y:  \mathbb{E}[\imath_{Y;Z}(Y;Z)] =C(\rho) \right\}.
	\end{align}
	
	If either $V(\rho)>0$ or $\tilde{V}(\rho)>0$, let
	\begin{align}
		\gamma= & \frac{1}{2} \log (n+1), \label{Equ_Converse_gamma}\\
		\epsilon_{k,n}=&\epsilon+\frac{B}{\sqrt{n+k}}+\frac{1}{\sqrt{n}}+\frac{4C(\rho)^2 }{R^2(\mathsf{D}_s,\mathsf{D}_x)\bar{\Delta}^2 } \frac{\tilde{V}}{k}, \label{Equ_Converse_err}
	\end{align}
	where $B>0$ can be chosen in the sequel, and $k, n$ are chosen so that there is a sequence of $(k,n,\mathsf{d}_s,\mathsf{d}_x,\epsilon')$ codes such that
	\begin{align}
		-3k\tau 
		\leq nC(\rho)- kR_{S,X}(\mathsf{D}_s,\mathsf{D}_x) 
		\leq \sqrt{nV(\rho)+k\tilde{V}(\rho)}Q^{-1}(\epsilon_{k,n})
		-\gamma, \label{Equ_Converse_Cons}
	\end{align}
	then $\epsilon'\geq \epsilon_{k,n}$.
	Theorem \ref{Thm_converse} implies that the error probability of every $(k,n,\mathsf{d}_s,\mathsf{d}_x,\epsilon')$ code must satisfy for an arbitrary sequence $x^n\in \mathcal{X}^n$
	\begin{align}
		\epsilon' \geq & 
	    \sup_{\lambda\geq 0} \left\{ \mathbb{P} \left[ \sum_{i=1}^{k} \jmath_{S,X}(s_i, x_i)  - \sum_{i=1}^{n}\imath_{Y;Z} (z_i; y_i)\geq \gamma \right]
		-\exp(-\gamma)\right\}. \label{Equ_Def_err1}
	\end{align}

In order to apply Lemma 1, we isolate the typical set of source sequences 
	\begin{align}
		\mathcal{T}_{k,n}
		=\left\{ s^k\in \mathcal{S}^k:  \left|  \sum_{i=1 }^{k} \jmath_{S,X}(s_i, x_i)-nC(\rho)   \right| \leq n \bar{\Delta}  -\gamma \right\},
	\end{align}
	for some constant $\bar{\Delta}>0$. 
	Observe that
	\begin{align}
		&\mathbb{P} \left[ S^k \notin  \mathcal{T}_{k,n} \right] \nonumber\\
		=&   \mathbb{P} \left[ \left|  \sum_{i=1 }^{k} \jmath_{S,X}(s_i, x_i))-nC(\rho)   \right| > n \bar{\Delta}  -\gamma\right] \nonumber\\
		\leq &  \mathbb{P} \left[ \left|  \sum_{i=1 }^{k} \jmath_{S,X}(s_i, x_i))-kR_{S,X}(\mathsf{D}_s,\mathsf{D}_x) \right|  
			+  \left|  nC(\rho)-kR_{S,X}(\mathsf{D}_s,\mathsf{D}_x) \right| + \gamma> n \bar{\Delta}  \right] \nonumber\\
		\overset{(a)}{\leq} &  \mathbb{P} \left[ \left|  \sum_{i=1 }^{k} \jmath_{S,X}(s_i, x_i))-kR_{S,X}(\mathsf{D}_s,\mathsf{D}_x) \right| > k \frac{\bar{\Delta} R_{S,X}(\mathsf{D}_s,\mathsf{D}_x)}{2C(\rho)} \right] \nonumber\\
		 \overset{(b)}{ \leq} & \frac{4C(\rho)^2 }{R^2(\mathsf{D}_s,\mathsf{D}_x)\bar{\Delta}^2 } \frac{\tilde{V}}{k} \label{Equ_Converse_NoTy2}
	\end{align}
	where step $(b)$ holds by Chebyshev's inequality, 
	 and step $(a)$ comes from the fact that
  \begin{align}
      n\bar{\Delta}-\gamma- \left|  nC(\rho)-kR_{S,X}(\mathsf{D}_s,\mathsf{D}_x) \right|\geq k \frac{\bar{\Delta} R_{S,X}(\mathsf{D}_s,\mathsf{D}_x)}{2C(\rho)}.\label{Equ_Converse_NoTy14}
  \end{align}
  The inequality of \eqref{Equ_Converse_NoTy14} can be proved as follows
	\begin{align}
		& n\bar{\Delta}-\gamma- \left|  nC(\rho)-kR_{S,X}(\mathsf{D}_s,\mathsf{D}_x) \right| \nonumber\\
		\overset{(a)}{\geq} & n\bar{\Delta}-\gamma-3k\tau  \nonumber\\
		\overset{(b)}{\geq} & n\frac{3\bar{\Delta}}{4}-3k\tau \nonumber\\
		\overset{(c)}{\geq} & k\frac{3\bar{\Delta}}{4C(\rho)}(R_{S,X}(\mathsf{D}_s,\mathsf{D}_x) -3\tau) -3k\tau \nonumber\\
		\overset{(d)}{\geq} & k \frac{\bar{\Delta} R_{S,X}(\mathsf{D}_s,\mathsf{D}_x)}{2C(\rho)} \nonumber.
	\end{align}
	where step $(a)$ holds for large enough $n$ due to \eqref{Equ_Converse_Cons};
	step $(b)$ holds for large enough $n$ by the choice of $\gamma$ in \eqref{Equ_Converse_gamma};
	step $(c)$ is by \eqref{Equ_Converse_Cons};
	and step $(d)$ holds for a small enough $\tau>0$.
	
By \cite[Lemma 1]{Kostina-lossyJSCC-13}, $\epsilon'$ can be lower bounded by
	 \begin{align}
	 	\epsilon' \geq & 
	 	 \mathbb{E} \left[ \min_{x^n\in \mathcal{X}^n} \mathbb{P} \left[ \sum_{i=1}^{k} \jmath_{S,X}(s_i, x_i))  - \sum_{i=1}^{n}\imath_{Y;Z}(y_i; z_i)\geq \gamma | S^k\right] 
	 	 \cdot 1\left\{S^k\in \mathcal{T}_{k,n} \right\}\right]-\frac{1}{\sqrt{n+1}} \\
	 	 \geq & \mathbb{E} \left[ \mathbb{P} \left[ \sum_{i=1}^{k} \jmath_{S,X}(s_i, x_i))  - \sum_{i=1}^{n}\imath_{Y;Z}(y_i; z_i)\geq \gamma | S^k\right] 
	 	 \cdot 1\left\{S^k\in \mathcal{T}_{k,n} \right\}\right]-\frac{K}{\sqrt{n}}-\frac{1}{\sqrt{n+1}}   \label{Equ_Converse_4}\\
	 	 = & \mathbb{P} \left[ \sum_{i=1}^{k} \jmath_{S,X}(s_i, x_i))  - \sum_{i=1}^{n}\imath_{Y;Z}(y_i; z_i)\geq \gamma, S^k\in \mathcal{T}_{k,n}  | S^k\right] -\frac{K}{\sqrt{n}}-\frac{1}{\sqrt{n+1}} \\
	 	 \geq & \mathbb{P} \left[ \sum_{i=1}^{k} \jmath_{S,X}(s_i, x_i))-\sum_{i=1}^{n}\imath_{Y;Z}(y_i; z_i)\geq \gamma \right] 
	 	 	 - \mathbb{P} \left[ S^k\notin \mathcal{T}_{k,n}  | S^k\right] -\frac{K}{\sqrt{n}}-\frac{1}{\sqrt{n+1}}  \label{Equ_Converse_5}\\
	 	 \geq &  \mathbb{P} \left[ \sum_{i=1}^{k} \jmath_{S,X}(s_i, x_i))-\sum_{i=1}^{n}\imath_{Y;Z}(y_i; z_i)\geq \gamma \right] 
	 	 -\frac{4C(\rho)^2 }{R^2(\mathsf{D}_s,\mathsf{D}_x)\bar{\Delta}^2 } \frac{\tilde{V}(\rho)}{k} -\frac{K}{\sqrt{n}}-\frac{1}{\sqrt{n+1}}  \label{Equ_Converse_6} \\
	 	 \geq  & \epsilon  \label{Equ_Converse_7}
	 \end{align}
	 where \eqref{Equ_Converse_4} is by \cite[Lemma 1]{Kostina-lossyJSCC-13};
	 \eqref{Equ_Converse_5} is by the union bound.
	  \eqref{Equ_Converse_7} is obtained by Berry-Esseen Theorem.
	  The expected value, the variance and the third moment of the random variables are given by
	  \begin{align}
	  		D_{n+k} = & \frac{n}{n+k} \mathbb{E}\left[ \imath_{Y;Z}(Y;Z)\right] -\frac{k}{n+k} R_{S,X}(\mathsf{D}_s,\mathsf{D}_x)\nonumber\\
	  			\leq & \frac{n}{n+k} C(\rho)-\frac{k}{n+k} R_{S,X}(\mathsf{D}_s,\mathsf{D}_x),  \label{Equ_Converse_D}\\
	  		V_{n+k} = &\frac{n}{n+k} \text{Var}\left( \imath_{Y;Z}(Y;Z)\right) -\frac{k}{n+k}  \text{Var}\left( \jmath_{S,X}(S,X)\right) \nonumber\\
	  			= & \frac{n}{n+k} V(\rho)-\frac{k}{n+k} \tilde{V}(\rho), \label{Equ_Converse_V} \\
	  		T_{n+k} = &\frac{n}{n+k} \left(\mathbb{E} \big[ \big| \imath_{Y;Z}(Y;Z)-\mathbb{E} [\imath_{Y;Z}(Y;Z)] \big|^3\big]  +\mathbb{E} \big[ \big| \jmath_{S,X}(S,X)-R_{S,X}(\mathsf{D}_s,\mathsf{D}_x)\big|^3\big]   \right). \label{Equ_Converse_T}
	  \end{align}
	 Applying \eqref{Equ_Converse_D} and \eqref{Equ_Converse_V} to \eqref{Equ_Converse_Cons}, we conclude that 
	 \begin{align}
	 	-\gamma \geq & (n+k)D_{n+k}-\sqrt{(n+k)V_{n+k}} Q^{-1}(\epsilon_{k,n}).
	 \end{align}
Thus, Theorem \ref{Thm_Gau_Converse1} can be proved. 

\section{Proof of Theorem  \ref{Thm_Gau_Ach1} }  \label{Appx_GN_Superposition}
By the coding scheme proposed in Sec. \ref{achi-scheme}, the overall transmission scheme decomposes into two JSCC processes of $\hat{X}$ and $\hat{S}$, where the decoding error probability is upper bounded by the first term and the last two terms on the RHS of  \eqref{Equ_Inner_GA1}.  
Next, we analyze the terms on the RHS of \eqref{Equ_Inner_GA1} to show that $\epsilon'\leq \epsilon$.  

\begin{itemize}
	\item[(1)] Following similar steps as in \cite[Appendix D]{Kostina-lossyJSCC-13}, by letting 
	\begin{align}
		M_1=&\left\lfloor \frac{\gamma_1}{P_{\hat{X}}\left(B_{\mathsf{D}_x}\left(X\right)\right)}\right\rfloor, \\
		\gamma_1=&\frac{1}{2}\log_e k+1,
	\end{align}
	and letting $k$ and $n$ be such that
	\begin{align} 
		nC_1(\rho)-k R_1(\mathsf{D}_x) 
		\geq  \sqrt{n V_1(\rho) + k \tilde{V}_1(\rho)} Q^{-1}(\epsilon_{k1})
		+ \bar{c}_1 \log k + \log\log\sqrt{k}  +c,  \label{Equ_GNA_TF12}
	\end{align}
	it can be obtained that 
	\begin{align}
		\mathbb{E}\left[ \exp\left( -|\imath_{Y_1;Z}(Y_1;Z)-\log(M_1)|^{+} \right) \right] + \mathbb{E} \bigg[ \left(1-P_{\hat{X}}\left(B_{\mathsf{D}_x}\left(X\right)\right)\right)^{M_1}\bigg]
		\leq & \epsilon_{k1}+ \frac{2}{\sqrt{k}} +\frac{B_1}{\sqrt{n+k}}. \label{Equ_Inner_GA_EF1}
	\end{align}

	\item[(2)] First, let 
	\begin{align}
		M_2\leq & \frac{\log_e \sqrt{k} }{P_{\hat{S}}\left(B_{\mathsf{D}_s}\left(X\right)\right)}, \label{Eq_M2_1} 
	\end{align}
	and let $k$ and $n$ be such that
	\begin{align} 
		nC_2(\rho)-k R_2(\mathsf{D}_s) 
		\geq  \sqrt{n V_2(\rho) + k \tilde{V}_2(\rho)} Q^{-1}(\epsilon_{k2})
		+ \bar{c}_2 \log k  +c.  \label{Equ_GNA_TF22} 
	\end{align}
	Following similar steps as in \cite[Lem. 5]{Kostina-lossyJSCC-13} and  \cite[Th. 8]{Kostina-lossyJSCC-13}, it follows that
	\begin{align}
		\mathbb{E}\left[ \exp\left( -|\imath(Y_2;Z|Y_1)-\log(M_2)|^{+} \right) \right] 
		\leq & \epsilon_{k2}+ \frac{1}{\sqrt{k}} +\frac{B_2}{\sqrt{n+k}}. \label{Equ_GNA_3}
	\end{align}
	
	Next, we upper bound the integral in the third term on the RHS of \eqref{Equ_Inner_GA1}  following similar steps as in \cite[Thm. 4 (59)]{Kostina-isc-2016}. 
	For any $\gamma_2\geq 0$, it yields that 
	\begin{align}
		\int_{0}^{1} P^{M_2} \left\{  {\pi(X,\hat{S}|\hat{x})}>t\right\} dt
		\leq& \int_{0}^{1} \left( e^{-\frac{M_2}{\gamma_2}} 
		+ \left|1- \gamma_2  P \left[  \pi(X,\hat{S}|\hat{x})>t \right]  \right|^+ \right)  dt	 \\
		=& e^{-\frac{M_2}{\gamma_2}}  
		+  \int_{0}^{1}    1\{ D(P_{\hat{S}} \| P_{\tilde{S}}) > \log \gamma_2 \}	dt  \\
		=&e^{-\frac{M_2}{\gamma_2}}  
		+ P\left[ \underset{ P_{\hat{S}}: \pi(x,\hat{S}|\hat{x})\leq t }{\inf}  D(P_{\hat{S}} \| P_{\tilde{S}})>  \log \gamma_2  \right], \label{Equ_GNA_T31}
	\end{align}
	where $t$ is uniformly distributed over $[0,1]$, and  the  second step is by  \cite[(63)]{Kostina-isc-2016}, and $D(\cdot\| \cdot)$ is the relative entropy.
	Next, we upper bound the second term on the RHS of \eqref{Equ_GNA_T31}.
	Define the  typical set  as
	\begin{align}
		\mathcal{T}_k = \left\{ x^k \in \mathcal{A}^k : \left| \frac{1}{k} \text{type}(x^k) - P_X \right|^2 \leq |\mathcal{A}|  \frac{\log k}{k} \right\},  
	\end{align}
	where $|\cdot|$ denotes the Euclidean norm. By \cite[Lem. 1]{Kostina-isc-2016}, we obtain that
	\begin{align}
		\mathbb{P}\left( X^k \notin \mathcal{T}_k \right) \leq \frac{2 |\mathcal{A}|}{\sqrt{k}}. 
	\end{align}
	Thus, \eqref{Equ_GNA_T31} can be upper bounded by
	\begin{align}
		&P\left[ \underset{ P_{\hat{S}}: \pi(x,\hat{S}|\hat{x})\leq t }{\inf}  D(P_{\hat{S}} \| P_{\tilde{S}})>  \log \gamma_2  \right]  \\
		=&	P\left[ \underset{ P_{\hat{S}}: \pi(x,\hat{S}|\hat{x})\leq t }{\inf}  D(P_{\hat{S}} \| P_{\tilde{S}})>  \log \gamma_2,  X^k \in \mathcal{T}_k  \right] 
		+	P\left[ \underset{ P_{\hat{S}}: \pi(x,\hat{S}|\hat{x})\leq t }{\inf}  D(P_{\hat{S}} \| P_{\tilde{S}})>  \log \gamma_2,  X^k \notin \mathcal{T}_k \right] \\
		\leq & 	P\left[ \underset{ P_{\hat{S}}: \pi(x,\hat{S}|\hat{x})\leq t }{\inf}  D(P_{\hat{S}} \| P_{\tilde{S}})>  \log \gamma_2,  X^k \in \mathcal{T}_k  \right] 
		+  \frac{2+2 |\mathcal{A}|}{\sqrt{k}}. \label{Equ_GNA_T35}
	\end{align}
	Then,  following similar steps of \cite[(180)-(191)]{Kostina-isc-2016}, we obtain that
	\begin{align}
		&P\left[ \underset{ P_{\hat{S}}: \pi(x,\hat{S}|\hat{x})\leq t }{\inf}  D(P_{\hat{S}} \| P_{\tilde{S}})>  \log \gamma_2,  X^k \in \mathcal{T}_k  \right]  \nonumber\\
		\leq & P\left[  \sum_{i=1}^k \jmath_{X}(X_i,\hat{S}_i|\hat{x} ) + \lambda_2 \sqrt{k  \tilde{V}_2(\rho)} G + \mathcal{O}(\log k) \geq \log \gamma_2 \right] \label{Equ_GNA_T32}
	\end{align}
	where $G \sim \mathcal{N}(0, 1)$. Denote $\sum_{i=1}^k \jmath_X(X_i,\hat{S}_i|\hat{x} ) + \lambda_2 \sqrt{k  \tilde{V}_2(\rho)} G + \mathcal{O}(\log k)\triangleq g(X^k)$. Let  
	\begin{align}
		\log \gamma_2 = &k R_2(\mathsf{D}_s) + \sqrt{k \tilde{V}_2(\rho)} Q^{-1}\left(\frac{1}{\sqrt{k}} \right). \label{Eq_gamma2}
	\end{align}
	By Berry-Esseen Theorem, we have
	\begin{align}
		\mathbb{P}\left[  \frac{g(X^k)-R_2(\mathsf{D}_s)}{\sqrt{k \tilde{V}_2(\rho)}} \geq \frac{\log \gamma_2-R_2(\mathsf{D}_s)}{\sqrt{k \tilde{V}_2(\rho)}} \right] 
		\leq & \frac{B_2}{\sqrt{k}} + Q\left( \frac{\log \gamma_2-R_2(\mathsf{D}_s)}{\sqrt{k \tilde{V}_2(\rho)}} \right) \\
		=&\frac{B_2}{\sqrt{k}} +\frac{1}{\sqrt{k}}
	\end{align}
	where $B_2$ is the Berry-Esseen ratio of $\sum_{i=1}^k\jmath_X(X_i,\hat{S}_i|\hat{x} )$.
	Thus, \eqref{Equ_GNA_T32} can be upper bounded by
	\begin{align}
		P\left[ \underset{ P_{\hat{S}}: \pi(x,\hat{S}|\hat{x})\leq t }{\inf}  D(P_{\hat{S}} \| P_{\tilde{S}})>  \log \gamma_2,  X^k \in \mathcal{T}_k  \right]   
		\leq &\frac{B_2}{\sqrt{k}} +\frac{1}{\sqrt{k}}. \label{Equ_GNA_T34}
	\end{align}
	By \eqref{Equ_GNA_T31}, \eqref{Equ_GNA_T35} and \eqref{Equ_GNA_T34}, we have
	\begin{align}
		\int_{0}^{1} P^{M_2} \left\{  {\pi(X,\hat{S}|\hat{x})}>t\right\} dt
		\leq& \frac{B_2+3+2 |\mathcal{A}|}{\sqrt{k}}.  \label{Equ_GNA_1}
	\end{align}
	
	Moreover, by
	\begin{align}
		\sum_{\hat{x} \in B_{\mathsf{D}_x}(X)} P(\hat{x}) &= P_{\hat{X}}({B_{\mathsf{D}_x}}(X)), \nonumber 
	\end{align}
	we obtain that
	\begin{align}
		T(x) \sum_{\hat{x} \in B_{\mathsf{D}_x}(X)} P(\hat{x}) 
		&=\sum_{i=0}^{M_1 - 1}   \left(1-P_{\hat{X}}({B_{\mathsf{D}_x}}(X))\right)^i P_{\hat{X}}({B_{\mathsf{D}_x}}(X)) \nonumber\\
		&= \frac{1-  \left(1-P_{\hat{X}}({B_{\mathsf{D}_x}}(X))\right)^{M_1}}{ 1- \left(1-P_{\hat{X}}({B_{\mathsf{D}_x}}(X))\right)}  P_{\hat{X}}({B_{\mathsf{D}_x}}(X)) \nonumber\\
		&= 1-  \left(1-P_{\hat{X}}(B_{{\mathsf{D}_x}}(X))\right)^{M_1} \nonumber \\
		&\leq 1. \label{Equ_GNA_T36}
	\end{align}
	
	Finally, let $\frac{M_2}{\gamma_2}\geq \log \sqrt{k}$. Then, by \eqref{Equ_GNA_1} and \eqref{Equ_GNA_T36}, the third term on the RHS of \eqref{Equ_Inner_GA1} can be upper bounded by 
	\begin{align}
		&	\mathbb{E}[\exp(-|\imath(Y_2;Z|Y_1)-\log(M_2)|^{+})]
		+ \mathbb{E}_{\sim P_X} \bigg[T(X)\sum_{\hat{x}\in B_{\mathsf{D}_x}\left(X\right)} P(\hat{x})\int_{0}^{1} P^{M_2} \left\{  {\pi(X,\hat{S}|\hat{x})}>t\right\}dt\bigg]  \nonumber\\
		\leq &  
		\epsilon_{k2}  +\frac{B_2}{\sqrt{n+k}} 
		+ \frac{B_2+4+2 |\mathcal{A}|}{\sqrt{k}}. \label{Equ_Inner_GA_EF2}
	\end{align}
\end{itemize}
Let $\epsilon_{k1}=\epsilon_{1}- \frac{2}{\sqrt{k}} -\frac{B_1}{\sqrt{n+k}}$, $\epsilon_{k2} =\epsilon_{2}  -\frac{B_2}{\sqrt{n+k}} - \frac{B_2+3+2 |\mathcal{A}|}{\sqrt{k}}$, and $\epsilon_{1}+\epsilon_{2}\leq \epsilon$.
Then combining \eqref{Equ_Inner_GA_EF1} and \eqref{Equ_Inner_GA_EF2}, we conclude that 
$\epsilon'\leq \epsilon$.  
	Next, we analyze the error probability $\epsilon_1+\epsilon_2$.
Referring to \eqref{Equ_Inner_GA_EF1} and \eqref{Equ_Inner_GA_EF2}, two bounds on $\epsilon_1$ and $\epsilon_2$ are established:
\begin{align} 
	\epsilon_1\geq & Q\left(\frac{nC_1(\rho)-k R_1(\mathsf{D}_x)-\theta_1(k)}{\sqrt{n V_1(\rho) + k \tilde{V}_1(\rho)}}  \right) 
	=Q\left(\frac{nC_1(\rho)-k R_1(\mathsf{D}_x)}{\sqrt{n V_1(\rho) + k \tilde{V}_1(\rho)}}  \right)+\tilde{ \theta}_1(k) \label{Equ_Inner_e1}\\ 
	\epsilon_2\geq & Q\left(\frac{nC_2(\rho)-k R_2(\mathsf{D}_s) -\theta_2(k)}{\sqrt{n V_2(\rho) + k \tilde{V}_2(\rho)}}  \right)   
	= Q\left(\frac{nC_2(\rho)-k R_2(\mathsf{D}_s)}{\sqrt{n V_2(\rho) + k \tilde{V}_2(\rho)}}  \right)   +\tilde{ \theta}_2(k). \label{Equ_Inner_e2}
\end{align}   
By the assumption that $V_1(\rho)$ and $\tilde{V}_1(\rho)$ are bounded, the last step is by Taylor-series expansion of $Q(\cdot)$ around $\frac{nC_1(\rho)-k R_1(\mathsf{D}_x)}{\sqrt{n V_1(\rho) + k \tilde{V}_1(\rho)}}$ which yields
\begin{align}
	\left|Q\left(\frac{nC_1(\rho)-k R_1(\mathsf{D}_x)-\theta_1(k)}{\sqrt{n V_1(\rho) + k \tilde{V}_1(\rho)}}  \right)-  Q\left(\frac{nC_1(\rho)-k R_1(\mathsf{D}_x)}{\sqrt{n V_1(\rho) + k \tilde{V}_1(\rho)}}  \right)\right| 
	\leq A \log k
\end{align}
for some constants $A$. 
For simplicity, denote $T_1\triangleq nC_1(\rho)-k R_1(\mathsf{D}_x)$, $T_2\triangleq nC_2(\rho)-k R_2(\mathsf{D}_x)$, $\upsilon_1 \triangleq \sqrt{n V_1(\rho) + k \tilde{V}_1(\rho)}$, and $\upsilon_2\triangleq \sqrt{n V_2(\rho) + k \tilde{V}_2(\rho)}$.  	
By \eqref{Equ_Inner_e1} and \eqref{Equ_Inner_e2}, $\epsilon_1+\epsilon_2$ can be lower bounded by 
\begin{align}
	\epsilon_1+\epsilon_2\geq Q\left(\frac{T_1}{\upsilon_1}  \right)
	+Q\left(\frac{T_2}{\upsilon_2}  \right) +\tilde{ \theta}_1(k)+\tilde{ \theta}_2(k).
\end{align} 
Then, we consider the infimum of the lower bound of $ \epsilon_1 + \epsilon_2$, i.e.   
\begin{align}
	\inf_{\substack{(T_1, T_2): \\ T_1+ T_2= T }  }  g(T_1)
	\triangleq &
	\inf_{\substack{(T_1, T_2): \\ T_1+ T_2= T }  } \left\{  Q\left(\frac{T_1}{\upsilon_1}  \right)
	+Q\left(\frac{T_2}{\upsilon_2}  \right) \right\}, \label{Equ_Inner_Inf}
\end{align} 
where $T=n C(\rho)-k R_{S,X}(\mathsf{D}_s,\mathsf{D}_x)$, $ C(\rho)=C_1(\rho)+C_2(\rho)$ and $R_{S,X}(\mathsf{D}_s,\mathsf{D}_x)=R_1(\mathsf{D}_x)+R_2(\mathsf{D}_x)$. 
The optimal $T_1$ of \eqref{Equ_Inner_Inf} is the solution of $g'(T_1)=0$, which is given by
\begin{align}
	g'(T_1)=\frac{1}{\upsilon_1} f\left(\frac{T_1}{\upsilon_1}  \right)
	-\frac{1}{\upsilon_2} f\left(\frac{T-T_1}{\upsilon_2}  \right)=0 \label{Equ_Inner_inf_2}
\end{align}
where $f(x)=\frac{1}{ \sqrt{2\pi} } e^{-\frac{x^2}{2}}$.   
Next, we consider different cases of the solutions of \eqref{Equ_Inner_inf_2}.
\begin{itemize}
	\item[(1)] If $\frac{\upsilon_1}{\upsilon_2}< e^{-\frac{T^2}{2}} < \frac{\upsilon_2}{\upsilon_1}$, then we have
	\begin{align}
		g'(0)=&\frac{1}{\upsilon_1} 
		-\frac{1}{\upsilon_2} f\left(\frac{T}{\upsilon_2}  \right) \nonumber\\
		=&\frac{1}{\upsilon_1}
		-\frac{1}{\upsilon_2} e^{-\frac{T^2}{2}} \nonumber\\
		>& 0 \nonumber \\
		g'(T)=&\frac{1}{\upsilon_1} f\left(\frac{T}{\upsilon_1}  \right)
		-\frac{1}{\upsilon_2}  \\
		=& \frac{1}{\upsilon_1} e^{-\frac{T^2}{2}}
		-\frac{1}{\upsilon_2}   \nonumber\\
		>& 0. \nonumber 
	\end{align}
	Further, since $f(x)$ is decreasing with $x$, $g'(T_1)=0$ has no solution in $(0,T)$, and the infimum is achieved at $T_1=0$. Thus, 
	\begin{align}
		\inf_{\substack{(T_1, T_2): \\ T_1+ T_2= T }  }  g(T_1)
		= g(0)
		=Q\left(\frac{T}{\upsilon_2}  \right).
	\end{align} 
	Therefore, 
	\begin{align}
		\epsilon_1 + \epsilon_2
		\geq g(0)
		=Q\left(\frac{T}{\upsilon_2}  \right)
		+\tilde{ \theta}_1(k)+\tilde{ \theta}_2(k).
	\end{align}  		
	
	\item[(2)] Similarly, if $\frac{\upsilon_2}{\upsilon_1}< e^{-\frac{T^2}{2}} < \frac{\upsilon_1}{\upsilon_2}$, then we have
	\begin{align}
		g'(0)<& 0 \nonumber \\
		g'(T)<& 0. \nonumber 
	\end{align}
	Thus, the infimum is achieved at $T_1=T$, then 
	\begin{align}
		\epsilon_1 + \epsilon_2
		\geq g(T) 
		=Q\left(\frac{T}{\upsilon_1}  \right)
		+\tilde{ \theta}_1(k)+\tilde{ \theta}_2(k).
	\end{align}  
	
	\item[(3)] If $ e^{-\frac{T^2}{2}} <\min \Big\{ \frac{\upsilon_2}{\upsilon_1},  \frac{\upsilon_1}{\upsilon_2} \Big\}$, then we have
	\begin{align}
		g'(0)
		=& \frac{1}{\upsilon_1}
		-\frac{1}{\upsilon_2} e^{-\frac{T^2}{2}} 			
		> 0 \nonumber \\
		g'(T)=& \frac{1}{\upsilon_1} e^{-\frac{T^2}{2}}
		-\frac{1}{\upsilon_2} 
		<  0. \nonumber 
	\end{align}
	
	If $ e^{-\frac{T^2}{2}} >\max \Big\{ \frac{\upsilon_2}{\upsilon_1},  \frac{\upsilon_1}{\upsilon_2} \Big\}$, then we have
	\begin{align}
		g'(0)
		=& \frac{1}{\upsilon_1}
		-\frac{1}{\upsilon_2} e^{-\frac{T^2}{2}}
		< 0 \nonumber \\
		g'(T)=& \frac{1}{\upsilon_1} e^{-\frac{T^2}{2}}
		-\frac{1}{\upsilon_2}  
		>  0. \nonumber 
	\end{align}
	
	Then,  in these two cases, $g'(T_1)=0$ has a solution in $(0,T)$, denoted by $T_1^\star$. 
	Thus,  
	\begin{align}
		\epsilon_1 + \epsilon_2
		\geq g(T_1^\star)+\tilde{ \theta}(k),
	\end{align}   
	where $\tilde{ \theta}(k)= \log k+\log \log k+\mathcal{O}(1)$.
\end{itemize}
This completes the proof. 
 
\section{Proof of Theorem \ref{converse-GMS}} \label{proof of converse GMS}

This theorem is proved by substituting $\imath_{Y^n;Z^n}(Y^n;Z^n)$ and $\jmath_{S^k,X^k}(S^k,X^k)$ into  \eqref{L-converse bound-general}. We have 
\begin{align}
 	\imath_{Y^n;Z^n}(Y^n;Z^n)=\frac{n}{2}\log (1+P)-\frac{1}{2}\left(\frac{P}{1+P}      W^{n}_{\frac{n}{P}} -n \right),
 \end{align}
since $\sum_{j=1}^{n}(Z_j-Y_j)^2$ is a chi-square distributed random variable. Similarly,  
\begin{align}
    &\jmath_{S^k,X^k}(S^k,X^k)\nonumber\\
    =& \frac{k}{2} \log \frac{  \sigma_x^2}{ \mathsf{D}_L -(1-\alpha) \sigma_{N_s}^2}  
	+\frac{1}{2} \sum_{i=1}^{k}\left(\frac{X_i^2}{\sigma_x^2}
	+ \frac{ -(1-\alpha)\Delta_{X_i}^2 +(1-\alpha) (\Delta_{X_i}+N_{S_i})^2 }{\mathsf{D}_L -(1-\alpha) \sigma_{N_s}^2 } 
	-\frac{ \mathsf{D}_L}{\mathsf{D}_L -(1-\alpha) \sigma_{N_s}^2} \right).
\end{align}
Note that $\sum_{i=1}^{k}X_i^2/\sigma_x^2$ follows a chi-square distribution with $k$ degrees of freedom and zero noncentrality.

\section{Proof of Theorem \ref{achievability bound-GMS}}\label{Proof of achievability bound GMS}
{ 
	Theorem \ref{achievability bound-GMS} is proved by specializing Corollary \ref{achievability bound-simplified} to the transmission of GMS over AWGN channels. 
	By the proposed channel coding scheme, the information density $\imath_{Y_1,Y_2;Z}(Y_1,Y_2;Z)$ has the same distribution as 
	$\imath_{Y;Z}(Y;Z)$, as shown in Appendix \ref{proof of converse GMS}. 
	Furthermore, following \cite{Kostina-fixed-12}, $P_{\hat{X}}\!\left(B_{\mathsf{D}_x}(x)\right)$, i.e., the probability that $x$ is reconstructed within distortion level $\mathsf{D}_x$, can be tightly bounded.}
The evaluation of Type I and Type II errors follows standard techniques for finite blocklength channel coding and lossy source coding.
However, it remains a problem of evaluating the excess-distortion probability of reconstructing the semantic information (type III error), namely the third term on the RHS of \eqref{leq-achievability bound-general-simplified},
\begin{align}
    \sum_{\hat{x}\in B_{\mathsf{D}_x}\left(X\right)}\frac{P(\hat{x})}{P_{\hat{X}}\left(B_{\mathsf{D}_x}\left(X\right)\right)}P\{\mathsf{d}_s(S,h(\hat{x}))>\mathsf{D}_s\}. \label{type 3 gaussian}
\end{align}

To facilitate the proof, we first characterize the key statistical properties of the proposed coding scheme. Given the optimality of Gaussian signaling for GMS, the observable reconstruction $\hat{X}$ is generated using an i.i.d. Gaussian codebook. Specifically, the reconstruction error $\Delta_X=X-\hat{X}$ follows a Gaussian distribution that 
\begin{align}
    \Delta_X \sim \mathcal{N}(x_i, \min\{\mathsf{D}_x, \mathsf{D}_s-\sigma^2_{N_s}\}).
\end{align}
According to \cite[Thm. 2]{Liu-semantic-2022}, and since$\hat{X}$ and $\Delta_X$ are mutually independent, it follows that the variance of $\hat{X}$ as $\sigma^2_{\hat{X}}=\sigma^2_{X}-\min\{\mathsf{D}_x, \mathsf{D}_s-\sigma^2_{N_s}\}$. 
Meanwhile, we use a linear MMSE estimator to reconstruct the semantic information. That is to say, we let 
 \begin{align}
     \hat{s}=h(\hat{x})=c(\hat{x}-\mu_x)+\mu_s,
 \end{align} where $c$ is a constant scalar, and $\mu_x$ and $\mu_s$ are the expectation of $X$ and $S$ respectively. In the example, we assume $S=X+N_s$ with zero mean Gaussian random variables $S$, $X$ and $N_s$. Hence, now we use a simple decoding function for the semantic information as $\hat{s}=h(\hat{x})=\hat{x}$. Consequently, we have 
 \begin{align}
     P\{\mathsf{d}_s(S,h(\hat{x}))> \mathsf{D}_s\}=1-P\left(B_{\mathsf{D}_s}\left(\hat{x}\right)\right). 
 \end{align}
Thus, the excess-distortion probability of reconstructing semantic information for particular $\hat{x}$  is equal to the probability that $S$ falls outside the ball $B_{\mathsf{D}_s}(\hat{x})$. Hence, { for the block-wise transmission, the key problem in this proof is to evaluate  $P_{S^k}\left(B_{\mathsf{D}_s}\left(\hat{x}^k\right)\right)$. }
 

\subsubsection{Evaluating the probability of type I and II error}  
  Similar to \cite[Thm. 18]{Kostina-lossyJSCC-13}, the infimum over $\hat{X}^k$ in RHS of \eqref{leq-achievability bound-general-simplified} is weakened by choosing to be the uniform distribution on the surface of the $k$-dimensional sphere with center at 0 and radius $r'_{0}=\sqrt{k}\sigma_x \sqrt{1-\frac{\mathsf{D}_x}{\sigma^2_x}}$. 
Then, by \cite[Thm. 37]{Kostina-fixed-12} (also see in \cite{Wyner-analog-68,Sakrison-geometric-68,Kostina-lossyJSCC-13})      
\begin{align}
	P_{\hat{X}^k}\left(B_{\mathsf{D}_x}\left(X^k \right)\right)
	\geq & \rho_1 \left( W_0^{k}, \mathsf{D}_x, \sigma_x \right), 
\end{align}
and $\rho_1: \mathbb{R}^+\mapsto [0,1]$ is defined in \eqref{Equ_GSM_Inner_rho}
and $L(\cdot)$ function is defined in \eqref{Equ_GMS_Inner_Lr}.
Meanwhile, by \cite[Thm. 37]{Kostina-fixed-12}, $\imath(Y^n; Z^n)$ has the same distribution as
 \begin{align}
 		\frac{n}{2}\log (1+\rho)-\frac{\log e}{2}\left(\frac{\rho}{1+\rho} W_{\frac{n}{\rho}}^n-n \right) -\log \frac{f_{W_{n\rho}^n}(t)}{f_{W_{0}^n}\left(\frac{t}{1+\rho}\right)}  \label{Equ_GSM_Inner_i}.
 \end{align}
Hence, by taking \eqref{Equ_GSM_Inner_rho} and \eqref{Equ_GSM_Inner_i} into \eqref{leq-achievability bound-general-simplified}, the second term on the RHS of \eqref{leq-achievability bound-general-simplified} can be upper bounded by 
 \begin{align}
 	\mathbb{E} \left[ 
 	\exp\left\{ - |U-\log \gamma |^+ \right\}\right],\label{Equ_GSM_Inner_T2}
 \end{align}
 where
 \begin{align}
 	U=&\frac{n}{2}\log (1+\rho)-\frac{\log e}{2}\left(\frac{\rho}{1+\rho} W_{\frac{n}{\rho}}^n-n \right) -\log \frac{F}{\rho \left( W_0^{k}, \mathsf{D}_x, \sigma_x \right)  }, \\
 	F=& \max_{n\in \mathbb{N}, t\in \mathbb{R}^+ } \frac{f_{W_{n\rho}^n}(t)}{f_{W_{0}^n}\left(\frac{t}{1+\rho}\right)} <\infty.
 \end{align}

\subsubsection{Evaluating the probability of type III error} 

Next, we now return to \eqref{type 3 gaussian} and and express each term under the Gaussian model. We introduce a random variable $\Delta(x_i)$ as the reconstruction error for given the observable source $x_i$, namely $\Delta(x_i)=x_i-\hat{X}_i$. As an important remark, this $\Delta(x_i)$ is different from the previously defined reconstruction error $\Delta_{X_i}$, where $\Delta_{X_i}=X_i-\hat{X}_i$. Specifically, $\Delta_{X_i}$ denotes the reconstruction error over all possible source output. It is a zero mean Gaussian random variable with distribution $\Delta_{X_i}\sim \mathcal{N}(0, \sigma^2_{\Delta_x})$, where $\sigma^2_{\Delta_x}=\min\{\mathsf{D}_x, \mathsf{D}_s-\sigma^2_{N_s}\}$. $\Delta(x_i)$ is a Gaussian random variable with distribution $\Delta(x_i)\sim \mathcal{N}(x_i, \sigma^2_{\hat{X}})$, which varies with the source output $x_i$. 

Based on $\Delta(x_i)$, we further introduce a random variable $W_2$ which is defined as
\begin{align}
    W_2=\sum_{i=1}^{k} \left(\frac{\Delta(x_i)-x_i}{\sigma_{\hat{X}}} + \frac{x_i}{\sigma_{\hat{X}}} \right)^2.
\end{align}
Noting that $(\Delta(x_i)-x_i)/\sigma_{\hat{X}}$ is a normal distribution, and 
\begin{align}
\sum_{i=1}^{k}  \left(\frac{x_i}{\sigma_{\hat{X}}}\right)^2=\frac{\sum_{i=1}^{k} x_i^2}{\sigma^2_{\hat{X}}}=\frac{\sigma^2_{X}w_0^k}{\sigma^2_{\hat{X}}} .   
\end{align}
Hence, $W_2$ is a chi-square random variable with $k$ degrees of freedom and noncentrality parameter as $\sigma^2_{X}w_0^k/\sigma^2_{\hat{X}}$. 
 Meanwhile, since 
 \begin{align}
    W_2= \sum_{i=1}^{k} \left(\frac{\Delta(x_i)-x_i}{\sigma_{\hat{X}}} + \frac{x_i}{\sigma_{\hat{X}}} \right)^2
    =\sum_{i=1}^{k} \left(\frac{\Delta(x_i)}{\sigma_{\hat{X}}} \right)^2=\frac{\sum_{i=1}^{k}(\Delta(x_i))^2}{\sigma^2_{\hat{X}}}, 
 \end{align}
hence the event
\begin{align}
    W_2\leq \frac{\mathsf{D}_x}{\sigma^2_{\hat{X}}} \label{equivalent ball}
\end{align}
 is an equivalent to the event $\sum_{i=1}^{k} (x_i-\hat{X}_i)^2 \leq \mathsf{D}_x$ for some given $x^k=(x_1,\dots,x_k)$. 
Therefore, to sum over all $\hat{x}^k=(\hat{x}_1,\dots,\hat{x}_k)\in B_{\mathsf{D}_x}\left(x^k\right)$ in  \eqref{type 3 gaussian} for some given $x^k$ can be replaced by the integral over all $w_2$ satisfying inequality 
\eqref{equivalent ball}. 

Then, combining the result of $P_{S^k}\left(B_{\mathsf{D}_s}\left(\hat{x}^k\right)\right)$ given in Lemma \ref{Proof_Lem_Bd}, we obtain 
{\small  \begin{align}
	\sum_{\hat{x}^k\in B_{\mathsf{D}_x}\left(X^k\right)} & \frac{P(\hat{x}^k)}{P_{\hat{X}^k}\left(B_{\mathsf{D}_x}\left(X^k\right)\right)}P\{\mathsf{d}_s(S^k,h(\hat{x}^k))>\mathsf{D}_s\} \\
	\leq &\mathbb{E}_{\sim W_{\frac{n}{P}}^n, W_{0}^n} \left[ \exp\left\{ - |U-\log \gamma |^+ \right\} \right]     \nonumber \\
	    & \left.+\mathbb{E}_{\sim W_0}  
		  \left[ \frac{1}{\rho_1 \left( W_0^{k}, \mathsf{D}_x, \sigma_x \right)} \int_{w_2 \leq {\mathsf{D}_x}/{\sigma^2_{\hat{X}}}} f_{W_{\sigma^2_{X}W_0^k/\sigma^2_{\hat{X}}}^{k}}(w_2)  
  \mathbb{E}_{\sim W_1} 	\left[1-  \rho_2(\sigma^2_{N_s}W_{1},\sigma^2_{\hat{X}}w_2,\mathsf{D}_s)  \right] \text{d}w_2 \right]  \right\},   \label{Equ_GSM_Inner_T3}
\end{align}}
and hence \eqref{Equ_GSM_Inner} is obtained by taking \eqref{Equ_GSM_Inner_T2} and \eqref{Equ_GSM_Inner_T3} into \eqref{leq-achievability bound-general-simplified}.

\section{Proof of Theorem \ref{Thm_GApprx_GMS}} \label{source dispersion linear proof}
	{ 	For the transmission  of GMS over Gaussian channels, $\hat{S}$ is a deterministic function of $\hat{X}$, i.e. $\hat{S}=h(\hat{X})$. Based on the achievability coding scheme, only the first-layer decoding is applied and the second layer decoding is omitted. 
	Thus, the terms in Theorem \ref{Thm_Gau_Ach1} satisfy:
	\begin{itemize}
		\item[(1)]  $V_1(\rho)$ and $V_2(\rho)$ are the channel dispersions given by
		\begin{align}
			V_1(\rho)=& {\rm{Var}}\big[\imath_{Y_1;Z}(Y_1;Z) \big]={\rm{Var}}\big[\imath_{Y;Z}(Y;Z) \big] =  V(\rho),\\
			V_2(\rho)=& 0,
		\end{align}
		and $ C(\rho)$ is the channel capacity;
		\item[(2)] $\tilde{V}(\rho)$ is the source dispersion given by
		\begin{align}
			\tilde{V}_1(\rho)=& {\rm{Var}}\big[\jmath_X(X; \hat{X})\big]=\tilde{V}(\rho), \\
			\tilde{V}_2(\rho)=& {\rm{Var}}\big[\jmath_X(X; \hat{S})|\hat{X}=\hat{x}\big]=0,
		\end{align}
		where the first equation follows from the calculation of $\tilde{V}(\rho)$ in Appendix F. 
		\item[(3)] 	$\upsilon_1 \triangleq \sqrt{n  V(\rho) + k \tilde{V}(\rho)}$ and $\upsilon_2=0$;
	\end{itemize} 
	Thus, the error probability satisfies (the second case in Theorem  \ref{Thm_Gau_Ach1})
	\begin{align}
		\epsilon 
		\leq Q\left(\frac{n C(\rho)-k R_{S,X}(\mathsf{D}_s,\mathsf{D}_x)+\theta(k)}{\sqrt{n  V(\rho) + k \tilde{V}(\rho)}}  \right).
	\end{align}  
	By Corollary \ref{Converse_GA_err}, the converse bound is given by
		\begin{align}
			\epsilon  \geq & Q\left( \frac{n C(\rho)-kR_{S,X}(\mathsf{D}_s,\mathsf{D}_x)-\theta(n)}{\sqrt{n V(\rho)+k\tilde{V}(\rho)} } 
			\right).\label{Equ_Corr1A}
		\end{align}
   
   In the following, we compute $ C(\rho)$, $ V(\rho)$, $\tilde{V}(\rho)$, and $R_{S,X}(\mathsf{D}_s,\mathsf{D}_x)$ for the transmission of a GMS over an AWGN channel. }
	According to the results in \cite{Kostina-lossyJSCC-13}, for jointly Gaussian sources, the minimum coding rate is obtained by jointly Gaussian distributed random variables $\hat{X}$ and $\hat{S}$, and there exists a Markov chain $S_i-X_i-\hat{X}_i-\hat{S}_i$. Therefore, the information density can be written as
	\begin{align}
		\imath_{X; \hat{X}, \hat{S}} (x_i; \hat{x}_i, \hat{s}_i)
		=&\log \left(\frac{P_{X|\hat{X}}(x_i| \hat{x}_i )}{P_X(x_i)}\right)  \nonumber\\
		=&\frac{1}{2} \log \frac{\sigma_x^2}{\mathbb{E} \left[ |x_i-\hat{x}_i |^2\right] } + \frac{1}{2} \left(\frac{|x_i|^2}{\sigma_x^2} -\frac{ |x_i-\hat{x}_i|^2}{\mathbb{E} \left[ |x_i-\hat{x}_i |^2\right] }  \right) \nonumber\\
		=& \frac{1}{2}  \log \frac{\sigma_x^2}{\mathbb{E} \left[ \mathsf{d}_x(x_i, \hat{x}_i) \right] } + \frac{1}{2}\left(\frac{|x_i|^2}{\sigma_x^2} -\frac{ \mathsf{d}_x(x_i, \hat{x}_i)}{\mathbb{E} \left[\mathsf{d}_x(x_i, \hat{x}_i) \right] }  \right)  \label{Equ_i},
	\end{align}

	Moreover, for the $\mathsf{d}()$-functional distortion constraint, we have   
	\begin{align}
		&\mathsf{d}(\mathbb{E}[\mathsf{d}_x({x}, {\hat{x}})], \mathbb{E}[\mathsf{d}_s({s}, {\hat{s}})])\nonumber\\
		=&\alpha \mathbb{E}[\mathsf{d}_x({x}, {\hat{x}})]
		+(1-\alpha) \mathbb{E}[\mathsf{d}_s({s}, {\hat{s}})]\nonumber\\
		\overset{(a)}{=}&\alpha \mathbb{E}[\mathsf{d}_x({x}, {\hat{x}})]
		+(1-\alpha)\left( \mathbb{E}[\mathsf{d}_x({x}, {\hat{x}})]+\sigma_{N_s}^2\right)\nonumber\\
		\leq & \alpha \mathsf{D}_x  + (1-\alpha)  \mathsf{D}_s \triangleq  \mathsf{D}_L, 
	\end{align}
	where step $(a)$ follows the fact proved in \cite{Kostina-lossyJSCC-13} that $\mathbb{E}[\mathsf{d}_s({s}, {\hat{s}})]=\mathbb{E}[\mathsf{d}_x({x}, {\hat{x}})]+\sigma_{N_s}^2$, and for convenience we denote the linear function results $\alpha \mathsf{D}_x  + (1-\alpha)  \mathsf{D}_s$ as $\mathsf{D}_L$.

	Hence, equivalently we have  
	\begin{align}
		\mathbb{E}[\mathsf{d}_x({x}, {\hat{x}})]
		\leq  \mathsf{D}_L -(1-\alpha) \sigma_{N_s}^2,  \label{Equ_Cons_Edxf}
	\end{align}
	and the equality holds when the minimum coding rate is achieved. 
	
	By \eqref{Equ_Cons_Edxf} we also can obtain that  
	\begin{align}
		R_{S,X}(\mathsf{D}_s,\mathsf{D}_x) 
		=& \inf_{P_{\hat{S}\hat{X}|X} }  \frac{1}{2}\log \frac{\sigma_x^2}{\mathbb{E} \left[ \mathsf{d}_x(x_i, \hat{x}_i) \right] } \\
		=& \frac{1}{2} \log \frac{\sigma_x^2}{ \mathsf{D}_L -(1-\alpha) \sigma_{N_s}^2}.
	\end{align}
	Hence, we have 
	\begin{align}
		\lambda^{\star} =& -\pdv{\mathbb{R}(\mathsf{d}_s,\mathsf{d}_x)}{\mathsf{D}_L} 
		=\frac{1}{2 \left( \mathsf{D}_L -(1-\alpha) \sigma_{N_s}^2 \right)}, \\
		\imath_{X; \hat{X}, \hat{S}} (x_i; \hat{x}_i, \hat{s}_i)
		=& \frac{1}{2} \log \frac{ \sigma_x^2}{ \mathsf{D}_L -(1-\alpha) \sigma_{N_s}^2 } +\frac{1}{2} \left(\frac{|x_i|^2}{\sigma_x^2} 	-\frac{\mathsf{d}_x({x}, {\hat{x}})}{ \mathsf{D}_L -(1-\alpha) \sigma_{N_s}^2 }  \right). 
	\end{align}
	Correspondingly, the $(\mathsf{d}_s,\mathsf{d}_x)$-tilted information $\jmath_{S^k,X^k}(s^k,x^k)$ is given by
	\begin{align}
		&\jmath_{S^k,X^k}(s^k,x^k) \nonumber\\
		=& \sum_{i=1}^{k} \jmath_{S,X}(s_i, x_i) \nonumber\\
		=&\sum_{i=1}^{k} \Big(\imath_{X; \hat{X}, \hat{S}} (x_i; \hat{x}_i, \hat{s}_i)
		+\lambda^{\star} \big(\mathsf{d}(\mathsf{d}_s(s_i,\hat{s}_i),\mathsf{d}_x(x_i,\hat{x}_i))-\mathsf{D}_L\big) \Big)\nonumber \\
		=&\sum_{i=1}^{k}\left[ \frac{1}{2} \log \frac{  \sigma_x^2}{ \mathsf{D}_L -(1-\alpha) \sigma_{N_s}^2} + \frac{1}{2}\left(\frac{|x_i|^2}{\sigma_x^2} -\frac{ \mathsf{d}_x({x}_i, {\hat{x}_i})}{ \mathsf{D}_L -(1-\alpha) \sigma_{N_s}^2}  \right) \right.\nonumber \\
		&\left. +\frac{1}{2( \mathsf{D}_L -(1-\alpha) \sigma_{N_s}^2)}  \big((1-\alpha) \mathsf{d}_s({s}_i, {\hat{s}_i})+\alpha \mathsf{d}_x({x}_i, {\hat{x}_i})-\mathsf{D}_L \big) \right] \nonumber\\ 
		=&\sum_{i=1}^{k}\left[ \frac{1}{2} \log \frac{  \sigma_x^2}{ \mathsf{D}_L -(1-\alpha) \sigma_{N_s}^2} 
		+\frac{1}{2} \left(\frac{|x_i|^2}{\sigma_x^2}
		+ \frac{ -(1-\alpha) \mathsf{d}_x({x}_i, {\hat{x}_i}) +(1-\alpha) \mathsf{d}_s({s}_i, {\hat{s}_i}) }{\mathsf{D}_L -(1-\alpha) \sigma_{N_s}^2 } 
		-\frac{ \mathsf{D}_L}{\mathsf{D}_L -(1-\alpha) \sigma_{N_s}^2} \right)  \right] \label{tilted information-gaussian-1}.
	\end{align}
	For convenience, we denote $\Delta_{X_i}=X_i-\hat{X}_i$ with realization $\delta_{X_i}$ as the difference between $\hat{X}_i$ and $X_i$. Hence, according to the setup in Section \ref{Sec_GMS_AWGN}, we can write $S_i-\hat{S}_i=X_i+N_{S_i}-\hat{X}_i=\Delta_{X_i}+N_{S_i}$ since $N_{S_i}$ is a zero mean Gaussian distributed random variable. Thus, the error of estimating $S_i$ equals the error of estimating $X_i$ plus the error of estimating $S_i$ with given $X_i$. Therefore,  \eqref{tilted information-gaussian-1} can be further simplified as follows that
	\begin{align}
		&\jmath_{S,X}(s_i, x_i) \nonumber\\
		=& \frac{1}{2} \log \frac{  \sigma_x^2}{ \mathsf{D}_L -(1-\alpha) \sigma_{N_s}^2} \nonumber\\
		&+\frac{1}{2} \left(\frac{X_i^2}{\sigma_x^2}
		+ \frac{ -(1-\alpha)\Delta_{X_i}^2 +(1-\alpha) (\Delta_{X_i}+N_{S_i})^2) }{\mathsf{D}_L -(1-\alpha) \sigma_{N_s}^2 } 
		-\frac{ \mathsf{D}_L}{\mathsf{D}_L -(1-\alpha) \sigma_{N_s}^2} \right).  \label{tilted information Gaussian}
	\end{align}
	
	Notice that only $X_i$,                                      
	$\Delta_{X_i}$ and $N_{S_i}$ vary with the value of  $(S_i,X_i,\hat{X}_i,\hat{S}_i)$, and other parameters can be regarded as constant numbers once the source distribution and channel distribution are given. Hence, the expected value of $\jmath_{S,X}(S_i, X_i)$ is given as
	\begin{align}
		\mu  =&  \mathbb{E} \big[\jmath_{S,X}(S_i,X_i) \big] \nonumber\\
		=&  \mathbb{E} \big[\jmath_{S,X}(S_1,X_1) \big] \nonumber\\
		= & \frac{1}{2} \log \frac{  \sigma_x^2}{ \mathsf{D}_L -(1-\alpha) \sigma_{N_s}^2} \nonumber\\
		&+\frac{1}{2}\mathbb{E} \left[\frac{|X_1|^2}{\sigma_x^2}
		+ \frac{ -(1-\alpha)\Delta_{X_1}^2 +(1-\alpha) (\Delta_{X_1}+N_{s_1})^2) }{\mathsf{D}_L -(1-\alpha) \sigma_{N_s}^2 } 
		-\frac{ \mathsf{D}_L}{\mathsf{D}_L -(1-\alpha) \sigma_{N_s}^2} \right]  \nonumber\\
		= & \frac{1}{2} \log \frac{  \sigma_x^2}{ \mathsf{D}_L -(1-\alpha) \sigma_{N_s}^2} \nonumber\\
		&+\frac{1}{2}\mathbb{E} \left[\frac{|X_1|^2}{\sigma_x^2}
		+ \frac{ 2(1-\alpha)\Delta_{X_1}N_{S_1}+ (1-\alpha)N^2_{S_1}}{\mathsf{D}_L -(1-\alpha) \sigma_{N_s}^2 } 
		-\frac{ \mathsf{D}_L}{\mathsf{D}_L -(1-\alpha) \sigma_{N_s}^2} \right]  \nonumber\\
		= & \frac{1}{2} \log \frac{  \sigma_x^2}{ \mathsf{D}_L -(1-\alpha) \sigma_{N_s}^2} +\frac{1}{2}\left(1 
		+ \frac{ (1-\alpha) \sigma_{N_s}^2}{\mathsf{D}_L -(1-\alpha) \sigma_{N_s}^2 } 
		-\frac{ \mathsf{D}_L}{\mathsf{D}_L -(1-\alpha) \sigma_{N_s}^2} \right).
		\label{Equ_NA_Cov_1}
	\end{align}
	Meanwhile, the second order moment of $\jmath_{S,X}(S_i,X_i)$ is given as
	\begin{align}
		&\text{Var}\left[\jmath_{S,X}(S_i,X_i) \right]\nonumber\\
		=&\frac{1}{4}\text{Var} \left[\frac{|X_1|^2}{\sigma_x^2}
		+ \frac{ 2(1-\alpha)\Delta_{X_1}N_{S_1}+ (1-\alpha)N^2_{S_1}}{\mathsf{D}_L -(1-\alpha) \sigma_{N_s}^2 } \right] \nonumber\\
		=&\frac{1}{4}\text{Var} \left[\frac{|X_1|^2}{\sigma_x^2}\right]
		+ \frac{1}{4}\text{Var} \left[\frac{ 2(1-\alpha)\Delta_{X_1}N_{S_1}+ (1-\alpha)N^2_{S_1}}{\mathsf{D}_L -(1-\alpha) \sigma_{N_s}^2 } \right] \nonumber\\
		=& \frac{1}{2}+\frac{4(1-\alpha)^2(\mathsf{D}_L -(1-\alpha) \sigma_{N_s}^2)\sigma_{N_s}^2+2(1-\alpha)^2\sigma_{N_s}^4}{4(\mathsf{D}_L -(1-\alpha) \sigma_{N_s}^2)^2 }.
	\end{align}

\section{Proof of Lemma \ref{Proof_Lem_Bd}}\label{Proof_Lem_Bd}
Fig. \ref{Fig_Bd_Inner} shows the geometry of the excess-distortion probability calculation for given $x$. 
Specifically, we have $S_i=x_i+N_s$ and $\hat{x}_i=x_i-\Delta(x_i)$, where both $S_i$ and $\hat{x}_i$ depend on the same $x_i$.
Hence, we consider an equivalent probability by removing the dependence on $x_i$, namely, the probability $P_{N_S}\left(B_{\mathsf{D}_s}\left(\delta(x)\right)\right)$. 

We first consider the case shown in Fig. \ref{Fig_Bd_Inner}-(a). Since $N_s$ and $\Delta(x)$ are independent random variables, we can apply  \cite[Thm. 37]{Kostina-fixed-12} (see also \cite{Wyner-analog-68,Sakrison-geometric-68,Kostina-lossyJSCC-13}). Specifically, when $\sqrt{k\mathsf{D}_s}-r_0\leq |n_s|\leq \sqrt{k\mathsf{D}_s}+r_0$, we have  
\begin{align}
	P_{S^k}\left(B_{\mathsf{D}_s}\left(\hat{x}^k\right)\right)
	=& P_{N_S}\left(B_{\mathsf{D}_s}\left(\delta(x)\right)\right) \nonumber\\
	\geq & \frac{\Gamma\left(\frac{k}{2}+1\right)}{\sqrt{\pi}k\Gamma\left(\frac{k-1}{2}+1\right)} (1-\cos^2 \theta)^{\frac{k-1}{2}}, 
\end{align}
where
\begin{align}
	\cos \theta =& \frac{r^2+r_0^2-k\mathsf{D}_s}{2rr_0},
\end{align}
$r=\sqrt{\sum_{i=1}^{k} n_{s_i}^2}$ and $r_{0}=|x^k-\hat{x}^k|=\sigma_{\hat{X}}\sqrt{w_2}$.	
Define $\sum_{i=1}^{k} n_{s_i}^2 \triangleq w_1 \sigma^2_{N_s}$, where $w_1$ is the realization of a chi-square random variable with $k$ degrees of freedom and zero noncentrality. 
Then $\cos \theta$ can be further written as
\begin{align}
	\cos \theta 
	= & \frac{w_1 \sigma^2_{N_s}+w_2 \sigma^2_{\hat{X}}-\mathsf{D}_s}{2 \sqrt{w_1 \sigma^2_{N_s}} \sqrt{w_2 \sigma^2_{\hat{X}}}}.
\end{align}
Therefore, $P\{\mathsf{d}_s(S^k,h(\hat{x}^k))>\mathsf{D}_s\}$ can be bounded as follows
\begin{align}
	P\{\mathsf{d}_s(S^k,h(\hat{x}^k))>\mathsf{D}_s\}
	\leq  &1- P\{\mathsf{d}_s(S^k,h(\hat{x}^k))\leq \mathsf{D}_s\} \\
	\leq &
	1-  \frac{\Gamma\left(\frac{k}{2}+1\right)}{\sqrt{\pi}k\Gamma\left(\frac{k-1}{2}+1\right)} 
	\left(1-  \left(  \frac{ \sigma^2_{N_s}w_1 +\sigma^2_{\hat{X}}w_2- \mathsf{D}_s}{2 \sqrt{ \sigma^2_{N_s}w_1  } \sqrt{  \sigma^2_{\Delta_X}w_2}} \right)^2  \right)^{\frac{k-1}{2}}  .
	\label{Equ_GSM_Inner_rhoB}
\end{align}

\end{appendices}

\bibliographystyle{IEEEtran}
\bibliography{sample}

\end{document}

%% file: ChannelModel.tex
	\begin{tikzpicture}[
		outer/.style={draw=gray,dashed,fill=black!1,thick,inner sep=3pt},
		box/.style={draw,thick,minimum width=10mm,minimum height=8mm,text width =20mm,align=flush center},
		box1/.style={draw,fill=gray!30,thick,minimum width=10mm,minimum height=8mm,text width =20mm,align=flush center},
		arrow/.style={draw, single arrow,thick,minimum height=8mm, minimum width=8mm, single arrow head extend=2mm,anchor=west}]
		
		\node (s1) at (0,0) [] {\Large$S$};
		\node (s2) at (2.2,0) [box] {\Large$P_{X|S}$};
		\node (s3) at (5.5,0) [box1] {\Large$f_{{Y}|{X}}$};
		\node (s4) at (8.7,0) [box] {\Large$P_{Z|Y}$};
		\node (s) at (11.1,0) [] {\Large$Z$};
		\node (s5) at (13.1,1) [box1] {\Large$g_{\hat{S}|Z}{(Z)}$};
		\node (s6) at (13,-1) [box1] {\Large$g_{\hat{X}|Z}{(Z)}$};
		\draw [thick,shorten >=1mm,shorten <=1mm,->](s1.east)->(s2.west)node [midway,sloped,above,align=center] {};
		\draw [thick,shorten >=1mm,shorten <=1mm,->](s2.east)->(s3.west)node [midway,sloped,above,align=center] {\Large$X$};
		\draw [thick,shorten >=1mm,shorten <=1mm,->](s3.east)->(s4.west)node [midway,sloped,above,align=center] {\Large$Y$};
		\draw [thick,shorten >=1mm,shorten <=1mm,->](s4.east)->(s.west)node [midway,sloped,above,align=center] {};
		\draw [thick,shorten >=1mm,shorten <=1mm,->](s)|-(s5.west)node [midway,sloped,above,align=center] {};
		\draw [thick,shorten >=1mm,shorten <=1mm,->](s)|-(s6.west)node [midway,sloped,above,align=center] {};
		\node (s7) at (16,1) [] {\Large$\hat{S}$};
		\node (s8) at (15.4,-1) [] {\Large$\hat{X}$};
		\draw [thick,shorten >=1mm,shorten <=1mm,->](s5.east)->(s7.west)node [midway,sloped,above,align=center] {};
		\draw [thick,shorten >=1mm,shorten <=1mm,->](s6.east)->(s8.west)node [midway,sloped,above,align=center] {};
		
		\node (dx) at (7,-2) [] {\Large$\mathsf{d}(x,\hat{x})\leq \mathsf{D}_x$};
		\draw [dashed, -latex] (3.7,-0.2) |- (dx.west);
		\draw [dashed, -latex] (s8) |- (dx.east);
		\node (ds) at (7,-2.7) [] {\Large$\mathsf{d}(s,\hat{s})\leq \mathsf{D}_s$};
		\draw [dashed, -latex] (s1) |- (ds.west);
		\draw [dashed, -latex] (s7) |- (ds.east);
	\end{tikzpicture}
	

%% file: ChannelModel2.tex
	\begin{tikzpicture}[
		outer/.style={draw=gray,dashed,fill=black!1,thick,inner sep=3pt},
		box/.style={draw,thick,minimum width=10mm,minimum height=8mm,text width =20mm,align=flush center},
		box1/.style={draw,fill=gray!30,thick,minimum width=10mm,minimum height=8mm,text width =20mm,align=flush center},
		arrow/.style={draw, single arrow,thick,minimum height=8mm, minimum width=8mm, single arrow head extend=2mm,anchor=west}]
		
		\node (s1) at (0,0) [] {\Large$S^k$};
		\node (s2) at (2.2,0) [box] {\Large$P_{X^k|S^k}$};
		\node (s3) at (5.5,0) [box1] {\Large$f_{Y^n|X^k}$};
		\node (s4) at (8.7,0) [box] {\Large$P_{Z^n|Y^n}$};
		\node (s) at (11.1,0) [] {\Large$Z^n$};
		\node (s5) at (13.1,1) [box1] {\Large$g_{\hat{S}|Z}^k{(Z^n)}$};
		\node (s6) at (13,-1) [box1] {\Large$g_{\hat{X}|Z}^k{(Z^n)}$};
		\draw [thick,shorten >=1mm,shorten <=1mm,->](s1.east)->(s2.west)node [midway,sloped,above,align=center] {};
		\draw [thick,shorten >=1mm,shorten <=1mm,->](s2.east)->(s3.west)node [midway,sloped,above,align=center] {\Large$X^k$};
		\draw [thick,shorten >=1mm,shorten <=1mm,->](s3.east)->(s4.west)node [midway,sloped,above,align=center] {\Large$Y^n$};
		\draw [thick,shorten >=1mm,shorten <=1mm,->](s4.east)->(s.west)node [midway,sloped,above,align=center] {};
		\draw [thick,shorten >=1mm,shorten <=1mm,->](s)|-(s5.west)node [midway,sloped,above,align=center] {};
		\draw [thick,shorten >=1mm,shorten <=1mm,->](s)|-(s6.west)node [midway,sloped,above,align=center] {};
		\node (s7) at (16,1) [] {\Large$\hat{S}^k$};
		\node (s8) at (15.4,-1) [] {\Large$\hat{X}^k$};
		\draw [thick,shorten >=1mm,shorten <=1mm,->](s5.east)->(s7.west)node [midway,sloped,above,align=center] {};
		\draw [thick,shorten >=1mm,shorten <=1mm,->](s6.east)->(s8.west)node [midway,sloped,above,align=center] {};
		
		\node (dx) at (7,-2) [] {\Large$\mathsf{d}(x^k,\hat{x}^k)\leq \mathsf{D}_x$};
		\draw [dashed, -latex] (3.7,-0.2) |- (dx.west);
		\draw [dashed, -latex] (s8) |- (dx.east);
		\node (ds) at (7,-2.7) [] {\Large$\mathsf{d}(s^k,\hat{s}^k)\leq \mathsf{D}_s$};
		\draw [dashed, -latex] (s1) |- (ds.west);
		\draw [dashed, -latex] (s7) |- (ds.east);
	\end{tikzpicture}
	

%% file: sample.bib
@ARTICLE{Csisza1974,
	title = {On an extremum problem of information theory},
	author  = {Csisz\'{a}r},
	journal = {  Studia Scientiarum Mathematicarum Hungarica},
	volume  = {9},
	number  =  {1},
	pages   = { 57-71},
	year    = {1974}
}

@ARTICLE{Yang25,
	author={Yang, Huiyuan and Shi, Yuxuan and Shao, Shuo and Yuan, Xiaojun},
	journal={IEEE Transactions on Communications}, 
	title={Indirect Lossy Source Coding With Observed Source Reconstruction: Nonasymptotic Bounds and Second-Order Asymptotics}, 
	year={2025},
	volume={73},
	number={6},
	pages={4241-4256},
	doi={10.1109/TCOMM.2024.3507636}}

@ARTICLE{Liu-semantic-2022,
  author={Liu, Jiakun and Shao, Shuo and Zhang, Wenyi and Poor, H. Vincent},
  journal={IEEE Transactions on Communications}, 
  title={An Indirect Rate-Distortion Characterization for Semantic Sources: General Model and the Case of Gaussian Observation}, 
  year={2022},
  volume={70},
  number={9},
  pages={5946-5959},
  doi={10.1109/TCOMM.2022.3194978}}

@article{Liu-representinglearning-2021,
  author       = {Kang Liu and
                  Dong Liu and
                  Li Li and
                  Ning Yan and
                  Houqiang Li},
  title        = {Semantics-to-Signal Scalable Image Compression with Learned Revertible
                  Representations},
  journal      = {Int. J. Comput. Vis.},
  volume       = {129},
  number       = {9},
  pages        = {2605--2621},
  year         = {2021},
  url          = {https://doi.org/10.1007/s11263-021-01491-7},
  doi          = {10.1007/S11263-021-01491-7},
  bibsource    = {dblp computer science bibliography, https://dblp.org}
}

@ARTICLE{Zhang-imagesemantic-2023,
  author={Zhang, Hongwei and Shao, Shuo and Tao, Meixia and Bi, Xiaoyan and Letaief, Khaled B.},
  journal={IEEE Journal on Selected Areas in Communications}, 
  title={Deep Learning-Enabled Semantic Communication Systems With Task-Unaware Transmitter and Dynamic Data}, 
  year={2023},
  volume={41},
  number={1},
  pages={170-185},
  doi={10.1109/JSAC.2022.3221991}}

@ARTICLE{Stavrou-ba-2023,
  author={Stavrou, Photios A. and Kountouris, Marios},
  journal={IEEE Transactions on Communications}, 
  title={The Role of Fidelity in Goal-Oriented Semantic Communication: A Rate Distortion Approach}, 
  year={2023},
  volume={71},
  number={7},
  pages={3918-3931},
  doi={10.1109/TCOMM.2023.3274122}}

@ARTICLE{Shi-errorexponent-2023,
  author={Shi, Yuxuan and Shao, Shuo and Wu, Yongpeng and Zhang, Wenjun and Xia, Xiang-Gen and Xiao, Chengshan},
  journal={IEEE Transactions on Wireless Communications}, 
  title={Excess Distortion Exponent Analysis for Semantic-Aware MIMO Communication Systems}, 
  year={2023},
  volume={22},
  number={9},
  pages={5927-5940},
  doi={10.1109/TWC.2023.3238463}}

@INPROCEEDINGS{Wang-semanticseperation-2022,
  author={Wang, Yizhu and Guo, Tao and Bai, Bo and Han, Wei},
  booktitle={2022 IEEE Information Theory Workshop (ITW)}, 
  title={The Estimation-Compression Separation in Semantic Communication Systems}, 
  year={2022},
  volume={},
  number={},
  pages={315-320},
  doi={10.1109/ITW54588.2022.9965794}}

@ARTICLE{Kostina-lossyJSCC-13,
	author={Kostina, Victoria and Verdú, Sergio},
	journal={IEEE Transactions on Information Theory}, 
	title={Lossy Joint Source-Channel Coding in the Finite Blocklength Regime}, 
	year={2013},
	volume={59},
	number={5},
	pages={2545-2575},
	doi={10.1109/TIT.2013.2238657}}

@ARTICLE{Kostina-fixed-12,
		author={Kostina, Victoria and Verdu, Sergio},
		journal={IEEE Transactions on Information Theory}, 
		title={Fixed-Length Lossy Compression in the Finite Blocklength Regime}, 
		year={2012},
		volume={58},
		number={6},
		pages={3309-3338},
		doi={10.1109/TIT.2012.2186786}}

@ARTICLE{Wyner-analog-68,
			author={Wyner, A. D.},
			journal={The Bell System Technical Journal}, 
			title={Communication of analog data from a Gaussian source over a noisy channel}, 
			year={1968},
			volume={47},
			number={5},
			pages={801-812},
			doi={10.1002/j.1538-7305.1968.tb00062.x}}

@ARTICLE{Sakrison-geometric-68,
			author={Sakrison, D.},
			journal={IEEE Transactions on Information Theory}, 
			title={A geometric treatment of the source encoding of a Gaussian random variable}, 
			year={1968},
			volume={14},
			number={3},
			pages={481-486},
			doi={10.1109/TIT.1968.1054145}}

@ARTICLE{Kostina-isc-2016,
  author={Kostina, Victoria and Verdú, Sergio},
  journal={IEEE Transactions on Information Theory}, 
  title={Nonasymptotic Noisy Lossy Source Coding}, 
  year={2016},
  volume={62},
  number={11},
  pages={6111-6123},
  doi={10.1109/TIT.2016.2562008}}

@INPROCEEDINGS{Ulger-23,
  author={{\"{U}}lger, Oğuzhan Kubilay and Erkip, Elza},
  booktitle={2023 59th Annual Allerton Conference on Communication, Control, and Computing (Allerton)}, 
  title={Single-Shot Lossy Compression for Joint Inference and Reconstruction}, 
  year={2023},
  volume={},
  number={},
  pages={1-8},
  doi={10.1109/Allerton58177.2023.10313437}}

@ARTICLE{Yury-10,
	author={Polyanskiy, Yury and Poor, H. Vincent and Verdu, Sergio},
	journal={IEEE Transactions on Information Theory}, 
	title={Channel Coding Rate in the Finite Blocklength Regime}, 
	year={2010},
	volume={56},
	number={5},
	pages={2307-2359},
	doi={10.1109/TIT.2010.2043769}}

@ARTICLE{Gunduz-23,
  author={Gündüz, Deniz and Qin, Zhijin and Aguerri, Inaki Estella and Dhillon, Harpreet S. and Yang, Zhaohui and Yener, Aylin and Wong, Kai Kit and Chae, Chan-Byoung},
  journal={IEEE Journal on Selected Areas in Communications}, 
  title={Beyond Transmitting Bits: Context, Semantics, and Task-Oriented Communications}, 
  year={2023},
  volume={41},
  number={1},
  pages={5-41},
  doi={10.1109/JSAC.2022.3223408}}

@ARTICLE{Zhou-19,
  author={Zhou, Lin and Motani, Mehul},
  journal={IEEE Transactions on Information Theory}, 
  title={Non-Asymptotic Converse Bounds and Refined Asymptotics for Two Source Coding Problems}, 
  year={2019},
  volume={65},
  number={10},
  pages={6414-6440},
  doi={10.1109/TIT.2019.2920893}}

@ARTICLE{Polyanskiy-10,
  author={Polyanskiy, Yury and Poor, H. Vincent and Verdu, Sergio},
  journal={IEEE Transactions on Information Theory}, 
  title={Channel Coding Rate in the Finite Blocklength Regime}, 
  year={2010},
  volume={56},
  number={5},
  pages={2307-2359},
  doi={10.1109/TIT.2010.2043769}}

@ARTICLE{Tan15,
  	author={Tan, Vincent Yan Fu and Tomamichel, Marco},
  	journal={IEEE Transactions on Information Theory}, 
  	title={The Third-Order Term in the Normal Approximation for the AWGN Channel}, 
  	year={2015},
  	volume={61},
  	number={5},
  	pages={2430-2438},
  	doi={10.1109/TIT.2015.2411256}}

@article{Collins2019coherent,
		title={Coherent Multiple-Antenna Block-Fading Channels at Finite Blocklength},
		author={Collins, Austin and  Polyanskiy, Yury},
		journal={IEEE Trans. Inf. Theory},
		volume={65},
		number={1},
		pages={380--405},
		year={2019},
		publisher={IEEE}
	}

@article{yang2014quasi,
		title={Quasi-static multiple-antenna fading channels at finite blocklength},
		author={Yang, Wei and Durisi, Giuseppe and Koch, Tobias and Polyanskiy, Yury},
		journal={IEEE Trans. Inf. Theory},
		volume={60},
		number={7},
		pages={4232--4265},
		year={2014},
		publisher={IEEE}
	}

@article{lancho2019single,
		title={On single-antenna {R}ayleigh block-fading channels at finite blocklength},
		author={Lancho, Alejandro and Koch, Tobias and Durisi, Giuseppe},
		journal={IEEE Trans. Inf. Theory},
		year={2019},
		publisher={IEEE}
	}

@ARTICLE{Tan21,
		author={Sakai, Yuta and Yavas, Recep Can and Tan, Vincent Y. F.},
		journal={IEEE Transactions on Information Theory}, 
		title={Third-Order Asymptotics of Variable-Length Compression Allowing Errors}, 
		year={2021},
		volume={67},
		number={12},
		pages={7708-7722},
		doi={10.1109/TIT.2021.3117591}}

@ARTICLE{Albert14,
			author={Scarlett, Jonathan and Martinez, Alfonso and Fabregas, Albert Guillen i},
			journal={IEEE Transactions on Information Theory}, 
			title={Mismatched Decoding: Error Exponents, Second-Order Rates and Saddlepoint Approximations}, 
			year={2014},
			volume={60},
			number={5},
			pages={2647-2666},
			doi={10.1109/TIT.2014.2310453}}
